  \def\ps@pprintTitle{%
 \let\@oddhead\@empty
 \let\@evenhead\@empty
 \def\@oddfoot{\centerline{\thepage}}%
 \let\@evenfoot\@oddfoot}
\newtheorem{theorem}{Theorem}
\newtheorem{lemma}[theorem]{Lemma}
\newtheorem{proposition}[theorem]{Proposition}
\newtheorem{assumption}[theorem]{Assumption}
\newenvironment{proof}[1][Proof]{\begin{trivlist}
\item[\hskip \labelsep {\bfseries #1}]}{\end{trivlist}}
\newcommand{\mcA}{{\mathcal{A}}}
\newcommand{\mcF}{{\mathcal{F}}}
\journal{TBA}
\begin{document}
\begin{frontmatter}

\title{Hedging Non-Tradable Risks with Transaction Costs and Price Impact
\tnoteref{t1}}
\tnotetext[t1]{SJ would like to acknowledge the support of the Natural Sciences and Engineering Research Council of Canada (NSERC), [funding reference numbers RGPIN-2018-05705 and RGPAS-2018-522715]. \\ The authors would like to thank participants at the Research in Options Conference, the SIAM Annual General Meeting, the Bachelier World Congress, the INFORMS Annual Meeting, the Western Conference on Mathematical Finance, and the SIAM Financial Mathematics and Engineering Conference for comments on this article.}
\tnotetext[t2]{The data that support the findings of this study are available from the corresponding author upon reasonable request.}

\author[author1]{\'Alvaro Cartea}\ead{alvaro.cartea@maths.ox.ac.uk}
\author[author2]{Ryan Donnelly}\ead{ryan.f.donnelly@kcl.ac.uk}
\author[author3]{Sebastian Jaimungal}\ead{sebastian.jaimungal@utoronto.ca}
\address[author1] {Mathematical Institute, University of Oxford, \\ Oxford-Man Institute of Quantitative Finance, Oxford,  United Kingdom }
\address[author2] {King's College London, United Kingdom}
\address[author3] {University of Toronto,   Toronto,  Canada }

\date{}

\begin{abstract}
	A risk-averse agent hedges her exposure to a non-tradable risk factor $U$ using a correlated traded asset $S$ and accounts for the impact of her trades on both factors.
	The effect of the agent's trades on $U$ is referred to as cross-impact. By solving the agent's stochastic control problem, we obtain a closed-form expression for the optimal strategy when the agent holds a linear position in $U$.  When the exposure to the non-tradable risk factor $\psi(U_T)$ is non-linear, we provide an approximation to the optimal strategy in closed-form, and prove that the value function is correctly approximated by this strategy when cross-impact and risk-aversion are small.
	We further prove that when $\psi(U_T)$   is non-linear, the approximate optimal strategy can be written in terms of the optimal strategy for a linear exposure with the size of the position changing dynamically according to the exposure's ``Delta'' under a particular probability measure.

\end{abstract}
\begin{keyword}
	non-tradable risk, hedging, algorithmic trading, price impact
\end{keyword}
\end{frontmatter}

\section{Introduction}

In this paper we show, for the first time, how a risk-averse agent manages her exposure to a non-tradable risk factor while taking into account trading price impact. The agent can trade in a correlated asset to hedge her exposure. Ideally, this position in the traded asset is achieved immediately, however price impact restricts the speed at which the agent can trade. On the other hand, trading too slowly exposes the agent to the risk associated with the non-tradable risk factor.

Price impact can generally be classified by the timescale of its persistence into two types: temporary or permanent. The first occurs when the volume of the trade exceeds the available liquidity at the best quote in the limit order book (LOB). The second occurs due to updates of limit orders to reflect the arrival of new information conveyed by the liquidity taking order. Some studies of price impact effects include \cite {potters2003more}, \cite{cont2014price}, \cite{donier2015fully}, and \cite{LehalleMML15}. Our problem is related to two strands of literature, one is the optimal execution of large positions, and the other is the hedging of non-tradable risks. The execution of large positions with price impact has been studied extensively in the literature, see the early work of \cite{almgren2001optimal}, and more recently \cite{GueantAMF}, \cite{cartea2015algorithmic}, \cite{bechler2015optimal}, and \cite{gueant2016financial}.

We provide three examples where investors are exposed to a non-tradable risk factor. The first two  are when the non-tradable factor is a financial instrument which the agent is restricted from trading for legal or regulatory reasons. The third is when the non-tradable factor is not a financial instrument.  (i) Employees who are given compensation in the form of options written on the stock of their firm may be bound to a covenant that precludes them from trading the options or the firm's stock for a period of time.  (ii) A regulatory body imposes a short selling ban on stocks as were the cases in 2008\footnote{https://www.ft.com/content/16102460-85a0-11dd-a1ac-0000779fd18c .} and 2011\footnote{https://www.ft.com/content/9a55839a-c42d-11e0-ad9a-00144feabdc0 .}. Investors holding derivatives written on assets with short-sell bans would have to seek out unrestricted and correlated assets to hedge their exposures. (iii) Weather derivatives may be hedged by taking positions in traded stocks of firms whose financial performance is  correlated to weather. Henceforth, we interpret the non-tradable factor as an asset which the agent is precluded from trading and the agent either holds shares of this asset or a European-style contingent claim on the non-tradable risk factor.

We solve in closed-form for the agent's value function (Proposition \ref{prop:value_function}) and optimal trading strategy (Theorem \ref{prop:optimal_control}) when she holds units of the non-tradable risk factor. When the exposure is in the form of a European-style contingent claim we approximate her value function, prove that the approximation is indeed valid (Theorem \ref{prop:asymptotic_approximation}), obtain an approximate trading strategy (Theorem \ref{prop:approx_nu}),  and show that risk-aversion and cross-price impact may influence the direction of her optimal trades in opposing directions. The effect of risk-aversion is to acquire a position that offsets the risk of the non-tradable asset, which is possible through correlation effects (a short/long position will be taken when correlation is positive/negative). Cross-price impact, on the other hand, gives incentive to the agent to take a long/short position in the traded asset when correlation is positive/negative, as this has a beneficial effect on the value of the non-tradable asset. We further demonstrate that, when the agent has a non-linear exposure to the non-tradable asset, the approximate strategy  can be written in terms of the optimal strategy for a linear exposure, where the number of units of the linear exposure is given by the ``Delta'' of the non-linear exposure (Proposition \ref{prop:closed_form}).

Our work is related to the literature on incomplete markets where an agent is exposed to sources of risk that cannot be fully diversified. A closely related stream of research is initiated by \cite{henderson2002valuation} who studies the valuation of claims on non-tradable assets, using utility indifference. Similarly, \cite{henderson2007valuing} and \cite{grasselli2011getting} study the valuation of irreversible investments on non-tradable assets, while \cite{CarteaJaiIrreversible} show how to account for model uncertainty. Along similar lines, \cite{leung2009accounting}, \cite{leung2009exponential}, and \cite{grasselli2009risk} study the valuation of employee stock options by trading partially correlated assets. In \cite{leung2016optimal} the authors consider the problem of statically hedging a contingent claim written on a correlated asset. However, none of these works account for the price impact of the agent's trades on the traded assets themselves nor on the non-tradable risk exposure.

The remainder of the paper is organized as follows. In Section \ref{sec:model} we introduce the dynamics of the traded asset and non-tradable risk factor, and present the dynamic optimization problem  faced by the agent. In Section \ref{sec:linear_exposure} we solve in closed-form for the agent's value function and optimal trading strategy when the exposure to the non-tradable risk factor affects her terminal wealth with linear dependence. In Section \ref{sec:non_linear_exposure} we consider the situation when the exposure to the non-tradable risk factor is non-linear. We derive an approximate optimal trading strategy, in closed-form, which we prove is valid when the cross-impact and risk-aversion are small. We also investigate the qualitative behaviour of the strategy through simulations of the underlying dynamics. Section \ref{sec:conclusion} concludes and longer proofs appear in the appendix.

\section{Model}\label{sec:model}

\subsection{Dynamics}

In this section we outline the dynamics of multiple assets that include the price impact effects of the agent's trading, as well as the dynamics of the agent's inventory and cash holdings. We denote by $S^\nu=(S_t^\nu)_{t\in[0,T]}$ the (controlled) midprice process of a traded asset, and by $U^\nu=(U_t^\nu)_{t\in[0,T]}$ the (controlled) value process of a non-tradable risk factor. We assume the agent is able to directly trade $S$ and she has additional exposure to $U$ such that her wealth increases by $\psi(U)$ at some future time $T$, where $\psi:\mathbb R\rightarrow \mathbb R$. Although she is unable to directly trade in $U$, trades that occur in $S$  have an effect on the value of the non-tradable risk factor. We also let $Q^\nu=(Q_t^\nu)_{t\in[0,T]}$ denote the (controlled) inventory process in the traded asset held by the agent, and let the control $\nu=(\nu_t)_{t\in[0,T]}$ denote the rate at which this asset is acquired (a positive/negative value indicates she is buying/selling the asset). The dynamics of the controlled inventory are
\begin{align}
	Q^\nu_t &= q+\int_0^t \nu_u\,du\,.
\end{align}
The traded asset price and non-tradable risk factor satisfy the SDEs
\begin{align}
	dS^\nu_t &= (\mu + b\,\nu_t) \,dt + \sigma\, dW_t\,,\\
	dU^\nu_t &= (\beta + c\,\nu_t)\,dt + \eta\, dZ_t\,.
\end{align}
Here, $(W_t)_{t\in[0,T]}$ and $(Z_t)_{t\in[0,T]}$ are standard Brownian motions with correlation $\rho\in(-1,1)$. The term $b\,\nu_t$, with $b\ge0$ constant, represents a permanent price impact due to the agent's trading. We include the possibility that trading in asset $S$ has a cross-price impact on $U$. This is accounted for by the inclusion of the term $c\,\nu_t$, with $c$ constant, in the drift of $U$ and $\beta$ is a constant.

In addition to the permanent impact on midprices, we model a temporary price impact by introducing an execution price, which we denote by $\hat S^\nu=(\hat S^\nu_t)_{t\in[0,T]}$ and is given by
\begin{align}
	\hat{S}^\nu_t &= S^\nu_t + k\,\nu_t\,.
\end{align}
The execution price is the value  the agent pays to acquire shares of the traded asset. Trading at a faster rate induces an execution price which is farther away from the midprice, in addition to affecting the drift of the asset. The temporary price impact can be considered a result of limit order book microstructure.   The permanent price impact can be thought of, among other effects, as the result of information leakage which induces other market participants to modify existing orders. For further discussions on temporary and permanent impact arising from LOB dynamics see \cite{cartea2015algorithmic}.

As the agent executes trades in the asset $S$, she must withdraw or deposit appropriate funds from her cash holdings, which have value denoted by $X^\nu=(X_t^\nu)_{t\in[0,T]}$ and equals
\begin{align}
	X_t^\nu = x-\int_0^t \hat{S}^\nu_u \,\nu_u \,du\,.
\end{align}
Throughout, we work on the completed and filtered probability space $(\Omega, \mathbb{P}, \{\mcF_t \}_{t\in[0,T]})$ where $\mcF_t$ is the standard augmentation of the natural filtration generated by $(W_u,Z_u)_{u\in[0,t]}$.

\subsection{Performance Criterion}

The agent employs an exponential utility function with risk aversion parameter $\gamma$ and aims to maximize her expected utility of wealth at time $T$. At time $T$ the exposure to the non-tradable risk factor directly affects the agent's wealth. At this time her wealth consists of her cash holdings, the value included in her inventory holdings of the traded asset, and the exposure $\psi(U_T)$. If she acts according to a trading strategy $\nu\in\mcA$, where the set of admissible trading strategies $\mcA$ consists of $\mcF$-predictable processes such that $\mathbb{E}[\int_0^T\nu^2_udu] < \infty$, her performance criterion is given by
\begin{align}
	H^\nu(t,x,q,S,U) &= \mathbb{E}_{t,x,q,S,U}\biggl[-e^{-\gamma\,\big(X^\nu_T + Q^\nu_T\,\left(S^\nu_T - \alpha\, Q^\nu_T\right) + \psi(U_T^\nu)\big)}\biggr]\,,\label{eqn:performance_criteria}
\end{align}
where $\mathbb{E}_{t,x,q,S,U}[\,\cdot\,]$ represents expectation conditional on $X_t^\nu = x$, $Q_t^\nu = q$, $S_t^\nu = S$, and $U_t^\nu = U$. The term $\alpha\,(Q^\nu_T)^2$ represents a price penalty that the agent incurs  from having to liquidate her inventory at time $T$ and incentivizes her to hold a small inventory position near maturity.

{ In general the liquidation of terminal inventory $Q_T^\nu$ may have a cross-impact effect on the non-traded risk factor $U_T^\nu$, and so $\psi$ should depend on both $U_T$ and $Q_T$. This would complicate the analysis and in reality there are many situations in which this cross-impact would not be realized. For example, if the exposure $\psi(U_T)$ is cash settled and the liquidation of $Q_T^\nu$ shares is conducted immediately after the settlement, then any cross impact effect of this trade would be irrelevant because the agent does not physically hold exposure to $U_T$ anymore.}

Her value function is
\begin{align}
	H(t,x,q,S,U) &= \sup_{\nu\in\mcA}H^\nu(t,x,q,S,U)\,.\label{eqn:control_problem}
\end{align}

The control problem posed in \eqref{eqn:control_problem} has the associated Hamilton-Jacobi-Bellman (HJB) equation:
\begin{equation}
	\begin{split}
		\partial_t H  + \sup_\nu\biggl\{
		\nu\,\partial_q H -(S+k\,\nu)\,\nu\,\partial_x H + (\mu + b\,\nu)\,\partial_S H
		\qquad\qquad\qquad\quad & \\
		 + \tfrac{1}{2}\,\sigma^2\,\partial_{SS}H + (\beta + c\,\nu)\,\partial_U H + \tfrac{1}{2}\,\eta^2\,\partial_{UU}H + \rho\,\sigma\,\eta\,\partial_{SU}H\biggr\} =&\, 0\,,
		\\
		H(T,x,q,S,U) =& -e^{-\gamma\,(x+q\,(S-\alpha\, q) + \psi(U))}\,.
	\end{split}
	\label{eqn:HJB}
\end{equation}
In the next section we  assume that the exposure $\psi(U)$ is linear in the value of the non-tradable risk factor. This allows us to solve for the value function and the optimal trading strategy in closed-form. In Section \ref{sec:non_linear_exposure} we relax this assumption and provide solutions which are correct up to corrections that vanish in the limit of small risk-aversion and cross-impact.

\section{Linear Exposure}\label{sec:linear_exposure}

We consider the special case where the exposure to the non-tradable risk factor is linear: $\psi(U) = \mathfrak{N}\,U$. A direct interpretation of this exposure is that the agent holds $\mathfrak{N}$ shares of the non-tradable risk factor and is restricted from trading it $\forall\,t\in[0,T)$, but at time $T$ this restriction is lifted and the shares are immediately liquidated.\footnote{In principle we could include a liquidation penalty as we do for the traded asset, however as the agent has no control over the number of shares of $U$ that she holds during $[0,T]$, this penalty would factor out of the performance criteria as a constant and would have no effect on the optimal trading strategy.}

We also assume $2\,\alpha - b > 0$ because this ensures the form of the value function given in the following proposition is well defined for all $t\in[0,T]$. If this inequality is not obeyed, then it is possible for the terms in \eqref{eqn:h_1} and \eqref{eqn:h_2}, which are shown below,  to explode. This inequality is typically satisfied in practical examples, because the reverse inequality induces the agent to buy (or sell) very large quantities and destabilize prices, then liquidate this large position with a smaller penalty than the gain incurred by the original price movements.

{The following proposition and theorem are particular cases of well-known results in linear-quadratic-exponential Gaussian control (for example, see \cite{jacobson1973optimal} and \cite{duncan2013linear}). We include them for completeness and because the expression in equation \eqref{eqn:optimal_control} in Theorem \ref{prop:optimal_control} plays an important role in one of our subsequent results.}

\begin{proposition}[Linear Exposure Value Function]\label{prop:value_function} With $\psi(U) = \mathfrak{N}\,U$ the solution to equation \eqref{eqn:HJB} together with its terminal condition is given by
	\begin{align}
		H(t,x,q,S,U) &= -\exp\left\{-\gamma\,(x + q\,S + \mathfrak{N}\,U + h(t,q;\mathfrak{N}))\right\}\,,\label{eqn:propH}
	\end{align}
	where
	\begin{align}
		h(t,q;\mathfrak{N}) = h_0(t;\mathfrak{N}) + h_1(t;\mathfrak{N})\,q + h_2(t)\,q^2\,.
	\end{align}
	The time-dependent functions $h_0,\,h_1,\, h_2$ are given by
	\begin{subequations}
		\addtolength{\jot}{3pt}
		\begin{align}
			h_0(t;\mathfrak{N}) =&\;\left(\beta\,\mathfrak{N} - \tfrac{1}{2}\,\gamma\,\eta^2\,\mathfrak{N}^2\right)\,(T-t)
			+ \tfrac{1}{4\,k}\,\int_t^T (h_1(s;\mathfrak{N}) + c\,\mathfrak{N})^2\,ds\,,
			\label{eqn:h_0}
			\\
			h_1(t;\mathfrak{N}) =&\; \frac{\zeta \,k}{\omega}
			\frac{\phi^-(1-e^{-\frac{\omega}{k}\,(T-t)}) - \phi^+\,(1-e^{\frac{\omega}{k}\,(T-t)})
			+
			\frac{2\,\omega^2\, c}{\zeta\,k}\,\mathfrak{N}}
			{\phi^- e^{-\frac{\omega}{k}\,(T-t)} + \phi^+\,e^{\frac{\omega}{k}\,(T-t)}}
			 - c\,\mathfrak{N}\,,
			 \label{eqn:h_1}
			\\
			h_2(t) =&\; \omega \,\frac{\phi^-\,e^{-\frac{\omega}{k}\,(T-t)} - \phi^+ \, e^{\frac{\omega}{k}\,(T-t)}}{\phi^- e^{-\frac{\omega}{k}\,(T-t)} + \phi^+\, e^{\frac{\omega}{k}\,(T-t)}} - \frac{b}{2}\,,\label{eqn:h_2}
		\end{align}
		\addtolength{\jot}{-3pt}
	\end{subequations}
	and the constants
	\begin{equation*}
		\zeta = \mu - \gamma\,\rho\,\sigma\,\eta\,\mathfrak{N}\,,\qquad \omega = \sqrt{\tfrac{k\,\gamma\,\sigma^2}{2}}\,,
		\qquad
		\text{and}
		\qquad
		\phi^\pm = \omega \,\pm \alpha \,\mp \tfrac{1}{2}b\,.
	\end{equation*}
\end{proposition}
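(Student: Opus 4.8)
The plan is to verify the proposed form \eqref{eqn:propH} directly against the HJB equation \eqref{eqn:HJB} and then invoke a verification argument to confirm that this classical solution is the value function. First I would insert the ansatz $H = -\exp\{-\gamma(x + q\,S + \mathfrak{N}\,U + h(t,q;\mathfrak{N}))\}$ into \eqref{eqn:HJB}, using $\partial_q H = -\gamma(S + h_1 + 2h_2\,q)H$, $\partial_x H = -\gamma H$, $\partial_S H = -\gamma q\,H$, $\partial_U H = -\gamma\mathfrak{N}\,H$, $\partial_{SS}H = \gamma^2 q^2 H$, $\partial_{UU}H = \gamma^2\mathfrak{N}^2 H$, and $\partial_{SU}H = \gamma^2 q\,\mathfrak{N}\,H$. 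Since the strictly negative factor $H$ is common to every term, the pointwise maximization over $\nu$ reduces to minimizing the strictly convex quadratic $\gamma k\,\nu^2 - \gamma\,[\,h_1 + c\,\mathfrak{N} + (2h_2+b)\,q\,]\,\nu$, where convexity is guaranteed by $k>0$. The unique minimizer is $\nu^\star = \tfrac{1}{2k}\big(h_1 + c\,\mathfrak{N} + (2h_2+b)\,q\big)$, which is the candidate optimal control of Theorem \ref{prop:optimal_control}, equation \eqref{eqn:optimal_control}.

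Substituting $\nu^\star$ back and cancelling the common factor $-\gamma H$ turns the HJB into a polynomial identity in $q$. Matching the coefficients of $q^2$, $q^1$, and $q^0$ produces a triangular system of ODEs: the Riccati equation $\partial_t h_2 = \tfrac12\gamma\sigma^2 - \tfrac{1}{4k}(2h_2+b)^2$; the linear equation $\partial_t h_1 + \tfrac{1}{2k}(h_1 + c\,\mathfrak{N})(2h_2+b) + \zeta = 0$ driven by $h_2$, with $\zeta = \mu - \gamma\rho\sigma\eta\mathfrak{N}$; and a pure quadrature $\partial_t h_0 + \tfrac{1}{4k}(h_1+c\mathfrak{N})^2 + \beta\mathfrak{N} - \tfrac12\gamma\eta^2\mathfrak{N}^2 = 0$ driven by $h_1$. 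The terminal conditions $h_2(T) = -\alpha$, $h_1(T)=0$, $h_0(T)=0$ follow from matching the ansatz at $t=T$ with the terminal data in \eqref{eqn:HJB}. Integrating the $h_0$ equation directly reproduces \eqref{eqn:h_0}.

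To solve the remaining two ODEs I would set $g := 2h_2 + b$, reducing the Riccati equation to the autonomous form $\partial_t g = \tfrac{1}{2k}(4\omega^2 - g^2)$ with $\omega^2 = k\gamma\sigma^2/2$ and $g(T) = b-2\alpha$. Separation of variables (equivalently, the standard hyperbolic linearization) yields the ratio-of-exponentials in \eqref{eqn:h_2}, and the identities $\phi^- + \phi^+ = 2\omega$ and $\phi^- - \phi^+ = b - 2\alpha$ confirm the terminal value. For $h_1$, the key observation is that if $D(t)$ denotes the common denominator $\phi^- e^{-\frac{\omega}{k}(T-t)} + \phi^+ e^{\frac{\omega}{k}(T-t)}$ appearing in \eqref{eqn:h_2}, then $\tfrac{1}{2k}(2h_2+b) = \partial_t\ln D(t)$; multiplying the linear ODE by $D$ makes the left side the exact derivative $\tfrac{d}{dt}\big(D\,(h_1+c\,\mathfrak{N})\big) = -\zeta\,D$. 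Integrating from $t$ to $T$ with $D(T) = 2\omega$ and $(h_1+c\mathfrak{N})(T) = c\,\mathfrak{N}$, and evaluating $\int_t^T D(s)\,ds$ explicitly, reproduces \eqref{eqn:h_1}.

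The main obstacle is not the algebra but the verification step of confirming that the candidate $H$ equals the value function. I would apply It\^o's formula to $H(t, X^\nu_t, Q^\nu_t, S^\nu_t, U^\nu_t)$ along an arbitrary admissible $\nu\in\mcA$; the HJB inequality shows this process is a supermartingale, giving $H \ge H^\nu$, with equality along $\nu^\star$ once $\nu^\star\in\mcA$ is established. Here the standing assumption $2\alpha - b > 0$ is essential: it guarantees $D(t) > 0$ for all $t\in[0,T]$, so $h_1$, $h_2$, and hence $\nu^\star$ remain finite, while the affine-Gaussian structure of the controlled triple $(Q^{\nu^\star}, S^{\nu^\star}, U^{\nu^\star})$ supplies the moment bounds needed for square-integrability of $\nu^\star$ and for the uniform integrability of the exponential family. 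Controlling the integrability of the exponential utility is the delicate point, and it is precisely here that the classical linear-quadratic-exponential Gaussian results cited before the proposition apply.
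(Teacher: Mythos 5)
Your proposal is correct and follows essentially the same route as the paper: the same exponential ansatz, reduction of the HJB equation to a triangular system of ODEs for $h_0$, $h_1$, $h_2$ by matching powers of $q$, and explicit solution of the Riccati and linear equations (your integrating-factor computation via $D(t)$ fills in the details the paper leaves as ``one may check''). Note only that Proposition \ref{prop:value_function} itself claims just that the given expression solves the PDE \eqref{eqn:HJB}; the supermartingale/verification argument you sketch at the end is deferred by the paper to the proof of Theorem \ref{prop:optimal_control}, where admissibility of the deterministic feedback strategy makes it routine.
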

\begin{proof}
	For a proof see the Appendix.
\end{proof}

\begin{theorem}[Optimal Trading Strategy: Linear Case]\label{prop:optimal_control} The optimal trading speed
	\begin{align}
		\nu^*_t &= \tfrac{1}{2\,k}\left(\,c\,\mathfrak{N} + h_1(t;\mathfrak{N}) + (2\,h_2(t) + b)\,Q_t^{\nu^*}\right)\,,\label{eqn:optimal_control}
	\end{align}
	is admissible, and the solution provided in \eqref{eqn:propH} is indeed the value function.
	Moreover, the optimal level of inventory is deterministic and is given by
	\begin{equation}
		Q^{\nu^*}_t =
		\biggl(\frac{\zeta\, k\,(\phi^- - \phi^+)}{4\,\omega^2} + \frac{c\, \mathfrak{N}}{2}\biggr)\frac{e^{\frac{\omega}{k} \,t} - e^{-\frac{\omega}{k} t}}{\ell(T)} - \frac{\zeta\, k}{2\,\omega^2}\biggl(\frac{\ell(T-t)}{\ell(T)} - 1\biggr) + Q_0\,\frac{\ell(T-t)}{\ell(T)}\label{eqn:optimal_inventory}
	\end{equation}
	with $\ell(t)=\phi^+\,e^{\frac{\omega}{k} \,t} + \phi^-\,e^{-\frac{\omega}{k}\, t}$, and $\zeta$, $\omega$, and $\phi^\pm$ as in Proposition \ref{prop:value_function}.
\end{theorem}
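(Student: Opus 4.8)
The plan is to prove the theorem in three stages: first derive the feedback form of $\nu^*$ from the Hamiltonian, then show the induced inventory is deterministic (which simultaneously yields admissibility and the explicit formula \eqref{eqn:optimal_inventory}), and finally run a verification argument confirming that \eqref{eqn:propH} is the value function. I would begin by identifying $\nu^*$ as the pointwise maximizer of the supremum in \eqref{eqn:HJB}. The $\nu$-dependent terms are $\nu\,\partial_q H - (S+k\,\nu)\,\nu\,\partial_x H + b\,\nu\,\partial_S H + c\,\nu\,\partial_U H$, which is strictly concave in $\nu$ since the coefficient of $\nu^2$ is $-k\,\partial_x H<0$ (because $\partial_x H=-\gamma H>0$ as $H<0$). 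Setting the first-order condition to zero gives $\nu^*=\big(\partial_q H - S\,\partial_x H + b\,\partial_S H + c\,\partial_U H\big)/(2\,k\,\partial_x H)$. Substituting the derivatives of the candidate \eqref{eqn:propH}, namely $\partial_x H=-\gamma H$, $\partial_q H=-\gamma H\,(S+h_1+2\,h_2\,q)$, $\partial_S H=-\gamma H\,q$, and $\partial_U H=-\gamma H\,\mathfrak N$, the common factor $-\gamma H$ cancels and \eqref{eqn:optimal_control} follows.

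Next I would insert this feedback law into the inventory dynamics $dQ^{\nu^*}_t=\nu^*_t\,dt$. Since $\nu^*_t$ is affine in $Q^{\nu^*}_t$ with deterministic, continuous coefficients and carries no Brownian term, $Q^{\nu^*}$ solves the deterministic linear ODE $\dot Q=a(t)\,Q+f(t)$ with $a(t)=(2\,h_2(t)+b)/(2\,k)$ and $f(t)=(c\,\mathfrak N+h_1(t))/(2\,k)$; in particular $Q^{\nu^*}$ is deterministic. The key simplification is that $a(t)=\tfrac{d}{dt}\log\ell(T-t)$: differentiating $\ell(T-t)=\phi^+e^{\frac{\omega}{k}(T-t)}+\phi^-e^{-\frac{\omega}{k}(T-t)}$ and comparing with \eqref{eqn:h_2} shows $\tfrac{d}{dt}\ell(T-t)\big/\ell(T-t)=a(t)$. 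Hence $\big(Q^{\nu^*}/\ell(T-t)\big)'=f/\ell(T-t)$, and integrating from $0$ to $t$ with $Q^{\nu^*}_0=Q_0$ gives $Q^{\nu^*}_t=\ell(T-t)\big(Q_0/\ell(T)+\int_0^t f(s)/\ell(T-s)\,ds\big)$. Inserting the explicit form of $h_1$ from \eqref{eqn:h_1} and evaluating the resulting exponential integrals reproduces \eqref{eqn:optimal_inventory}; this computation is routine but tedious. Because $Q^{\nu^*}$ is deterministic and continuous on $[0,T]$, so is $\nu^*$, whence $\mathbb{E}[\int_0^T(\nu^*_u)^2\,du]<\infty$ and $\nu^*\in\mcA$.

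For the verification step I would set $M_t=H(t,X^\nu_t,Q^\nu_t,S^\nu_t,U^\nu_t)$ for an arbitrary $\nu\in\mcA$ and apply It\^o's formula. Since $H$ solves \eqref{eqn:HJB}, the drift of $M$ equals $\partial_t H+\mathcal{L}^\nu H\le 0$, with equality when $\nu=\nu^*$. Thus, modulo integrability, $M$ is a supermartingale for a general control and a martingale under $\nu^*$. This yields $H(t,\cdot)\ge \mathbb{E}_{t,x,q,S,U}[H(T,\cdots)]=H^\nu(t,\cdot)$ for every $\nu\in\mcA$, and equality for $\nu^*$; taking the supremum over $\nu$ then identifies $H$ in \eqref{eqn:propH} as the value function and $\nu^*$ as optimal.

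The hard part is justifying that the local-martingale part of $M$ is a true martingale (under $\nu^*$) and that $M$ is a genuine supermartingale for every $\nu\in\mcA$. Because $H=-\exp\{-\gamma(\cdots)\}$ is an unbounded negative functional and $X^\nu_T=x-\int_0^T\hat S^\nu_u\,\nu_u\,du$ is quadratic in the controlled processes, a one-sided localization/Fatou argument fails to close the inequality in the required direction, so I must instead establish uniform integrability of $\{M_\tau\}$ over stopping times. I would do this by exploiting the affine-Gaussian structure: under the $L^2$ admissibility constraint the controlled state admits Gaussian-type moment bounds, which let me control the relevant exponential moments and invoke a de la Vall\'ee--Poussin criterion. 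For $\nu^*$ this is immediate, since the state is then an explicit Gaussian process and the martingale property and the equality $H=H^{\nu^*}$ hold directly; the delicate case is the supermartingale bound for general $\nu$, which is precisely where the standard linear-quadratic-exponential Gaussian verification results cited after Proposition \ref{prop:value_function} can be brought to bear.
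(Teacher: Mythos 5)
Your proposal follows essentially the same route as the paper's proof: extract the feedback form of $\nu^*$ from the HJB first-order condition (as in the proof of Proposition \ref{prop:value_function}), observe that the resulting inventory equation is a deterministic linear ODE whose bounded solution yields both admissibility and \eqref{eqn:optimal_inventory}, and conclude by a verification argument using the fact that the HJB solution is classical. The differences are only in level of detail: where the paper calls the ODE solution ``straightforward but tedious'' and delegates the verification integrability issues to the cited linear-quadratic-exponential-Gaussian literature, you correctly exhibit the integrating factor $\ell(T-t)$ and correctly identify the uniform-integrability obstruction (localization plus Fatou fails in the required direction because the exponential utility is negative and unbounded below), ultimately deferring to the same cited results for the general-control supermartingale bound.
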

\begin{proof}
	For a proof see the Appendix.
\end{proof}

The optimal trading strategy in Theorem \ref{prop:optimal_control} shows how trading in asset $S$ is affected by the exposure to asset $U$. In the simplified case of no cross-impact ($c = 0$), the trading strategy is identical to the single asset case except with the drift modified to $\mu - \gamma\,\rho\,\sigma\,\eta\,\mathfrak{N}$. This modification represents the trade-off between a source of expected returns and a source of risk. Holding an inventory of $Q_t$ means that the agent's wealth is increasing at a rate of $\mu\, Q_t$, but at the same time there is a risk contribution of the form $\rho\,\sigma\,\eta\, \mathfrak{N}\, Q_t$ due to covariation between $S$ and $U$. This drift modification has an interesting consequence that is illustrated most clearly when $\mu = 0$ and $Q_0 = 0$. If the agent has no exposure to the non-tradable risk factor ($\mathfrak{N} = 0$), and if she does not speculate on the future value of the traded asset ($\mu = 0$), then she has no reason to acquire any shares and will optimally hold a zero position for the whole trading period. This becomes apparent in equation \eqref{eqn:optimal_inventory} when $\zeta = 0$. However, if she holds a linear position in $U$, then she takes a non-zero position in the traded asset due to her ability to partially hedge the risk in $U$. This qualitative difference in the trading strategy exemplifies the importance of considering the interaction between the traded and non-tradable risk factors.

Although it may not be apparent from the formulation of the problem or the explicit form of the equations which dictate the optimal strategy, there is a specific inventory level of the traded asset that the agent favors and attempts to hold if the trading period is long. Any deviation from this position is caused by the various forms of frictions and penalties that the agent has to pay. For example the temporary price impact incurs larger costs to the agent if she trades too quickly, and the terminal inventory penalty means the agent favors smaller inventory levels as the end of the trading period approaches.

To formulate our notion of the agent's desired long horizon position, we introduce a quantity we refer to as relative time. As $t\in[0,T]$, any instant in the trading period can be expressed in the form $t = \kappa \,T$ for $\kappa\in[0,1]$. We refer to $\kappa$ as the relative time.

\begin{proposition}[Long Horizon or Frictionless Position]\label{prop:long_horizon} Fix a relative time $\kappa\in(0,1)$ and let $t = \kappa \,T$. Then
	\begin{align}
		\lim_{T\rightarrow \infty} Q_{\kappa \,T}^{\nu^*} = \frac{\mu - \gamma\,\rho\,\sigma\,\eta\,\mathfrak{N}}{\gamma\,\sigma^2} = \lim_{k\rightarrow 0} Q_{\kappa \,T}^{\nu^*}\,.\label{eqn:long_horizon}
	\end{align}
\end{proposition}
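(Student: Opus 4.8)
The plan is to compute the two limits in \eqref{eqn:long_horizon} directly from the explicit formula for the optimal inventory given in \eqref{eqn:optimal_inventory}, and to verify that both limiting procedures eliminate the transient (time-dependent) terms, leaving only the stationary level. First I would recall the abbreviations $\ell(t)=\phi^+ e^{\frac{\omega}{k}t}+\phi^- e^{-\frac{\omega}{k}t}$, $\zeta=\mu-\gamma\rho\sigma\eta\mathfrak{N}$, $\omega=\sqrt{k\gamma\sigma^2/2}$, and $\phi^\pm=\omega\pm\alpha\mp\tfrac12 b$, and I would note the algebraic identity $\omega^2=\tfrac12 k\gamma\sigma^2$, so that $\zeta k/(2\omega^2)=\zeta/(\gamma\sigma^2)=(\mu-\gamma\rho\sigma\eta\mathfrak{N})/(\gamma\sigma^2)$, which is exactly the asserted stationary value. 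The heart of the proof is therefore to show that all the $t$- and $T$-dependent factors in \eqref{eqn:optimal_inventory} collapse so that only the $-\tfrac{\zeta k}{2\omega^2}(\tfrac{\ell(T-t)}{\ell(T)}-1)$ piece survives and contributes $+\tfrac{\zeta k}{2\omega^2}$.

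For the long-horizon limit $T\to\infty$ with $t=\kappa T$, I would substitute $t=\kappa T$ everywhere and track the dominant exponential in each ratio. Since $\phi^+>0$ under the assumption $2\alpha-b>0$, the denominator $\ell(T)\sim\phi^+ e^{\frac{\omega}{k}T}$ grows like $e^{\frac{\omega}{k}T}$. The first term in \eqref{eqn:optimal_inventory} has numerator $\sim\phi^+ e^{\frac{\omega}{k}\kappa T}$, so its ratio behaves like $e^{-\frac{\omega}{k}(1-\kappa)T}\to0$ for $\kappa<1$; the third term $Q_0\,\ell(T-t)/\ell(T)$ likewise decays like $e^{-\frac{\omega}{k}\kappa T}\to0$ since $\kappa>0$. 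In the middle term, $\ell(T-t)/\ell(T)=\ell((1-\kappa)T)/\ell(T)\sim e^{-\frac{\omega}{k}\kappa T}\to0$, so that term tends to $-\tfrac{\zeta k}{2\omega^2}(0-1)=+\tfrac{\zeta k}{2\omega^2}$. Collecting the survivors yields $\tfrac{\zeta k}{2\omega^2}=(\mu-\gamma\rho\sigma\eta\mathfrak{N})/(\gamma\sigma^2)$, establishing the first equality. The key point to state carefully is that $0<\kappa<1$ is needed precisely so that both boundary-layer contributions ($\kappa=0$ and $\kappa=1$) vanish.

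For the vanishing-impact limit $k\to0$, I would note that $\omega=\sqrt{k\gamma\sigma^2/2}\to0$ but the combination $\omega/k=\sqrt{\gamma\sigma^2/(2k)}\to\infty$, so the exponentials $e^{\pm\frac{\omega}{k}t}$ blow up or decay very fast for any fixed $t\in(0,T)$. As in the previous case, $\ell(T)\sim\phi^+ e^{\frac{\omega}{k}T}$ dominates, the first and third terms in \eqref{eqn:optimal_inventory} are exponentially suppressed relative to $\ell(T)$, and $\ell(T-t)/\ell(T)\to0$ since $T-t<T$; meanwhile $\zeta k/(2\omega^2)=\zeta/(\gamma\sigma^2)$ is independent of $k$ and survives in the middle term. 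Hence the same stationary value is recovered. The main obstacle, and the step I would be most careful about, is the uniform control of these ratios of exponentials: I must confirm that the \emph{prefactors} $\phi^\pm$ and the $c\mathfrak{N}/2$ and $\zeta k(\phi^--\phi^+)/(4\omega^2)$ coefficients do not grow fast enough to overturn the exponential decay. Since $\phi^\pm$ and $\zeta$ are bounded (indeed $\phi^+$ stays bounded away from zero by $2\alpha-b>0$) and $\zeta k/\omega^2$ is $k$-independent, the exponential factors dominate and the limits go through cleanly; I would simply remark that these are elementary estimates on $e^{-\frac{\omega}{k}\kappa T}$ and $e^{-\frac{\omega}{k}(1-\kappa)T}$ and omit the routine bookkeeping.
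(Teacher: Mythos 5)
Your proof is correct and follows essentially the same route as the paper: substitute $t=\kappa T$ into the explicit inventory formula \eqref{eqn:optimal_inventory}, use $2\alpha-b>0$ so that $\phi^+$ stays bounded away from zero (hence $\ell(T)\sim\phi^+e^{\frac{\omega}{k}T}$ dominates), show every exponential ratio is suppressed in each limit, and identify the surviving term $\tfrac{\zeta k}{2\omega^2}=\tfrac{\mu-\gamma\rho\sigma\eta\mathfrak{N}}{\gamma\sigma^2}$. The only cosmetic slip is describing the first term's numerator as $\sim\phi^+e^{\frac{\omega}{k}\kappa T}$ (the $\phi^+$ sits in the coefficient, not the numerator), which does not affect the decay rate or the conclusion.
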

\begin{proof}
For a proof see the  Appendix.
\end{proof}

We illustrate the optimal strategy numerically in Figure \ref{fig:inventory1}. As long as $T$ is sufficiently large, the first equality in \eqref{eqn:long_horizon} tells us that the agent desires to hold this inventory position for as long as possible. The exception is towards the beginning and end of the trading period. This behavior is reflected by the fact that we are required to exclude $\kappa = 0$ and $\kappa = 1$ in the proposition. The  agent favors this position because it maximizes the return versus risk over all possible inventory levels. For a fixed inventory level $q$ in the traded asset, the instantaneous expected return is $\mu \,q$. However, this position exposes the agent to an instantaneous level of risk of the form $\rho\,\sigma\,\eta\,\mathfrak{N}\,q + \frac{1}{2}\,\sigma^2 \,q^2$ (the agent is also exposed to an instantaneous risk of the form $\frac{1}{2}\,\eta^2\,\mathfrak{N}^2$ but the agent has no control over this quantity). Taking a difference of the return and risk scaled by $\gamma$ and then maximizing with respect to $q$ gives the same expression as in equation \eqref{eqn:long_horizon}. Thus, this is the optimal position in the traded asset which balances instantaneous risk and return. The second equality in \eqref{eqn:long_horizon} tells us that this is the optimal position that the agent would hold if frictionless trading were possible. The equivalence between these two limits is an indicator that the agent attempts to trade towards the frictionless optimal inventory level, but is only prevented from doing so due to the frictions involved with trading.

\begin{figure}
	\begin{center}
		{\includegraphics[trim=140 240 140 240, scale=0.48]{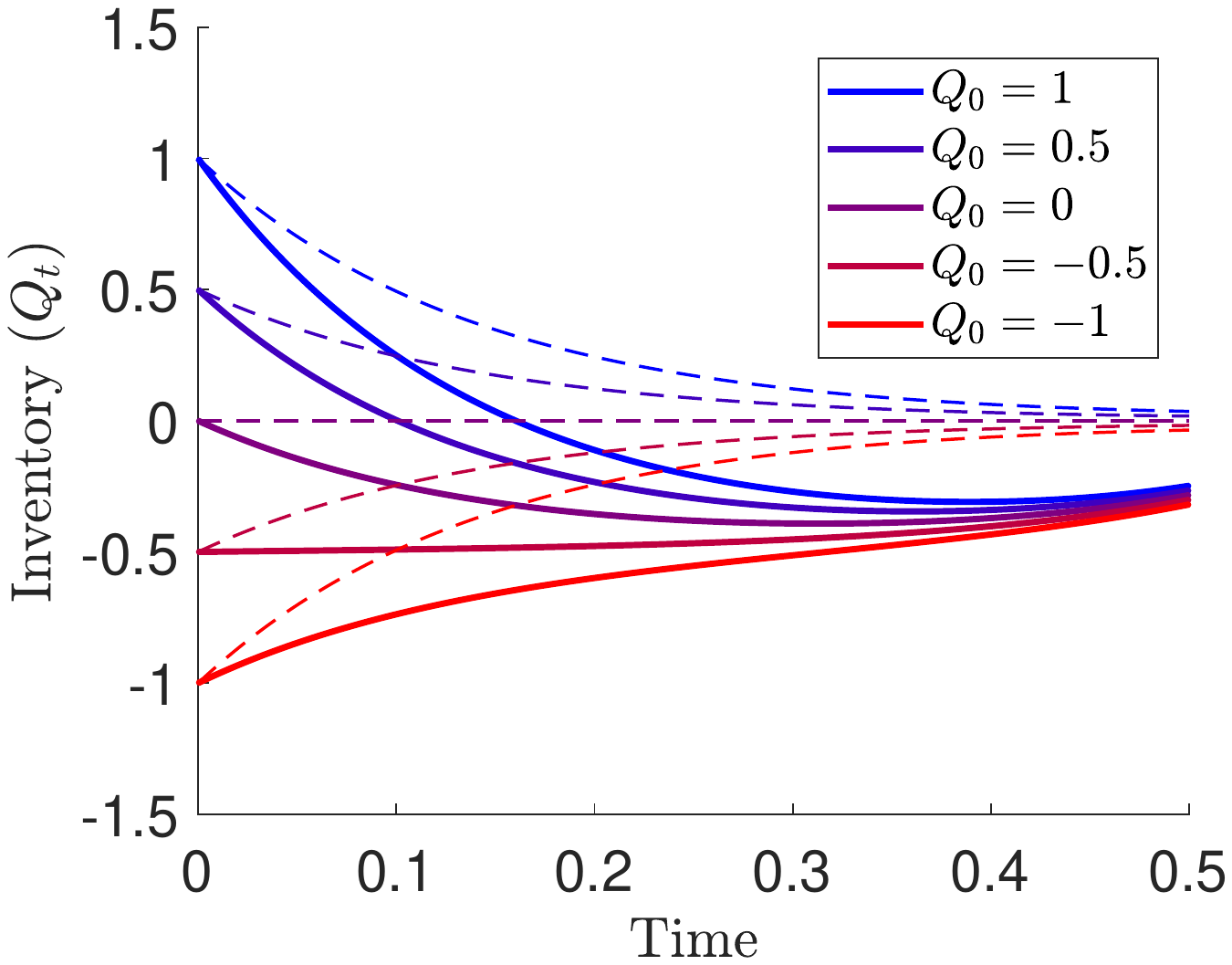}}\hspace{10mm}
		{\includegraphics[trim=140 240 140 240, scale=0.48]{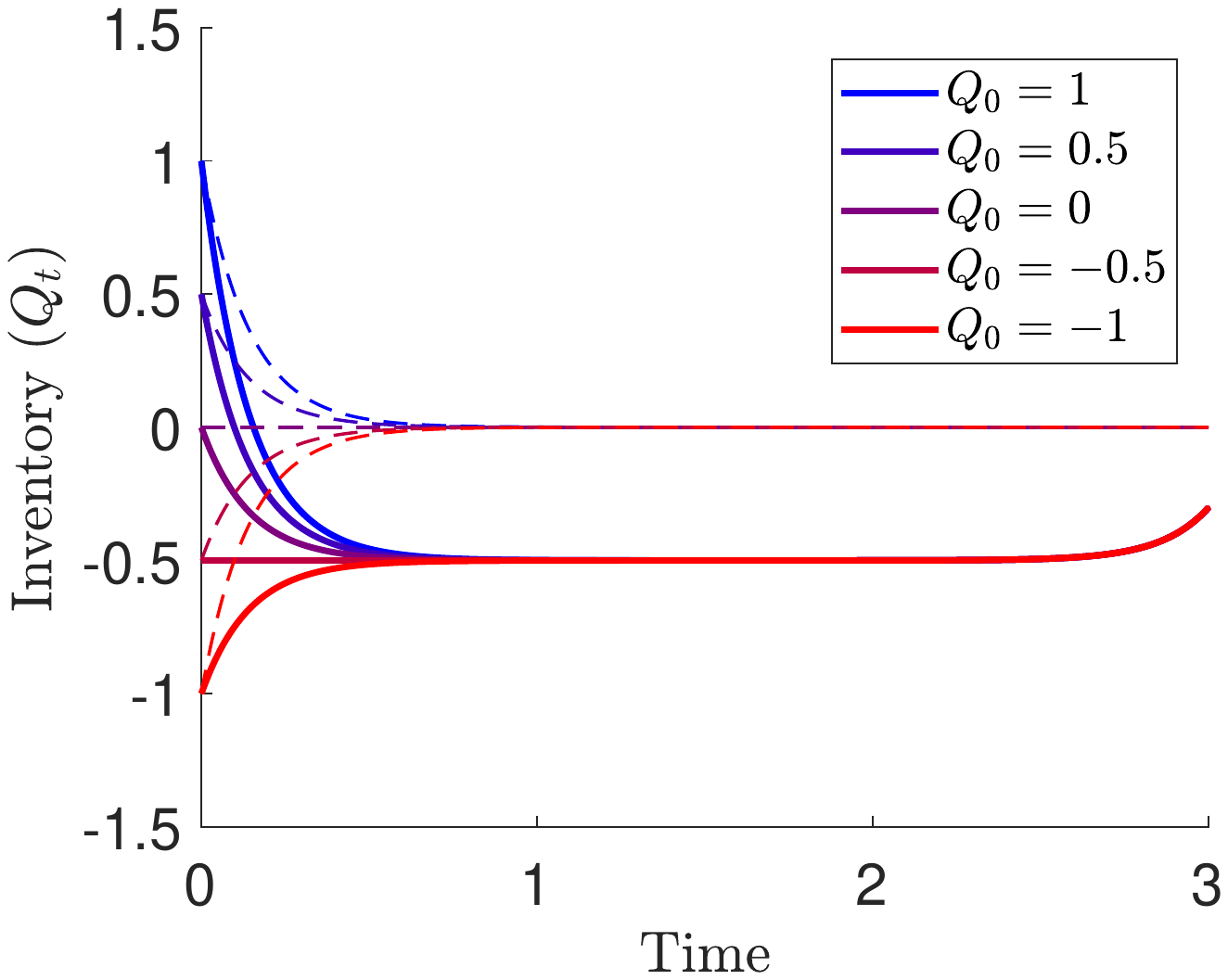}}
	\end{center}
	\vspace{-1em}
	\caption{Agent's optimal inventory position over time. In the left panel, the length of the trading period ends at $T = 0.5$ and on the right it ends at $T=3$. Other model parameters are $\mu = 0$, $\beta = 0$, $\sigma = 1$, $\eta = 1$, $\rho = 0.5$, $b = 10^{-2}$, $c = 10^{-3}$, $k=10^{-2}$, $\gamma = 1$, $\alpha = 0.05$. Solid curves are used when the agent is exposed to one share of the non-tradable risk factor ($\mathfrak{N} = 1$). Dashed curves represent the Almgren-Chriss strategy when there is no exposure to the non-tradable risk-factor ($\mathfrak{N} = 0$). The long horizon level of Proposition \ref{prop:long_horizon} which all solid curves approach in the right panel is given by $Q = -0.5$. \label{fig:inventory1}}
\end{figure}

\section{Non-Linear Exposure} \label{sec:non_linear_exposure}

In this section  the agent is exposed to the non-tradable risk factor in the form $\psi(U)$, which we may interpret  as holding a European-style contingent claim written on the non-tradable risk factor. The performance criterion, value function, and associated HJB equation are the same as \eqref{eqn:performance_criteria}, \eqref{eqn:control_problem}, and \eqref{eqn:HJB}, respectively. The non-linear payoff prevents us from disentangling the dependence between $U$ and the other variables, so we propose the ansatz
\begin{align}
	H_\psi(t,x,q,S,U;c,\gamma) &= -\exp\left\{-\gamma\,(x + q\,S + h_\psi(t,q,U;c,\gamma))\right\}\,.\label{eqn:ansatz}
\end{align}
We show explicit dependence of the functions $H_\psi$ and $h_\psi$ on $c$ and $\gamma$ because these two parameters are used in an expansion approximation, which we discuss below. We also make the dependence on $\psi$ explicit for clarity. Substituting this ansatz into \eqref{eqn:HJB} yields the following equation for $h_\psi$:

\begin{equation}
	\begin{split}
		\partial_t h_\psi + \mu \,q
		- \tfrac{1}{2}\,\gamma\,\sigma^2\,q^2
		+ (\beta- \gamma\,\rho\,\sigma\,\eta \,q)\,\partial_Uh_\psi
		+ \tfrac{1}{2}\,\eta^2\,\partial_{UU}h_\psi
		\qquad &
		\\
		- \tfrac{1}{2}\,\gamma\,\eta^2\,(\partial_Uh_\psi)^2
		+ \sup_{\nu}\left\{\nu \partial_qh_\psi
		+ c\,\nu\,\partial_Uh_\psi + b\,q\,\nu - k\,\nu^2\right\} & = 0\,,
		\\
		h_\psi(T,q,U;c,\gamma) &= \psi(U) - \alpha \,q^2\,.
	\end{split}
	\label{eqn:h_psi}
\end{equation}

The supremum in the preceding equation, which provides us with the feedback form of the optimal strategy, is achieved at
\begin{align}
	\nu^*(t,q,U;c,\gamma) &= \tfrac{1}{2\,k}\left(\partial_qh_\psi + c\,\partial_Uh_\psi + b\,q\right)\,.\label{eqn:optimal_nu}
\end{align}
Substituting this value of $\nu$ into equation \eqref{eqn:h_psi} gives
\begin{align}
	\begin{split}
		\partial_t h_\psi + \mu\, q - \tfrac{1}{2}\,\gamma\,\sigma^2\,q^2 + (\beta- \gamma\,\rho\,\sigma\,\eta \,q)\,\partial_Uh_\psi + \tfrac{1}{2}\,\eta^2\,\partial_{UU}h_\psi
		&
		\\
		- \tfrac{1}{2}\,\gamma\,\eta^2(\partial_Uh_\psi)^2 + \tfrac{1}{4\,k}\left(\partial_qh_\psi + c\,\partial_Uh_\psi + b\,q\right)^2 & = 0\,.
	\end{split}
	\label{eqn:h_PDE}
\end{align}
It is easily checked that if $\psi(U) = \mathfrak{N}\,U$, then this equation along with its terminal condition are solved by $h_\psi(t,q,U;c,\gamma) = h(t,q) + \mathfrak{N}\,U$, which also gives $H_\psi = H$ (as in Proposition \ref{prop:value_function}) as expected. For general forms of the payoff $\psi$ we are not able to find closed-form expressions which solve equation \eqref{eqn:h_psi}, but if we consider small values of model parameters $c$ and $\gamma$ we can obtain solutions that are approximate in an asymptotic sense.

It is reasonable to suppose that the cross-price impact factor $c$ is smaller than both the temporary and permanent price impact factors. Indeed, the effect that trading in one stock has on the price of another stock should be significantly less than the effect that it has on its own price. For this reason, the parameter $c$ is one choice for which we may perform an asymptotic expansion. We also perform the expansion with respect to the risk-aversion parameter $\gamma$. To this end, we perform the expansion in each quantity simultaneously by introducing an expansion parameter $\theta$ and making the substitutions $c\mapsto\theta\,c$ and $\gamma\mapsto\theta\,\gamma$.

\begin{assumption}\label{ass:ass} We make the following technical assumptions to prove the validity of the expansion.
	\begin{enumerate}[i)]
		\item $\psi \in C^4(\mathbb{R})$ with all four derivatives bounded.
		\item $2\,\alpha - b > 0$.
		\item {
				Given initial states $x$, $q$, $S$, and $U$, there exist positive constants $\theta^*<1$, $\epsilon^*$, $C$, and $D$ that satisfy the following uniform boundedness condition: for every $\theta\in(0,\theta^*)$ and $\epsilon\in(0,\epsilon^*)$, if $\nu$ is an admissible control such that
				\begin{align}
					H^\nu(0,x,q,S,U;\theta\,c,\theta\,\gamma) + \epsilon \geq H_\psi(0,x,q,S,U;\theta\,c,\theta\,\gamma)\,,\nonumber
				\end{align}
				then
				\begin{align}
					\mathbb{E}\biggl[\int_0^Te^{D \,(|X^\nu_t| + |Q^\nu_tS^\nu_t| + |Q^\nu_t| + (Q^\nu_t)^2 + |U^\nu_t|)}\,dt\biggr] & \leq C\,. \label{eqn:uniform_bound_assumption}
				\end{align}
			}
	\end{enumerate}
\end{assumption}
Assumption \ref{ass:ass} i) eliminates the consideration of vanilla European option payoffs as they are not twice continuously differentiable (even in a weak sense the second derivative is not bounded). However this complication can be avoided by using a regularized version of the payoffs, e.g., by assuming that the option with maturity $T$ expires at  time $T+\delta t$, for $\delta t$ arbitrarily small. This condition ensures that many of the terms in the expansion below have bounded derivatives with respect to $U$ and allows us to make certain growth estimates more easily.

Assumption \ref{ass:ass} ii) is made for the same reason as the case of the linear payoff. It ensures that the terms in the expansion are well defined for all $t\in[0,T]$.

{
	Finally, Assumption \ref{ass:ass} iii) can be interpreted as a condition on boundedness/continuity with respect to the space of admissible controls. It states that a particular exponential moment is uniformly bounded over a set of controls sufficiently close to optimal. In proving the validity of our approximation, this inequality allows us to bound the magnitude of the error, and the key point is that one can choose the constant $C$ so that it does not depend on $\theta$ (though it may depend on $\theta^*$). Recall that if $\psi$ is linear then the optimal control is deterministic and we remark that such a bound can be found for all optimal controls locally uniformly with respect to $\theta$.
}

Before the theorem, we introduce a lemma which is useful in showing that many relevant quantities are differentiable and bounded. This lemma concerns the function
\begin{align}
	g(t,U) &= \mathbb{E}[\,\psi(\tilde{U}_T)\,|\,\tilde{U}_t = U\,]\,,\label{eqn:g1}
\end{align}
where the process $\tilde{U}=(\tilde{U}_t)_{t\in[0,T]}$ satisfies the SDE
\begin{align}
d\tilde{U}_t &= \beta\, dt + \eta \,dZ_t\,.
\end{align}
This function plays an important role in our approximation to the value function and in our candidate approximately optimal trading strategy. We remark that $\partial_Ug(t,U)$ measures the sensitivity of $g$ to changes in the underlier $U$ and therefore has an interpretation similar to that of the ``delta'' of an option.  It is helpful in the discussion below to directly interpret this derivative as an option's ``delta'' even though they are not strictly equal because the expectation in \eqref{eqn:g} is taken under the physical measure rather than an equivalent risk-neutral measure. In addition, the process $\tilde{U}$ above is a fictitious process that equals the path of $U$ when there is no cross impact from trading.

\begin{lemma}[Future Option Delta]\label{lem:future_delta}
	Suppose $\psi$ satisfies Assumption \ref{ass:ass} i) ($\psi \in C^4(\mathbb{R})$ with bounded derivatives up to fourth order). Then
	\begin{align}
		\mathbb{E}\left[\partial_U g(s,\tilde{U}_s)\,|\,\tilde{U}_t=U\right] &= \partial_U g(t,U)\,, \qquad\forall \; t \leq s \leq T\,.\label{eqn:lemma1}
	\end{align}
	In addition, if the function $f:\mathds R \mapsto \mathds R$ is integrable, then
	\begin{align}
		\mathbb{E}\biggl[\int_t^Tf(s)\,\partial_U g(s,\tilde{U}_s)\,ds\,\biggl|\,\tilde{U}_t=U\biggr] &= \partial_U g(t,U)\int_t^T f(s)\,ds\,.\label{eqn:lemma2}
	\end{align}
	Finally, the expressions in \eqref{eqn:lemma1} and \eqref{eqn:lemma2} have derivatives up to third order with respect to $U$ which are bounded and continuous.
\end{lemma}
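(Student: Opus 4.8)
The plan is to exploit the fact that $\tilde U$ in \eqref{eqn:g1} is a Brownian motion with constant drift, so that conditional on $\tilde U_t = U$ one has $\tilde U_T = U + \beta\,(T-t) + \eta\,(Z_T - Z_t)$, and hence $g$ is a Gaussian convolution,
\[
g(t,U) = \mathbb{E}\bigl[\psi\bigl(U + \beta\,(T-t) + \eta\sqrt{T-t}\,\xi\bigr)\bigr]\,,
\]
with $\xi$ a standard normal variable. First I would differentiate this representation in $U$. Because $\psi'$ is bounded (Assumption \ref{ass:ass} i)), differentiation under the expectation is justified by dominated convergence, giving the clean identity $\partial_U g(t,U) = \mathbb{E}[\psi'(\tilde U_T)\mid \tilde U_t = U]$. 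The essential observation is that the ``delta'' $\partial_U g$ is again a conditional expectation of the \emph{bounded} payoff $\psi'$, evaluated at the same terminal time $T$.

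From this representation, \eqref{eqn:lemma1} is immediate from the Markov property of $\tilde U$ combined with the tower property. Writing $\partial_U g(s,U) = \mathbb{E}[\psi'(\tilde U_T)\mid \tilde U_s = U]$ and conditioning on $\tilde U_t = U$ for $t \le s$, iterated conditioning collapses the two expectations into a single one taken from time $t$, so that $\mathbb{E}[\partial_U g(s,\tilde U_s)\mid \tilde U_t = U] = \mathbb{E}[\psi'(\tilde U_T)\mid \tilde U_t = U] = \partial_U g(t,U)$. Equivalently, one may note that $g$ solves the backward equation $\partial_t g + \beta\,\partial_U g + \tfrac12\,\eta^2\,\partial_{UU}g = 0$ with $g(T,\cdot)=\psi$; since the coefficients do not depend on $U$, differentiating in $U$ shows $\partial_U g$ solves the same equation, whence $\partial_U g(t,\tilde U_t)$ is a bounded, and therefore true, martingale.

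For \eqref{eqn:lemma2} I would apply Fubini's theorem to exchange the time integral and the conditional expectation; this is legitimate since $\partial_U g$ is bounded by $\sup|\psi'|$ and $f$ is integrable. The $s$-integral then reduces, via \eqref{eqn:lemma1}, to $\partial_U g(t,U)$ factored out of the integral, yielding the claimed product. Finally, for the boundedness and continuity of derivatives up to third order, I would iterate the differentiation-under-the-expectation argument: for each $k \le 4$ one obtains $\partial_U^k g(t,U) = \mathbb{E}[\psi^{(k)}(\tilde U_T)\mid \tilde U_t = U]$, which is bounded in absolute value by $\sup_{\mathbb{R}}|\psi^{(k)}|$ and continuous in $(t,U)$ by dominated convergence. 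Since the right-hand sides of \eqref{eqn:lemma1} and \eqref{eqn:lemma2} equal $\partial_U g(t,U)$ (up to a deterministic factor), their $U$-derivatives up to third order are precisely $\partial_U^2 g$, $\partial_U^3 g$, $\partial_U^4 g$, controlled by the bounds on $\psi''$, $\psi'''$, $\psi^{(4)}$.

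The only genuinely delicate points are the two interchanges — differentiation past the expectation and the application of Fubini — and both are routine given Assumption \ref{ass:ass} i). The substance of the lemma is the simple but crucial identification of the ``delta'' $\partial_U g$ with the conditional expectation of the bounded payoff $\psi'$, which renders both its martingale property and its smoothness transparent; I do not anticipate any serious obstacle beyond keeping track of the constants so that the resulting bounds are uniform in $(t,U)$.
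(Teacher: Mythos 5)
Your proof is correct and follows essentially the same route as the paper's: both rely on the Gaussian representation of $g$, differentiation under the expectation (justified by boundedness of $\psi'$) to identify $\partial_U g(t,U) = \mathbb{E}[\psi'(\tilde U_T)\mid \tilde U_t = U]$, the Doob-martingale/tower-property argument for \eqref{eqn:lemma1}, Fubini for \eqref{eqn:lemma2}, and iteration on the derivatives of $\psi$ for the final smoothness claim. The alternative PDE remark you include is a nice aside, but the core argument matches the paper's proof step for step.
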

\begin{proof}
	Write $g(t,U)$ in terms of the transition density of the process $\tilde{U}$. Let
	\begin{align*}
		p(z;t,T,U) &= \frac{1}{\sqrt{2\,\pi\,\eta^2\,(T-t)}}\,\exp\biggl\{-\frac{(z-U-\beta\,(T-t))^2}{2\,\eta^2\,(T-t)}\biggr\}\,,
	\end{align*}
	therefore
	\begin{align*}
		g(t,U) &= \mathbb{E}[\,\psi(\tilde{U}_T)\,|\,\tilde{U}_t = U\,]= \int_{-\infty}^\infty \psi(z)\, p(z;t,T,U)\,dz
		= \int_{-\infty}^\infty \psi(x+U)\, p(x;t,T,0)\,dx.
	\end{align*}
The Leibniz integration rule may be used to differentiate the expression above because the derivative of $\psi$ is bounded, and we write
	\begin{align*}
		\partial_U g(t,U)
&= \int_{-\infty}^\infty \frac{d\psi}{dU}(x+U)\, p(x;t,T,0)\,dx
= \int_{-\infty}^\infty \frac{d\psi}{dU}(z)\, p(z;t,T,U)\,dz
= \mathbb{E}\left[\,\frac{d\psi}{dU}(\tilde{U}_T)\,|\,\tilde{U}_t = U\,\right].
	\end{align*}
This final expression is a Doob martingale, which shows the first claim. The second claim follows from Fubini's Theorem. The third claim follows from applying the first and second claims to a modified payoff by replacing $\psi$ with $ d\psi /dU$, $ d^2\psi /dU^2$, or $ d^3\psi /dU^3$. \qed
\end{proof}

The first claim in this lemma shows that the process $\partial_U g(t,\tilde{U}_t)$ is a martingale, and therefore the expected value of an option's delta in the future is equal to its delta at the present. The second claim states that the expected average future value of the option's delta is equal to its present value when $f\equiv 1$. In addition to providing convenient bounds throughout much of the following, many of the appearances of $\partial_U g(t,\tilde{U}_t)$ within complicated expressions below easily simplify -- this motivates Proposition \ref{prop:closed_form}.

\begin{theorem}[Asymptotic Approximation of Value Function]\label{prop:asymptotic_approximation} The function $h_\psi$ in equation \eqref{eqn:ansatz} admits the following approximation:
	
	{i) \bf Expansion}:
		\begin{align}
		\begin{split}
			h_\psi(t,q,U;\theta \,c,\theta\,\gamma) &= \hat{h}(t,q,U;\theta\, c, \theta\, \gamma) + R(t,q,U;\theta)\,,\\
			\hat{h}(t,q,U;\theta \,c,\theta\,\gamma) &= h_0(t,q,U) + \theta \,\left(c\,h_1(t,q,U) + \gamma\, h_2(t,q,U)\right) \\
			& \hspace{10mm} + \theta^2\, \left(c^2 \, h_3(t,q,U) +  c\,\gamma \,h_4(t,q,U) + \gamma^2\,h_5(t,q,U)\right)\,,
		\end{split}\label{eqn:h_expansion}
		\end{align}
		such that
		\begin{equation}
			\lim_{\theta\downarrow0}\tfrac{1}{\theta^2}R(t,q,U;\theta)=0\,.\label{eqn:error_limit}
		\end{equation}
	{ii) \bf Zero and First Order Terms}:
		The functions $h_0$, $h_1$, and $h_2$ may be taken as
		\begin{subequations}
			\begin{align}
				h_0(t,q,U) &= f_0(t) + f_1(t)\,q + f_2(t)\,q^2 + g(t,U)\,,
				\label{eqn:value_order_0}				\\
				h_1(t,q,U) &= \lambda_0(t,U) + \lambda_1(t,U)\,q\,,								\label{eqn:h1}\\
				h_2(t,q,U) &= \Lambda_0(t,U) + \Lambda_1(t,U)\,q + \Lambda_2(t)\,q^2\,,			\label{eqn:h2}
			\end{align}
		\end{subequations}
		where by letting $m = 2\,\alpha-b$,
		\addtolength{\jot}{3pt}
		\begin{subequations}
			\begin{align}
				f_0(t) &= \tfrac{1}{4\,k}\int_t^Tf_1^2(s)\,ds\,,\\
				f_1(t) &= \frac{\mu\,(T-t)(4\,k+m\,(T-t))}{4\,k + 2\,m\,(T-t)}\,,\\
				f_2(t) &= \frac{-k\,m}{2\,k+m\,(T-t)} - \frac{b}{2}\,,
				\end{align}
			\label{eqn:f012}
		\end{subequations}
		\begin{align}
			g(t,U) &= \mathbb{E}[\,\psi(\tilde{U}_T)\,|\,\tilde{U}_t = U\,]\,,\label{eqn:g}
		\end{align}
		\begin{subequations}
			\begin{align}
				\lambda_0(t,U) &= \mathbb{E}\biggl[\int_t^T\frac{f_1(s)}{2\,k}\biggl(\lambda_1(s,\tilde{U}_s)+\partial_U g(s,\tilde{U}_s)\biggr)ds\,\biggl|\,\tilde{U}_t=U\biggr]\,,\label{eqn:lambda_0}\\
				\lambda_1(t,U) &= \frac{-m}{2\,k+m\,(T-t)}\,\mathbb{E}\biggl[\int_t^T\partial_U g(s,\tilde{U}_s)\,ds\,\biggl|\,\tilde{U}_t=U\biggr]\,,\label{eqn:lambda_1}
			\end{align}
		\end{subequations}
		\begin{subequations}
			\begin{align}
				\Lambda_0(t,U) &=
				\tfrac{1}{2\,k}\,
				\mathbb{E}\biggl[\int_t^T \left(f_1(s)\Lambda_1(s,\tilde{U}_s)
				-k\,\eta^2\,(\partial_Ug(s,\tilde{U}_s))^2\right)\,\biggl|\,\tilde{U}_t=U\biggr]\,,\label{eqn:Lambda_0}\\
				\Lambda_1(t,U) &= \tfrac{1}{k}\,\mathbb{E}\biggl[\int_t^T\frac{2\,k+m\,(T-s)}{2\,k+m\,(T-t)}
				\left(f_1(s)\Lambda_2(s) - k\,\rho\,\sigma\,\eta\,\partial_Ug(s,\tilde{U}_s)\right)\,ds\,\biggl|\,\tilde{U}_t=U\biggr]\,,\label{eqn:Lambda_1}\\
				\Lambda_2(t) &= -\sigma^2\,(T-t)\,\frac{12\,k^2 + 6\,k\,m\,(T-t) + m^2\,(T-t)^2}{6\,(2\,k+m\,(T-t))^2}\,,\label{eqn:Lambda_2}
			\end{align}
		\end{subequations}
		\addtolength{\jot}{-3pt}
		where the process $\tilde{U}=(\tilde{U}_t)_{t\in[0,T]}$ satisfies the SDE
		\begin{align}
			d\tilde{U}_t &= \beta\, dt + \eta \,dZ_t\,,
		\end{align}
		{iii) \bf Second Order Terms}: The functions $h_3$, $h_4$, and $h_5$ may be taken as
		\begin{subequations}
			\begin{align}
			h_3(t,q,U) &= A_0(t,U) + A_1(t,U)\,q + A_2(t,U)\,q^2\,,							\label{eqn:h3}\\
			h_4(t,q,U) &= B_0(t,U) + B_1(t,U)\,q + B_2(t,U)\,q^2\,,							\label{eqn:h4}\\
			h_5(t,q,U) &= C_0(t,U) + C_1(t,U)\,q + C_2(t,U)\,q^2\,.							\label{eqn:h5}
			\end{align}
		\end{subequations}
		where each $A_{0,1,2}$, $B_{0,1,2}$, and $C_{0,1,2}$ is bounded and continuously differentiable with respect to $U$.
\end{theorem}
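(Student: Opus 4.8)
The plan is to treat the exact equation \eqref{eqn:h_PDE}, after the substitutions $c\mapsto\theta c$ and $\gamma\mapsto\theta\gamma$, as a nonlinear parabolic PDE $\mathcal N^\theta[h_\psi]=0$ with terminal data $h_\psi(T,q,U)=\psi(U)-\alpha q^2$, and to build a two-parameter Taylor expansion in $\theta$. First I would insert the ansatz $\hat h$ from \eqref{eqn:h_expansion} into $\mathcal N^\theta[\cdot]$ and collect the coefficients of the monomials $1,\,c,\,\gamma,\,c^2,\,c\gamma,\,\gamma^2$ at orders $\theta^0,\theta^1,\theta^2$. Because the only nonlinear contributions come from $-\tfrac12\theta\gamma\eta^2(\partial_Uh)^2$ and $\tfrac1{4k}(\partial_qh+\theta c\,\partial_Uh+bq)^2$, this matching produces a triangular hierarchy of \emph{linear} backward PDEs: one for $h_0$ at order $\theta^0$, two (for $h_1,h_2$) at order $\theta^1$, and three (for $h_3,h_4,h_5$) at order $\theta^2$, each with zero terminal data except $h_0(T,q,U)=\psi(U)-\alpha q^2$. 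By construction the residual satisfies $\mathcal N^\theta[\hat h]=\mathcal E$ with $\mathcal E=O(\theta^3)$, a polynomial in $q$ of bounded degree whose $U$-coefficients are bounded by parts ii) and iii).

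For part ii), at order $\theta^0$ the $U$-variable decouples, so $g$ solves $\partial_tg+\beta\,\partial_Ug+\tfrac12\eta^2\partial_{UU}g=0$ with $g(T,U)=\psi(U)$, whose Feynman--Kac representation is exactly \eqref{eqn:g}, while the $q$-coefficients $f_0,f_1,f_2$ satisfy a decoupled system (Riccati for $f_2$, linear for $f_1$, quadrature for $f_0$) solved explicitly as \eqref{eqn:f012}. Substituting $h_1=\lambda_0+\lambda_1q$ and $h_2=\Lambda_0+\Lambda_1q+\Lambda_2q^2$ and equating powers of $q$ reduces the order-$\theta^1$ PDEs to linear parabolic equations in $(t,U)$ for $\lambda_0,\lambda_1,\Lambda_0,\Lambda_1,\Lambda_2$, whose Feynman--Kac solutions along $\tilde U$ give \eqref{eqn:lambda_0}--\eqref{eqn:Lambda_2}; Lemma \ref{lem:future_delta} is what collapses the $\partial_Ug$-terms and certifies boundedness and continuity. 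Part iii) repeats this at order $\theta^2$: the quadratic-in-$q$ ansatz for $h_3,h_4,h_5$ is consistent, and since their source terms are products of already-solved, bounded lower-order derivatives, Feynman--Kac produces coefficients $A_i,B_i,C_i$ that are bounded and $C^1$ in $U$; the smoothing of the heat semigroup together with $\psi\in C^4$ (which supplies enough surviving $U$-derivatives at this order) yields the claimed regularity.

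The crux is the error bound \eqref{eqn:error_limit}, i.e.\ that $R=h_\psi-\hat h=o(\theta^2)$. Here I would exploit that $\hat h$ defines an approximate value function $\hat H=-\exp\{-\theta\gamma(x+qS+\hat h)\}$ whose terminal value coincides exactly with the payoff in \eqref{eqn:performance_criteria}, and whose generator obeys the algebraic identity $\mathcal G^\nu\hat H=-\theta\gamma\,\hat H\,L^\nu[\hat h]$, where $L^\nu[\cdot]$ is the pre-supremum expression of \eqref{eqn:h_psi} and $-\theta\gamma\,\hat H>0$. Applying It\^o's formula to $\hat H$ along an admissible control (with localization, sending the stopping time to $T$) gives $H^\nu(0)=\hat H(0)+\mathbb{E}[\int_0^T(-\theta\gamma\hat H)\,L^\nu[\hat h]\,dt]$. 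Since $L^\nu[\hat h]\le\mathcal N^\theta[\hat h]=\mathcal E$ with equality at the feedback control $\hat\nu$ of \eqref{eqn:optimal_nu}, taking the supremum over $\nu$ for the upper bound and $\nu=\hat\nu$ for the lower bound squeezes both $H_\psi(0)$ and $H^{\hat\nu}(0)$ to within $\theta\gamma\,\mathbb{E}[\int_0^T|\hat H|\,|\mathcal E|\,dt]$ of $\hat H(0)$. Because $\hat h$ is quadratic in $q$ with bounded $U$-coefficients and $g$ grows at most linearly, the exponent of $|\hat H|$ is dominated by $D(|X|+|QS|+|Q|+Q^2+|U|)$, so Assumption \ref{ass:ass} iii)---applied precisely to the $\epsilon$-optimal controls that alone matter for the supremum---bounds this expectation by a constant independent of $\theta$. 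With $\mathcal E=O(\theta^3)$ this yields $|H_\psi(0)-\hat H(0)|=O(\theta^4)$; writing $R(0)=-\tfrac1{\theta\gamma}\log\big(H_\psi(0)/\hat H(0)\big)$ and using $\hat H(0)\to-1$ converts this into $R(0)=O(\theta^3)$, and running the same argument from an arbitrary time $t$ gives \eqref{eqn:error_limit} for all $(t,q,U)$.

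The main obstacle is this last step. The formal matching and the Feynman--Kac solutions are routine once the ansatz forms are fixed; the difficulty is making the verification argument rigorous in an \emph{unbounded}, exponential-utility setting: justifying $\hat H\in C^{1,2}$ up to the terminal time (where one leans on interior smoothing of the heat kernel and the $C^4$ hypothesis), controlling the local-martingale terms under localization, and---most delicately---absorbing the polynomial growth of $\mathcal E$ and of the exponent of $\hat h$ into an exponential moment that is uniform in $\theta$. Assumption \ref{ass:ass} iii) is engineered exactly for this, which is why the estimate is restricted to controls that are $\epsilon$-close to optimal.
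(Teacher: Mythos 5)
Your proposal follows essentially the same route as the paper's own proof: a formal matching argument that produces the triangular hierarchy of linear backward PDEs solved via Feynman--Kac with the quadratic-in-$q$ ansatz (Part I of the paper's proof, with Lemma \ref{lem:future_delta} and Lemma \ref{lem:ABC} supplying the boundedness/regularity), followed by a verification argument applying It\^o's formula to $\hat{H}=-e^{-\theta\gamma(x+qS+\hat h)}$ along $\epsilon\theta^3$-optimal controls, bounding the generator by its supremum whose residual is $O(\theta^3)$, invoking Assumption \ref{ass:ass} iii) for the uniform exponential-moment bounds, and converting the $o(\theta^3)$ accuracy at the level of $H$ into \eqref{eqn:error_limit} via the exponential/logarithm expansion (Part II). The only cosmetic differences are that you phrase the martingale step through localization where the paper verifies square-integrability directly, and your claimed $O(\theta^4)$ rate is slightly loose (the $\epsilon$-optimality argument yields $o(\theta^3)$, which is exactly what is needed); neither affects correctness.
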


\begin{proof}
	See Appendix A.
\end{proof}
The decomposition of the value function warrants some discussion, but much of the intuition behind these expressions becomes clearer when we consider how they influence an approximately optimal trading speed. This is demonstrated in the next theorem. An immediate consequence of this theorem is that the inventory process becomes stochastic.

\begin{theorem}[Asymptotic Approximation of Optimal Trading Speed]\label{prop:approx_nu} Let $\hat{\nu}$ be a feedback control given by
	\begin{align}
		\hat{\nu}(t,q,U;\theta\, c,\theta\, \gamma) = \nu_0(t,q) + \theta \left(c\,\nu_1(t,U) + \gamma\, \nu_2(t,q,U)\right)\,,\label{eqn:approx_optimal_control}
	\end{align}
	with
	\begin{subequations}
		\begin{align}
			\nu_0(t,q) &= \tfrac{1}{2\,k}\left(f_1(t) + (2\,f_2(t) + b)\,q\right)\,,\label{eqn:nu_0}\\
			\nu_1(t,U) &= \tfrac{1}{2\,k}\left(\partial_U \,g(t,U) + \lambda_1(t,U)\right)\,, \label{eqn:nu_1}\\
			\nu_2(t,q,U) &= \tfrac{1}{2\,k}\left(\Lambda_1(t,U) + 2\,\Lambda_2(t)\,q \right)\,.\label{eqn:nu_2}
		\end{align}
		\label{eqn:nu012-approx}
	\end{subequations}
	Then $\hat{\nu}_t = \hat{\nu}(t,Q_t^{\hat{\nu}},U_t^{\hat{\nu}},\theta\, c,\theta\, \gamma)$ is an admissible control. Defining $h^{\hat{\nu}}$ by the relation
	\begin{align*}
		H^{\hat{\nu}}(t,x,q,S,U;\theta\, c, \theta\,\gamma) &= -e^{-\theta\,\gamma\,(x+q\,S + h^{\hat{\nu}}(t,q,U;\theta\, c, \theta\,\gamma))}\,,
	\end{align*}
	$\hat{\nu}$ is asymptotically optimal to second order:
	\begin{align*}
		h_\psi(t,q,U;\theta\, c, \theta\,\gamma) &= h^{\hat{\nu}}(t,q,U;\theta\, c, \theta\,\gamma) + o(\theta^2)\,.
	\end{align*}
\end{theorem}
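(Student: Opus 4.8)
The plan is to run a verification argument that compares the reduced value function $h_\psi$ with the reduced value $h^{\hat{\nu}}$ delivered by the explicit feedback $\hat{\nu}$, using the truncated expansion $\hat{h}$ of Theorem \ref{prop:asymptotic_approximation} as a bridge. The two inputs are the expansion error bound $h_\psi = \hat{h} + R$ with $R = o(\theta^2)$, and an exact It\^o identity relating $h^{\hat{\nu}}$ to $\hat{h}$. I would work throughout with the exponential-utility objects $H^{\hat{\nu}}$ and $\hat{H} := -e^{-\theta\gamma(x+qS+\hat{h})}$, whose dynamic programming operators are linear. For admissibility, observe that under $\hat{\nu}$ the pair $(Q^{\hat{\nu}},U^{\hat{\nu}})$ solves a linear system: the feedback $\hat{\nu}=\nu_0+\theta(c\,\nu_1+\gamma\,\nu_2)$ is affine in $q$ with time-coefficients $f_1,f_2,\Lambda_2$ and $U$-coefficients $\partial_U g,\lambda_1,\Lambda_1$ that are bounded and continuous by Lemma \ref{lem:future_delta} and Theorem \ref{prop:asymptotic_approximation}; since $U^{\hat{\nu}}$ is a cross-impacted arithmetic Brownian motion, a Gronwall estimate yields $\mathbb{E}[\sup_{t\le T}(Q^{\hat{\nu}}_t)^2]<\infty$ and hence $\mathbb{E}[\int_0^T\hat{\nu}_t^2\,dt]<\infty$.

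The key identity comes next. A direct computation shows that $\hat{\nu}$ is the first-order-in-$\theta$ truncation of the exact maximiser $\nu^*[\hat{h}]=\tfrac{1}{2k}(\partial_q\hat{h}+\theta c\,\partial_U\hat{h}+bq)$, so $\hat{\nu}=\nu^*[\hat{h}]+O(\theta^2)$. Writing $\mathcal{N}[h,\nu]$ for the left-hand side of \eqref{eqn:h_psi} without the supremum and setting $\mathcal{E}:=\mathcal{N}[\hat{h},\hat{\nu}]$, the first-order optimality of $\nu^*[\hat{h}]$ together with the quadratic ($-k\nu^2$) dependence on $\nu$ gives $\mathcal{E}=\mathcal{N}[\hat{h},\nu^*[\hat{h}]]+O(\theta^4)$, while $\mathcal{N}[\hat{h},\nu^*[\hat{h}]]=O(\theta^3)$ because $h_0,\dots,h_5$ were chosen in Theorem \ref{prop:asymptotic_approximation} to annihilate the HJB residual through order $\theta^2$. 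Applying It\^o's formula to $\hat{H}$ along the dynamics controlled by $\hat{\nu}$, using the generator relation $\partial_t\hat{H}+\mathcal{L}^{\hat{\nu}}\hat{H}=-\theta\gamma\,\hat{H}\,\mathcal{E}$ (where $\mathcal{L}^{\hat{\nu}}$ is the generator of $(X,Q,S,U)$ under $\hat{\nu}$) and the matched terminal condition $\hat{H}(T,\cdot)=H^{\hat{\nu}}(T,\cdot)$, taking conditional expectations yields the exact relation
\begin{align*}
	e^{\theta\gamma(\hat{h}-h^{\hat{\nu}})}-1
	= -\theta\gamma\;\mathbb{E}_{t}\!\left[\int_t^T \frac{\hat{H}(s,\cdot)}{\hat{H}(t,\cdot)}\,\mathcal{E}(s,Q^{\hat{\nu}}_s,U^{\hat{\nu}}_s)\,ds\right].
\end{align*}

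Since $H_\psi\ge H^{\hat{\nu}}$ forces $h_\psi\ge h^{\hat{\nu}}$, combining this with $h_\psi=\hat{h}+R$ reduces the theorem to showing that the expectation on the right is $o(\theta^2)$, whence $\hat{h}-h^{\hat{\nu}}=o(\theta^2)$ and the claim follows. To estimate it, note that each $h_i$ is quadratic in $q$ with bounded $C^1$ (in $U$) coefficients, so $\mathcal{E}$ has polynomial growth in $q$ and controlled growth in $U$, with $|\mathcal{E}|\le \varepsilon(\theta)\,P(|Q^{\hat{\nu}}_s|,|U^{\hat{\nu}}_s|)$ where $\varepsilon(\theta)=O(\theta^3)$ and $P$ is polynomial; the positive weight $\hat{H}(s,\cdot)/\hat{H}(t,\cdot)$ equals $e^{-\theta\gamma[(X^{\hat{\nu}}_s-x)+(Q^{\hat{\nu}}_sS^{\hat{\nu}}_s-qS)+(\hat{h}(s)-\hat{h}(t))]}$.

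The hard part will be controlling this expectation \emph{uniformly in} $\theta$: the product of $P$ and the exponential weight must have an expectation bounded independently of $\theta$. For $\theta\le\theta^*$ one absorbs $P$ and the prefactor $\theta\gamma\le\theta^*\gamma$ into a single exponential $e^{D(|X|+|QS|+|Q|+Q^2+|U|)}$ and invokes Assumption \ref{ass:ass} iii) to bound the resulting moment by $C$, giving overall $O(\theta^3)=o(\theta^2)$. The genuine subtlety is that the assumption only applies to controls that are $\epsilon$-optimal, so I would first run a preliminary bootstrap — the same It\^o identity combined with a crude Gronwall bound shows $h^{\hat{\nu}}(0)\to\hat{h}(0)$, hence $\hat{\nu}$ is $\epsilon(\theta)$-optimal with $\epsilon(\theta)\to 0$ and the exponential in the identity may be linearised — after which the refined estimate produces the stated $o(\theta^2)$ rate. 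The same uniform moment bound simultaneously justifies the vanishing of the stochastic-integral term and the interchange of limit and integral.
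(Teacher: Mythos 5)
Your overall skeleton --- admissibility via the affine structure of $\hat{\nu}$, an It\^o/verification identity for $\hat{H}=-e^{-\theta\gamma(x+qS+\hat{h})}$ along the $\hat{\nu}$-controlled paths, the order count that the HJB residual of $\hat{h}$ under $\hat{\nu}$ is $O(\theta^3)$ (the paper's $\sum_{i=3}^{5}\theta^{i}M_i$), and the final combination with Theorem \ref{prop:asymptotic_approximation} --- is exactly the paper's strategy, and those parts are sound. The gap sits in the one step you yourself flag as ``the hard part'': the mechanism producing the uniform-in-$\theta$ moment bound. You propose to obtain it from Assumption \ref{ass:ass} iii), unlocked by a bootstrap in which ``the same It\^o identity combined with a crude Gronwall bound shows $h^{\hat{\nu}}(0)\to\hat{h}(0)$,'' so that $\hat{\nu}$ is $\epsilon$-optimal and the assumption applies. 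This bootstrap does not close. A Gronwall estimate delivers only polynomial moments, e.g.\ $\mathbb{E}[\sup_{t\le T}(Q^{\hat{\nu}}_t)^2+(U^{\hat{\nu}}_t)^2]<\infty$, but the It\^o identity you want to use inside the bootstrap is a relation between expectations of \emph{exponentials} of the unbounded states $X^{\hat{\nu}}_t+Q^{\hat{\nu}}_tS^{\hat{\nu}}_t+\hat{h}$: to conclude that the stochastic integrals have zero expectation, and to apply Fubini to the residual term, you already need exponential moments of these states, uniformly in $\theta$. So the bootstrap presupposes precisely the estimate it is meant to produce; with only polynomial moments the preliminary claim $h^{\hat{\nu}}(0)\to\hat{h}(0)$ cannot be established.

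The fix --- and what the paper actually does --- is to prove the exponential moment bound directly from the explicit feedback form of $\hat{\nu}$, with no appeal to Assumption \ref{ass:ass} iii) at all (that assumption is needed only in Theorem \ref{prop:asymptotic_approximation}, where the $\epsilon$-optimal control is not explicit). Since $\hat{\nu}(t,q,U)=F_1(t)+F_2(t;\theta)\,q+F_3(t;\theta)\,\partial_U g(t,U)$ with $\partial_U g$ bounded (Lemma \ref{lem:future_delta}), one has $|\hat{\nu}|\le C_1(1+|q|)$ pathwise, so ODE comparison traps $Q^{\hat{\nu}}$ between two \emph{deterministic} functions; SDE comparison (Karatzas and Shreve, Prop.\ 5.2.18) then gives $|S^{\hat{\nu}}_t|\le C_2(1+M_W)$ and $|U^{\hat{\nu}}_t|\le C_3(1+M_Z)$, where $M_W,M_Z$ are the running maxima of $|W|,|Z|$, and consequently $|X^{\hat{\nu}}_t|$ is bounded by a constant plus a multiple of $M_W$. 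Hence $|\hat{H}|\le e^{\theta_0\gamma C(1+M_W+M_Z)}$ uniformly in $\theta\in(0,\theta_0)$, and such exponentials of Brownian running maxima are integrable. This single estimate justifies the vanishing of the stochastic integrals, the Fubini step, and the bound $\tfrac{1}{\theta^3}\bigl|H^{\hat{\nu}}(0,\cdot)-\hat{H}(0,\cdot)\bigr|\le \theta\,\gamma\, C\, T\,\mathbb{E}\bigl[e^{\theta_0\gamma C(1+M_W+M_Z)}\bigr]$, after which Theorem \ref{prop:asymptotic_approximation} finishes the proof. Note that any repaired version of your bootstrap would require these same direct estimates, at which point the detour through Assumption \ref{ass:ass} iii) buys nothing.
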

\begin{proof}
	For the proof see the Appendix.
\end{proof}

For the purposes of discussing the interpretation of the quantities in \eqref{eqn:nu012-approx} we assume that $\partial_Ug$ is positive for all $t$ and $U$. Nearly all of the discussion below holds similarly if $\partial_U g$ is negative, except with the agent's actions also being appropriately changed (i.e., selling instead of buying).

The zero order term,  which we denote by $\nu_0$, has a clear interpretation. This term represents the optimal trading speed of a risk-neutral  agent when there is no cross-price impact between the traded and the non-tradable risk factors. The feedback form of this term is the same as the term that appears in an optimal execution program for a single asset with no risk-aversion. Observe that the zero order term $h_0$ of the value function in \eqref{eqn:value_order_0}  is the sum of the value of such an optimal trading program as well as the expected future payoff $\psi$ under Bachelier dynamics. This is again due to the lack of risk-aversion and, in this limit, the absence of any interaction between the $S$ and $U$.

The correction term $\nu_1$ in the optimal trading speed is due to cross-price impact and contains two components.  The term $\partial_U g(t,U)$ arises directly due to the impact that the agent's trades have on the current value of the option. As we assume $g$ is an increasing function with respect to $U$, this term has the effect of making the agent increase the speed of trading.  Buying more shares tends to increase the price process $U_t$,  which increases the value of the option.

With increasing $g$, the second component $\lambda_1(t,U)$ is negative as seen from \eqref{eqn:lambda_1}, which results in slowing down the rate of buying shares. This term arises from the agent's desire to finish with inventory close to zero to avoid the terminal liquidation penalty. As she knows that any shares she buys now she will partially liquidate in the future, she wants to avoid accumulating a large position in $S$ which results in costly round-trip trades. The value of $\lambda_1(t,U)$ is a measurement of the expected average future option delta weighted by how fast the agent expects to liquidate in the future. By lowering the trading speed by this amount, the agent is balancing the benefit of buying now and increasing the option value, while knowing she has to sell in the future and lowering the option value, both trades incur a cost due to temporary price impact.

Expression \eqref{eqn:nu_2} in the trading speed due to risk-aversion has two components. The term $2\,\Lambda_2(t)\,q$ acts to bring the agent's inventory closer to zero (note that $\Lambda_2$ is always negative). This term arises because the agent wants to avoid inventory risk, which exposes her to the risk in the traded asset price $S_t$.

The term $\Lambda_1(t,U)$ has indeterminate sign, so it could result in either more or less buying. It stems from two sources of risk as can be seen in the integrand of \eqref{eqn:Lambda_1}. The first is related to a tradeoff between inventory risk and passive gain when holding non-zero inventory. If $\mu \neq 0$ then the agent has incentive to hold non-zero inventory and benefit from the trending price of $S$, but this also exposes the agent to risk when holding inventory due to unexpected price changes. If $f_1(t)$ quantifies the desired speed of trading to benefit from price drift, then the first term in the integrand of \eqref{eqn:Lambda_1} quantifies the correction associated with not accumulating a risky position. The second source of risk in $\Lambda_1$ is that associated with holding the option and a non-zero position in the traded asset. As $S$ and $U$ are correlated, the agent can reduce her risk exposure by tending to favor a position in the traded asset which cancels out the random changes in the option value

\subsection{Closed-form Approximation to Optimal Trading Strategy}

While the approximation of the trading strategy given in Theorem \ref{prop:approx_nu} involves some complicated expressions, it makes it clear how each of the components of the dynamics affect the agent's trading speed. In this section, we approximate the optimal control process by another simpler control with a closed-form expression that is easier to evaluate.

Using Lemma \ref{lem:future_delta}, the approximation to the optimal control can be computed in closed-form, however, it involves the evaluation of several one-dimensional integrals of rational functions. Instead we employ the optimal strategy in the linear payoff case (which admits a closed-form expression, see Theorem \ref{prop:optimal_control}) to provide an approximation for the non-linear case. We let  $\mathfrak{v}^*$ be the feedback form of the optimal strategy when the agent has linear exposure of $X$ units of the non-tradable risk factor, which is given by
\begin{align}
	\mathfrak{v}^*(t,q,X;\theta\,c,\theta\,\gamma) = \tfrac{1}{2\,k}\left(\,\theta\,c\,X + h_1(t;X,\theta) + (2\,h_2(t,\theta) + b)\,q\right)\,.
\end{align}
Our closed-form approximation for the optimal trading strategy is summarized by the following two results.
\begin{proposition}[Closed-form Approximation of Optimal Trading Speed]\label{prop:closed_form}
	The following approximation holds locally uniformly in $(t,q,U)$:
	\begin{align}
		\mathfrak{v}^*(t,q,\partial_Ug(t,U);\theta\,c,\theta\,\gamma) = \hat{\nu}(t,q,U;\theta\,c,\theta\,\gamma) + o(\theta)\,.\label{eqn:local_uniform}
	\end{align}
	Let $\nu'$ be a control given by
	\begin{align}
		\nu'_t = \mathfrak{v}^*(t,Q_t^{\nu'},\partial_U g(t,U_t^{\nu'});\theta\, c,\theta\, \gamma)\,.\label{eqn:admissible_control}
	\end{align}
	Then $\nu'$ is admissible. Define $h^{\nu'}$ by the relation
	\begin{align*}
		H^{\nu'}(t,x,q,S,U;\theta\, c, \theta\,\gamma) &= -e^{-\theta\,\gamma(x+q\,S + h^{\nu'}(t,q,U;\theta\, c, \theta\,\gamma))}\,,
	\end{align*}
	so that $\nu'$ is asymptotically approximately optimal to second order:
	\begin{align}
		h_\psi(t,q,U;\theta\, c, \theta\,\gamma) &= h^{\nu'}(t,q,U;\theta\, c, \theta\,\gamma) + o(\theta^2)\,.\label{eqn:nu_bar_approx}
	\end{align}
\end{proposition}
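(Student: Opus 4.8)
For part i) I would expand the linear-case feedback $\mathfrak{v}^*(t,q,X;\theta\,c,\theta\,\gamma)=\tfrac{1}{2\,k}(\theta\,c\,X+h_1(t;X,\theta)+(2\,h_2(t,\theta)+b)\,q)$ as a power series in $\theta$ and match it, order by order, against $\hat\nu=\nu_0+\theta(c\,\nu_1+\gamma\,\nu_2)$. The only delicate bookkeeping is that $\omega=\sqrt{k\,\gamma\,\sigma^2/2}$ carries a factor $\sqrt\theta$ after the substitution $\gamma\mapsto\theta\,\gamma$, so the exponentials $e^{\pm\frac{\omega}{k}(T-t)}$ must be expanded treating $\omega=O(\sqrt\theta)$; however, because $h_2$ is $\omega$ times an odd-type ratio and $h_1$ carries the compensating prefactor $\zeta\,k/\omega$, both coefficients are analytic in $\theta$ to the required order and no half-integer powers survive at order $\theta$. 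A direct computation gives $h_1(t;X,0)=f_1(t)$ and $h_2(t,0)=f_2(t)$, so the zeroth-order term of $\mathfrak{v}^*$ is exactly $\nu_0(t,q)$. At first order, the explicit $\theta\,c\,X$ term together with the $\theta$-coefficients of $h_1$ and $h_2$ must reproduce $c\,\nu_1+\gamma\,\nu_2$, and here the essential input is Lemma~\ref{lem:future_delta}: the linear expansion produces the \emph{frozen} delta $X$, whereas $\lambda_1$, $\Lambda_1$ in \eqref{eqn:lambda_1} and \eqref{eqn:Lambda_1} are conditional expectations of the \emph{future} delta $\partial_U g(s,\tilde U_s)$; identities \eqref{eqn:lemma1}--\eqref{eqn:lemma2} collapse these expectations to $\partial_U g(t,U)$ times deterministic integrals, so setting $X=\partial_U g(t,U)$ makes the first-order terms coincide. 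The remainder is $o(\theta)$ locally uniformly because the coefficients and the Taylor remainder are continuous in $(t,q,U)$, using that $\partial_U g$ and its derivatives are bounded.

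For part ii) I would observe that $\mathfrak{v}^*(t,q,X;\theta\,c,\theta\,\gamma)$ is affine in $q$ with bounded deterministic coefficient $\tfrac{1}{2\,k}(2\,h_2(t,\theta)+b)$, and affine in $X$ with coefficients bounded in $t$. Substituting the feedback $X=\partial_U g(t,U^{\nu'}_t)$, which is bounded by Assumption~\ref{ass:ass} i) and Lemma~\ref{lem:future_delta}, yields a closed-loop equation $dQ^{\nu'}_t=(A(t)\,Q^{\nu'}_t+B(t,U^{\nu'}_t))\,dt$ that is linear in $Q$ with bounded coefficients and a bounded adapted forcing term. Standard linear-SDE moment estimates then give $\mathbb{E}[\int_0^T(\nu'_t)^2\,dt]<\infty$, so $\nu'\in\mcA$.

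For part iii) I would not re-run the verification argument but instead reduce to Theorem~\ref{prop:approx_nu}: since $h_\psi=h^{\hat\nu}+o(\theta^2)$ is already known, it suffices to show $h^{\nu'}=h^{\hat\nu}+o(\theta^2)$. The plan is an envelope/strong-concavity estimate. First, the feedback maps agree to $o(\theta)$ by part i); coupling the two closed-loop systems (driven by the same Brownian motions), noting that the $U$-trajectories differ only at order $\theta$ through cross-impact, and applying Gronwall together with the uniform exponential moment \eqref{eqn:uniform_bound_assumption} upgrades this to $\|\nu'-\hat\nu\|_{L^2([0,T]\times\Omega)}=o(\theta)$. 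Next, writing $J(\nu):=h^\nu(t,q,U)$, the strict concavity supplied by the $-k\,\nu^2$ term makes $J$ strongly concave in $\nu$ with a Lipschitz Gateaux derivative and constants uniform in $\theta$. The lower quadratic bound $J(\nu^*)-J(\hat\nu)\gtrsim\|\hat\nu-\nu^*\|^2$ together with $J(\nu^*)-J(\hat\nu)=o(\theta^2)$ forces $\|\hat\nu-\nu^*\|=o(\theta)$, hence $\|\nu'-\nu^*\|=o(\theta)$ as well; finally $J(\nu^*)-J(\nu')=\int_0^1\langle DJ(\nu^*+s(\nu'-\nu^*)),\,\nu^*-\nu'\rangle\,ds$ with $DJ(\nu^*)=0$ and $DJ$ Lipschitz gives $|J(\nu^*)-J(\nu')|\lesssim\|\nu'-\nu^*\|^2=o(\theta^2)$, which is \eqref{eqn:nu_bar_approx}.

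The main obstacle is part iii), and it has two sources. First, converting the pointwise feedback closeness of part i) into an $L^2$ bound on the \emph{realized} controls is nontrivial because the two feedback laws are evaluated along different state trajectories, so one must control the divergence of the coupled diffusions uniformly in $\theta$; this is exactly where \eqref{eqn:uniform_bound_assumption} is indispensable. Second, making the envelope argument rigorous in the exponential-utility setting requires verifying that $\nu\mapsto J(\nu)$ is twice differentiable with uniformly bounded, strongly-concave Hessian despite the composition with the entropic log-moment map; once again the uniform exponential moments of Assumption~\ref{ass:ass} iii) are what prevent the relevant constants from degenerating as $\theta\downarrow 0$.
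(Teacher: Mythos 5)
Your parts (i) and (ii) are essentially sound. Part (i) takes a genuinely different route from the paper: you expand the closed-form coefficients $h_1(t;X,\theta)$, $h_2(t;\theta)$ directly in $\theta$ (and your parity bookkeeping for $\omega=O(\sqrt{\theta})$ is correct --- both coefficients are functions of $\omega^2\propto\theta$, so no half-integer powers appear), whereas the paper instead characterizes the coefficients $\mathfrak{v}_i(t;\theta)$ and $\hat{\mathfrak{v}}_i(t;\theta)$ as solutions of $\theta$-parametrized ODEs and obtains the $o(\theta)$ matching from parameter-continuity of ODE solutions plus L'H\^opital; both routes use Lemma \ref{lem:future_delta} in the same essential way to collapse the future-delta expectations. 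Part (ii) is the same argument as the paper's.

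Part (iii) has a genuine gap, in fact two. First, your envelope argument presupposes the existence of an optimal control $\nu^*$ attaining $h_\psi$ with $DJ(\nu^*)=0$. No such optimizer is ever shown to exist in the non-linear case --- this is precisely why the paper's Theorem \ref{prop:asymptotic_approximation} is formulated with $\epsilon\,\theta^3$-optimal controls and why Assumption \ref{ass:ass} iii) is stated for near-optimal controls rather than for an optimizer. Second, the claim that $\nu\mapsto J(\nu)=h^\nu$ is strongly concave with Lipschitz Gateaux derivative, with constants uniform in $\theta$, is asserted but not proved, and it is doubtful: pathwise terminal wealth contains $\psi(U_T^\nu)$ with non-concave $\psi$ (convex for a call), and even where pathwise concavity of wealth holds, passing through the entropic map $-\tfrac{1}{\theta\gamma}\log\mathbb{E}[e^{-\theta\gamma(\cdot)}]$ does not yield a strong-concavity modulus bounded away from zero uniformly in $\theta$ and in the state (the weight $e^{-\theta\gamma W}$ is not bounded below, and the curvature degenerates as $\theta\downarrow0$). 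The uniform exponential moment bound \eqref{eqn:uniform_bound_assumption} delivers integrability, not convexity, so it cannot supply these constants. The paper needs neither ingredient: it reruns the verification computation of Theorem \ref{prop:approx_nu} with $\nu'$ in place of $\hat{\nu}$, applying Ito's Lemma to $\hat{H}$ along the $\nu'$-controlled path. The only new term is $V=r\,(\partial_q\hat{h}+\theta\,c\,\partial_U\hat{h}+b\,q-2\,k\,\hat{\mathfrak{v}})-k\,r^2$ with $r=\mathfrak{v}^*-\hat{\mathfrak{v}}$, and the envelope cancellation you are aiming for appears there as an explicit algebraic identity: the bracket is $O(\theta^2)$ by construction of $\hat{\mathfrak{v}}$, while $r=o(\theta)$ by part (i), so $V=o(\theta^2)$; after the overall factor $\theta\gamma$ the performance gap is $o(\theta^3)$ at the level of $H$, i.e.\ $o(\theta^2)$ at the level of $h$, and combining with Theorem \ref{prop:asymptotic_approximation} concludes. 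If you wish to keep your abstract formulation, you would first have to prove existence of $\nu^*$ and the uniform strong-concavity/differentiability of $J$, each of which is a harder problem than the proposition itself.
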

\begin{proof}
	For a proof see the  Appendix.
\end{proof}

This proposition shows that the agent can approximate the  optimal trading speed by  trading at time $t$ as if she were holding $\partial_Ug(t,U_t)$ units of the non-tradable risk factor, and, as before, the value of these units will not be paid until $T$. This approximation is sensible because an option's delta represents locally the equivalent number of shares of the underlier that the agent holds in terms of risk and reward exposure.

This closed-form approximation works for the trading speed, but no such approximation holds by making a similar substitution of $\partial_Ug(t,U)$ for $\mathfrak{N}$ in the closed form expressions of inventory and value function in the linear payoff case. The inventory position at time $t$  depends on the entire path of $U$ up to time $t$, which is given by
\begin{align*}
	Q_t &= Q_0 + \int_0^t\nu_s \,ds\,.
\end{align*}
Thus, even if $\nu_t$ depends on the process $U$ only through its value at time $t$, the inventory does not have this property.

\subsection{Simulation of Agent's Inventory Position}

In this section we consider a specific form of the exposure $\psi$ and investigate the agent's optimal trading strategy. The exposure is in the form of $N$ European call options written on $U$ with strike $K = U_0$. The maturity of the option is  $T+\delta t$ for a small value of $\delta t$. This ensures that the payoff function $\psi$ is twice continuously differentiable around $T$. Our approximation to the value function and optimal trading speed require us to compute the value of the option and its delta under Bachelier dynamics. Elementary computations show that $g$ in equation \eqref{eqn:g} and its derivative are given by
\begin{subequations}
	\begin{align}
		g(t,U) &= N\,\eta\,\sqrt{T+\delta t-t}\,\left(z\,\Phi(z) + \phi(z)\right)\,,\\
		\partial_Ug(t,U) &= N\,\Phi(z)\,,\qquad \text{and}\\
		z &= \tfrac{U - K}{\eta}\,(T+\delta t-t)^{-\frac{1}{2}} + \tfrac{\beta}{\eta}\,(T+\delta t-t)^{\frac{1}{2}}\,,
	\end{align}
\end{subequations}
where $\Phi$ and $\phi$ are the standard normal cumulative distribution and density functions, respectively.

\subsubsection{Effect of Cross Price Impact}

We begin with the case $\gamma = 0$ to observe the impact of the parameter $c$ on the agent's trading speed. When $\gamma=0$, we do not apply Proposition \ref{prop:closed_form} to approximate the trading speed because many quantities in \eqref{eqn:optimal_control} are undefined at $\gamma=0$. It is possible, however, to compute them in the limiting sense $\gamma\rightarrow 0$. Instead, an application of Theorem \ref{prop:approx_nu}, along with Lemma \ref{lem:future_delta} for computing $\lambda_1$, shows that for small $c$ the optimal trading speed may be approximated by
\begin{align}
	\hat{\nu}_t &= \tfrac{1}{2\,k}\left(f_1(t) + (2\,f_2(t) + b)\,Q_t\right) + c\,(2\,k + m\,(T-t))^{-1}\, \partial_Ug(t,U_t)\,.\label{eqn:speed_c}
\end{align}
Interestingly, if we force the agent to finish with zero inventory, by taking the limit $\alpha\rightarrow\infty$, then the effect of cross-price impact disappears (recall $m = 2\,\alpha -b$) and the agent behaves according to an optimal trading program with one asset. This is because the net effect of the agent's trading on the process $U$ only depends on the net change in inventory, which is always equal to $-c\,Q_0$ if the agent must have $Q_T = 0$. If the total effect on $U$ is the same regardless of the trading strategy then there can be no additional benefit of basing the trades off of $U$.

We simulate several paths of the price process $U_t$ taking into account the cross impact of the agent's own trades and plot the resulting inventory paths. These are shown in Figure \ref{fig:inventory_c_1}. We see distinct behaviour depending on whether the option ends in-the-money or not. As the option maturity approaches, if the agent can be relatively certain that it will expire out-of-the-money then she begins to adopt a strategy which essentially mimics a risk-neutral optimal liquidation program as in \cite{almgren2001optimal}.

On the other hand, if the agent believes the option will end up in-the-money, then she chooses a target inventory level which is not zero. If $U_t$ is sufficiently larger than $K$ and $t$ sufficiently close to $T$, then $\partial_Ug(t,U_t)$ is equal to $N$, the number of options held, until maturity. This is seen by expanding and rearranging \eqref{eqn:speed_c} to give
\begin{equation}\label{eqn:v with m}
	\hat{\nu}_t = \frac{f_1(t)}{2\,k}-\frac{m\,(Q_t - \frac{c}{m}\,\partial_Ug(t,U_t))}{2\,k + m\,(T-t)}\,.
\end{equation}
The second term on the right-hand side in \eqref{eqn:v with m} has the effect of making the inventory $Q$ tend to $c\,\partial_Ug(t,U_t)/m$ -- recall that $dQ_t = \nu_t\,dt$. The magnitude of this effect is intensified as  the strategy gets closer to $T$. For the choice of parameters in Figure \ref{fig:inventory_c_1}, the value of $c/m$ is approximately $0.01$ and we see that the in-the-money inventory paths approach this value multiplied by $N$ at time $T$.

At the beginning of the trading period the agent begins to purchase shares, which exerts a small pressure to increase the value of the option. Once the path of the non-tradable risk factor begins to develop, she updates the probability which she assigns to the option expiring in or out-of-the-money.

\begin{figure}
	\begin{center}
		{\includegraphics[trim=140 240 140 240, scale=0.42]{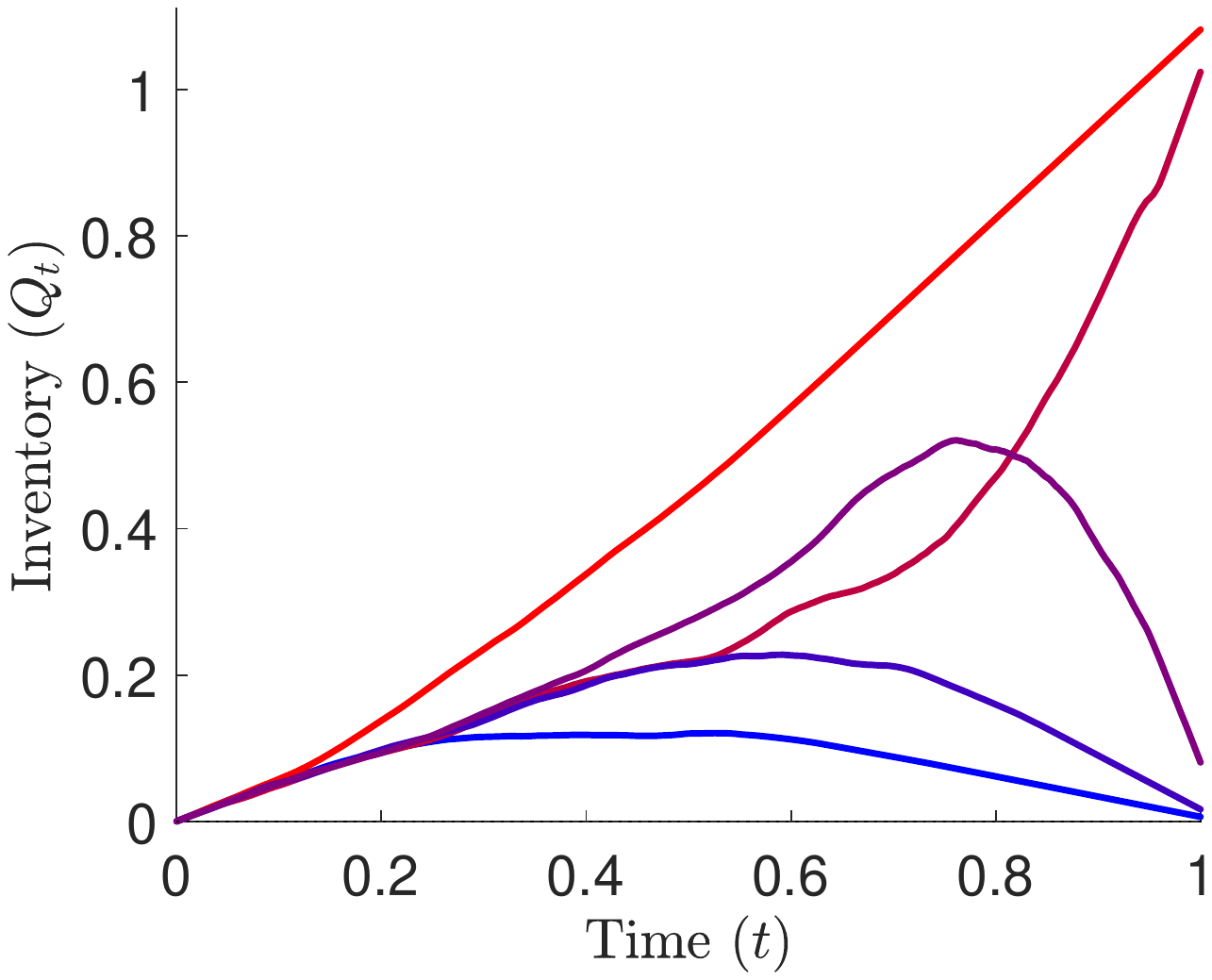}}\hspace{8mm}
		{\includegraphics[trim=140 240 140 240, scale=0.42]{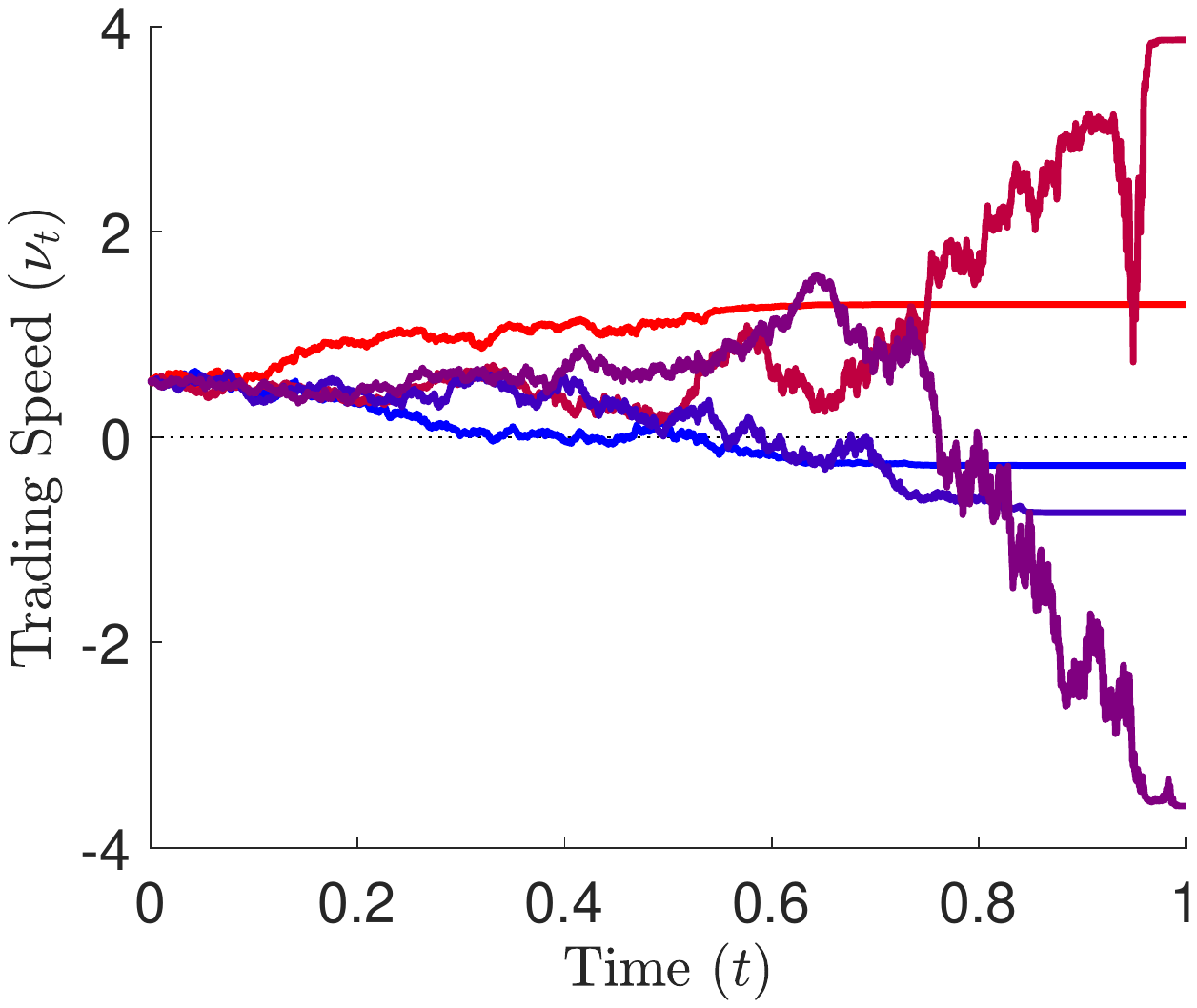}}\hspace{8mm}
		{\includegraphics[trim=140 240 140 240, scale=0.42]{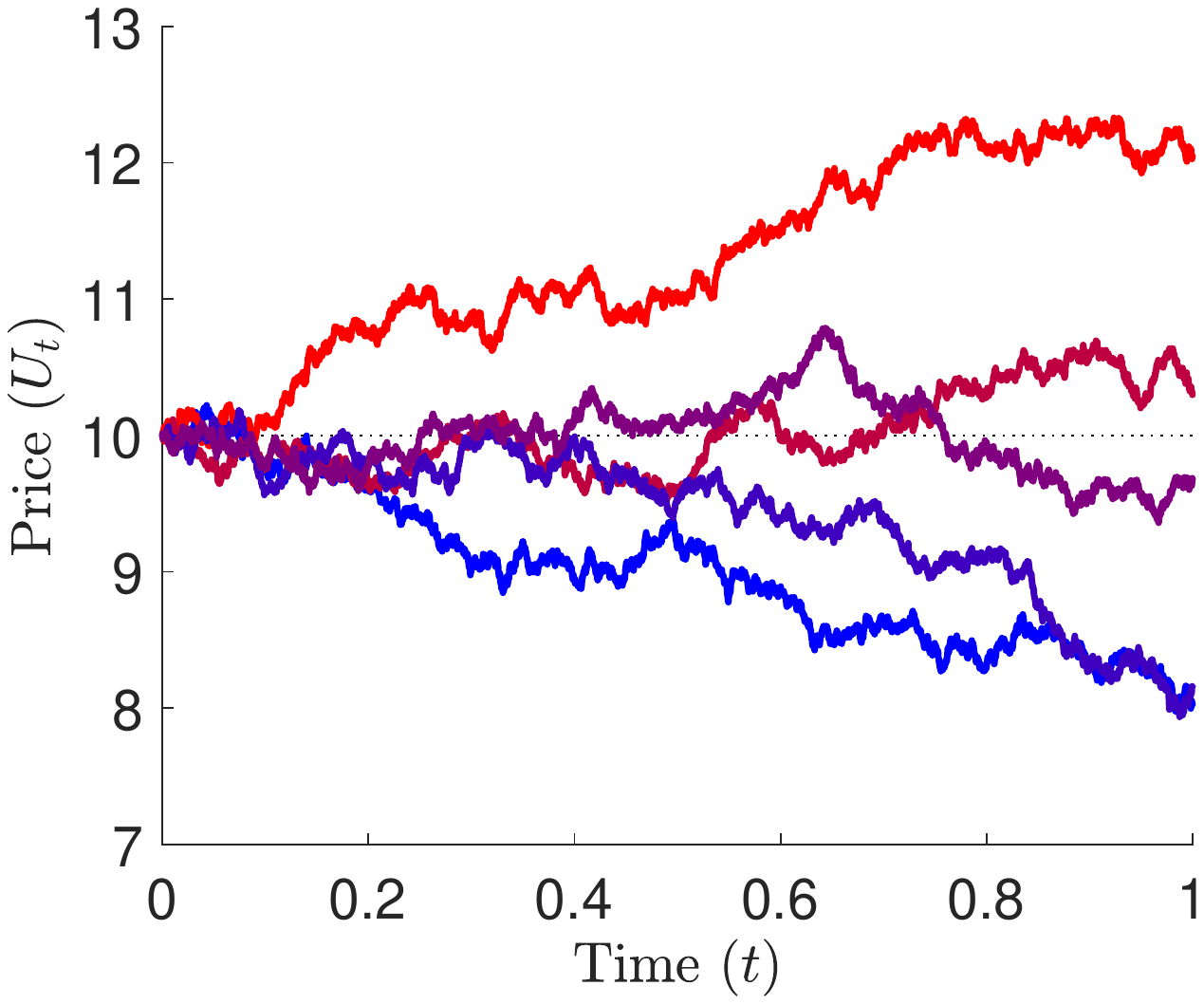}}
	\end{center}
	\vspace{-1em}
	\caption{Agent's optimal inventory position over time for $5$ simulated paths of $U$. In the left panel the agent's inventory position is displayed. The middle panel shows the agent's trading speed. The right panel shows the price of the non-tradable risk factor. Colors are chosen based on the final value of $U_T$ (larger values are red, smaller values are blue). Other model parameters are $\mu = 0$, $\beta = 0$, $\sigma = 1$, $\eta = 1$, $\rho = 0.5$, $b = 10^{-2}$, $c = 10^{-3}$, $k=10^{-3}$, $\gamma = 0$, $\alpha = 0.05$, $N = 100$, $\delta t = 10^{-5}$.\label{fig:inventory_c_1}}
\end{figure}

Figure \ref{fig:inventory_c_1} shows only a small number of paths, but the general distribution of some values is also of interest, in particular the distribution of terminal inventory. Figure \ref{fig:dist_c_1} shows the distribution of total inventory along with a scatter plot of the terminal inventory versus the terminal value of the non-tradable risk factor.

\begin{figure}
	\begin{center}
		{\includegraphics[trim=140 240 140 240, scale=0.48]{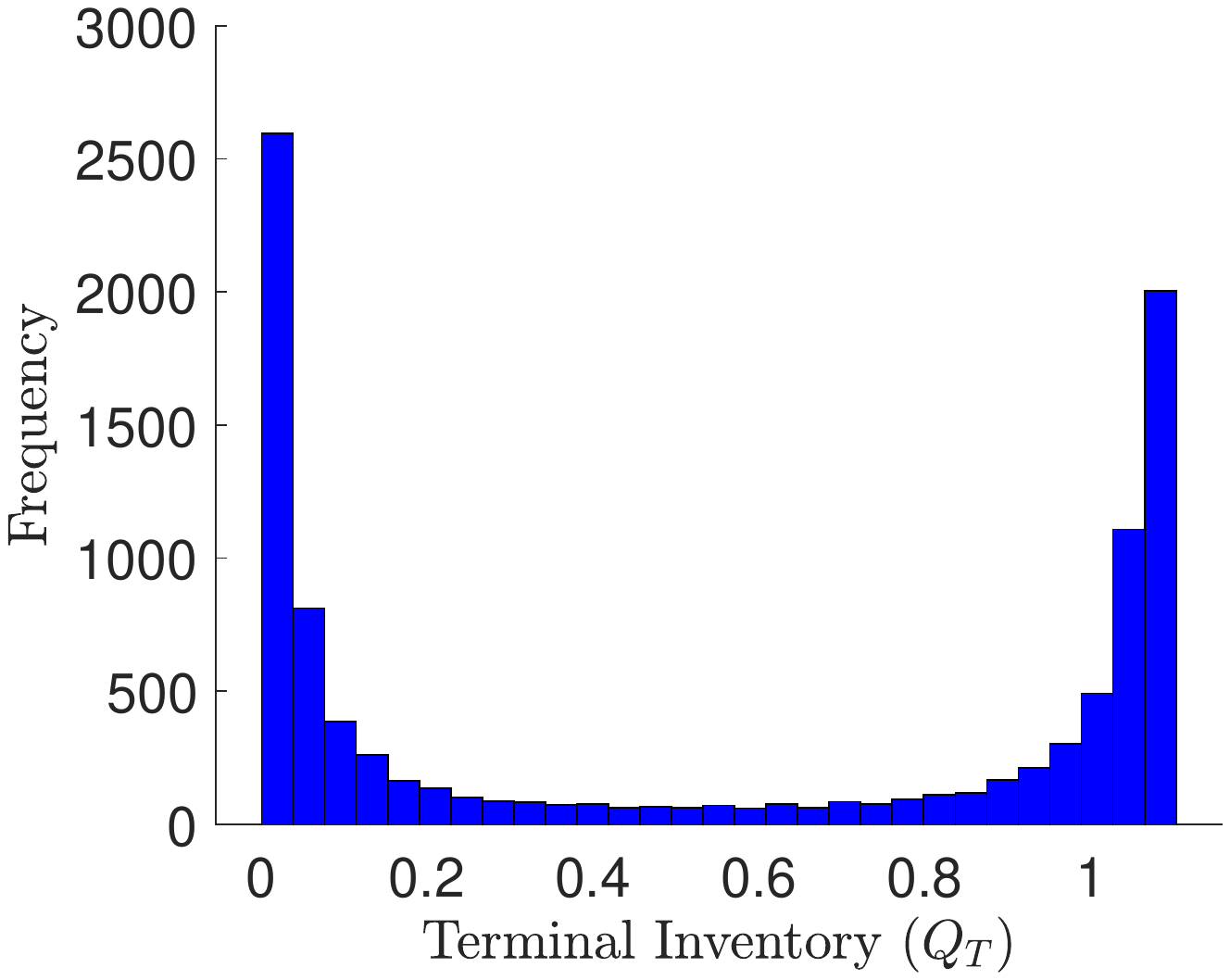}}\hspace{15mm}
		{\includegraphics[trim=140 240 140 240, scale=0.48]{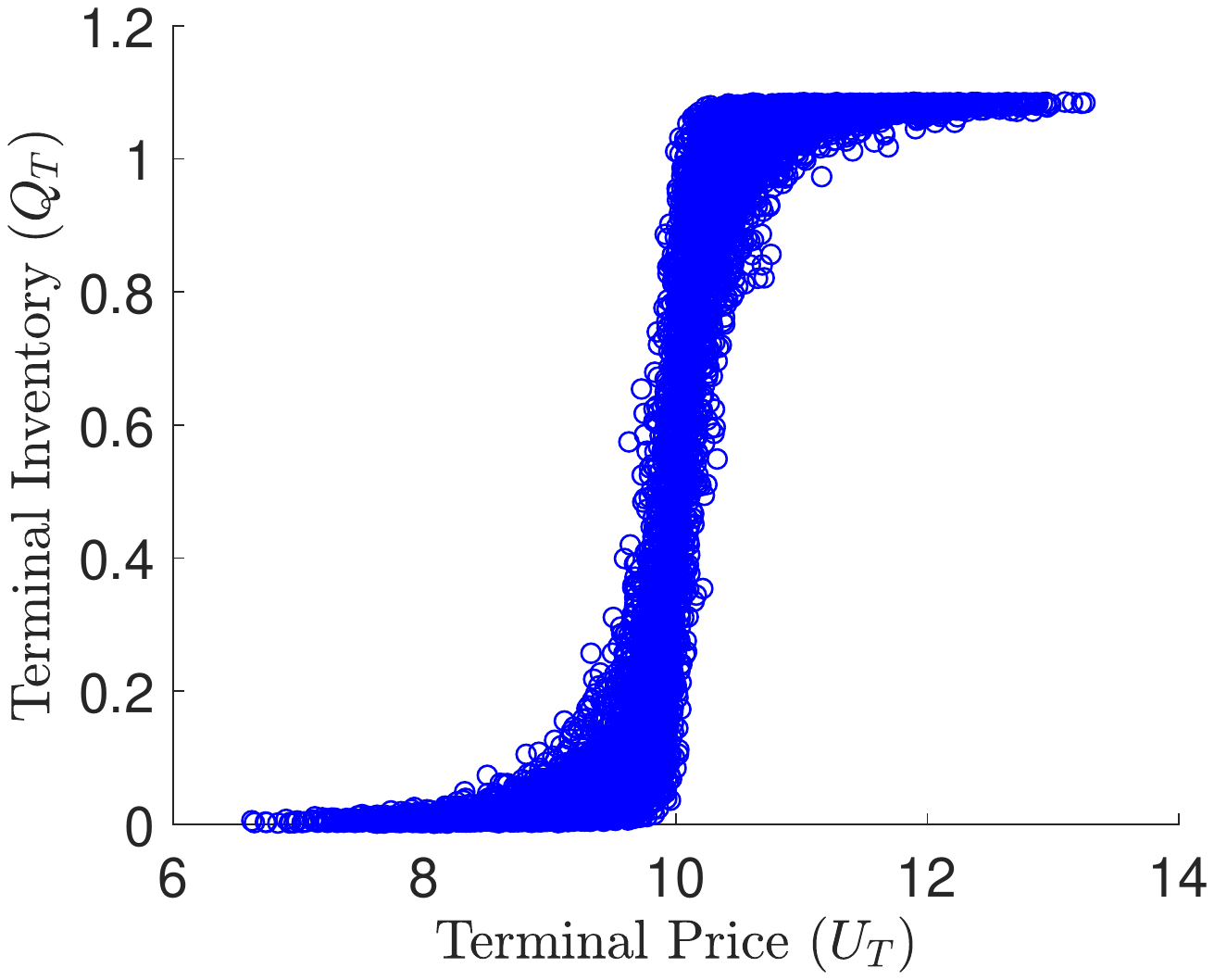}}
	\end{center}
	\vspace{-1em}
	\caption{Distribution of agent's terminal inventory and dependence of terminal inventory on terminal value of the non-tradable risk factor. Parameters are identical to those in Figure \ref{fig:inventory_c_1}. Number of simulations is $M = 10,000$. \label{fig:dist_c_1}}
\end{figure}

\subsubsection{Effect of Risk-Aversion}

Here, we set $c=0$ to consider only the effect of risk-aversion. We apply Proposition \ref{prop:closed_form} directly to compute the approximate optimal trading speed in closed-form. The paths of $U$ shown in the right panel of Figure \ref{fig:inventory_gamma_1} are the same as the unaffected paths from the previous example. That is, the realizations of the two Brownian motions are the same, but due to cross-price impact the actual paths are different. The magnitude of the difference is imperceptible in this example.

The general effect of risk-aversion in this example is to take a short position in the traded asset in a gradual manner, and then part way through the trading period to buy back that position and end with inventory close to zero. This is expected from a risk-averse agent when the payoff is a call option and the two assets have positive instantaneous correlation. The short position tends to decrease the variability in the overall holdings, which  consists of the traded asset and the option.

If we compare the results in Figure \ref{fig:inventory_gamma_1} with those in Figure \ref{fig:inventory_c_1}, we see that the effect of risk-aversion is opposite to the effect of cross-price impact. A positive cross-price impact parameter incentivizes the agent to acquire a long position in the traded asset whereas risk-aversion will always give incentive to short. In addition, the amount the agent desires to short depends on her estimate of the probability that the option will end up in-the-money or not. If it is very likely that the option ends in-the-money, then she will acquire a larger short position. If the $U$ moves in such a way that the agent expects with great confidence that the option will expire out-of-the-money, then she ceases the acquisition of the short position early and trades to target zero inventory at the end of the trading period. These two extreme opposite outcomes are seen by comparing the two price paths in Figure \ref{fig:inventory_gamma_1} which end at the highest and lowest points. The remaining paths have an intermediate behavior.

Also of notable interest is that the variance of the inventory position is greatest at the half way point of the trading period.
\begin{figure}
	\begin{center}
		{\includegraphics[trim=140 240 140 240, scale=0.42]{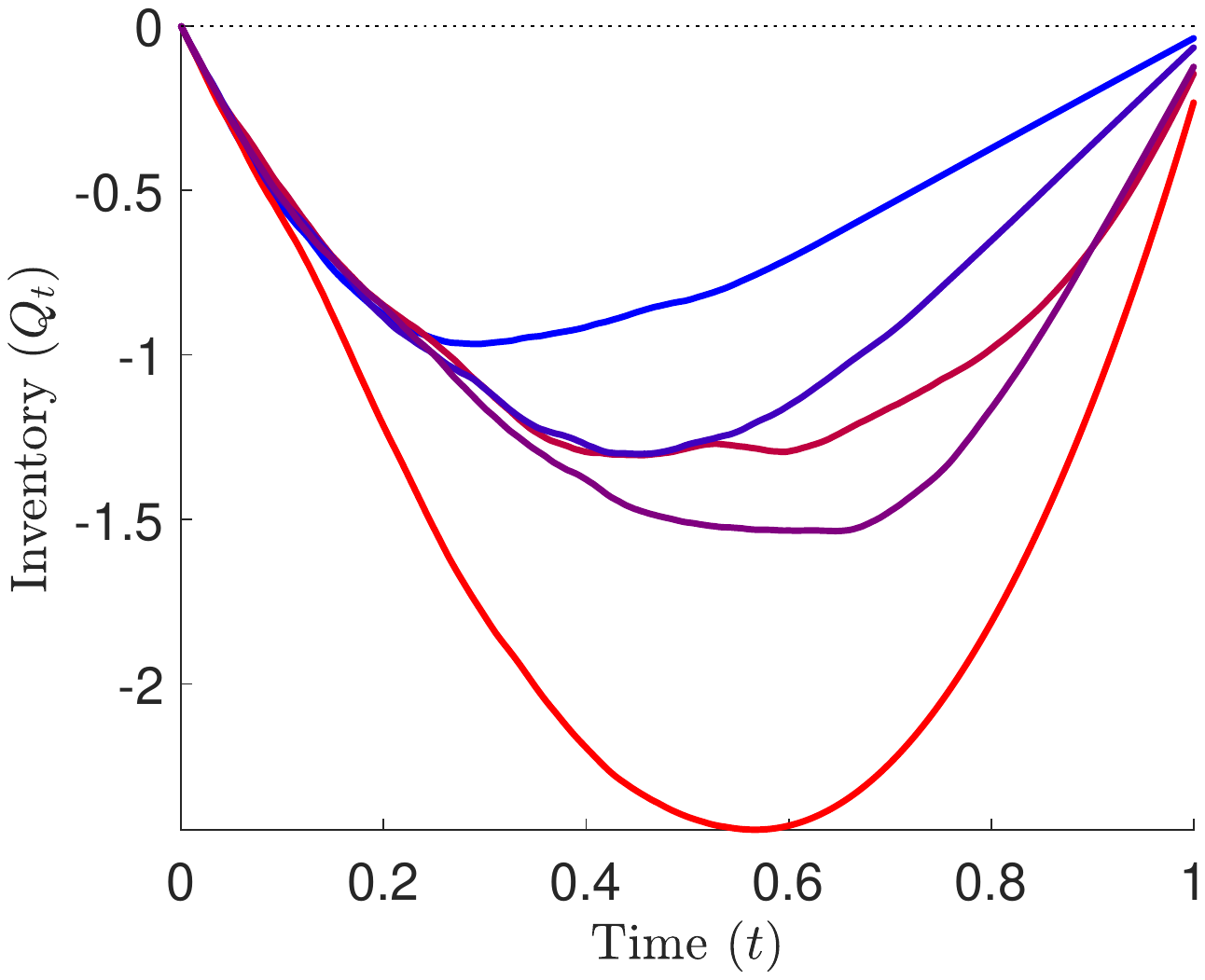}}\hspace{8mm}
		{\includegraphics[trim=140 240 140 240, scale=0.42]{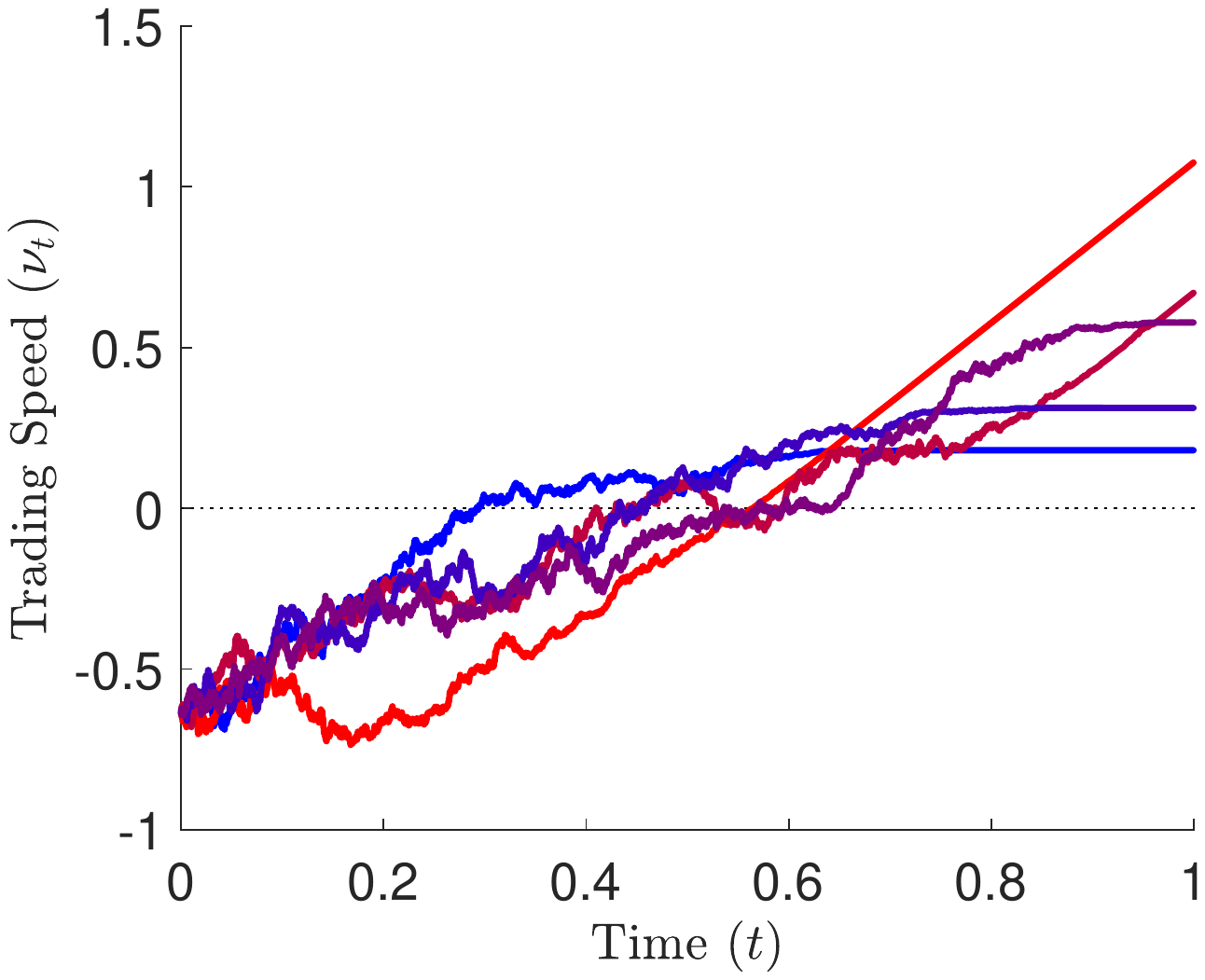}}\hspace{8mm}
		{\includegraphics[trim=140 240 140 240, scale=0.42]{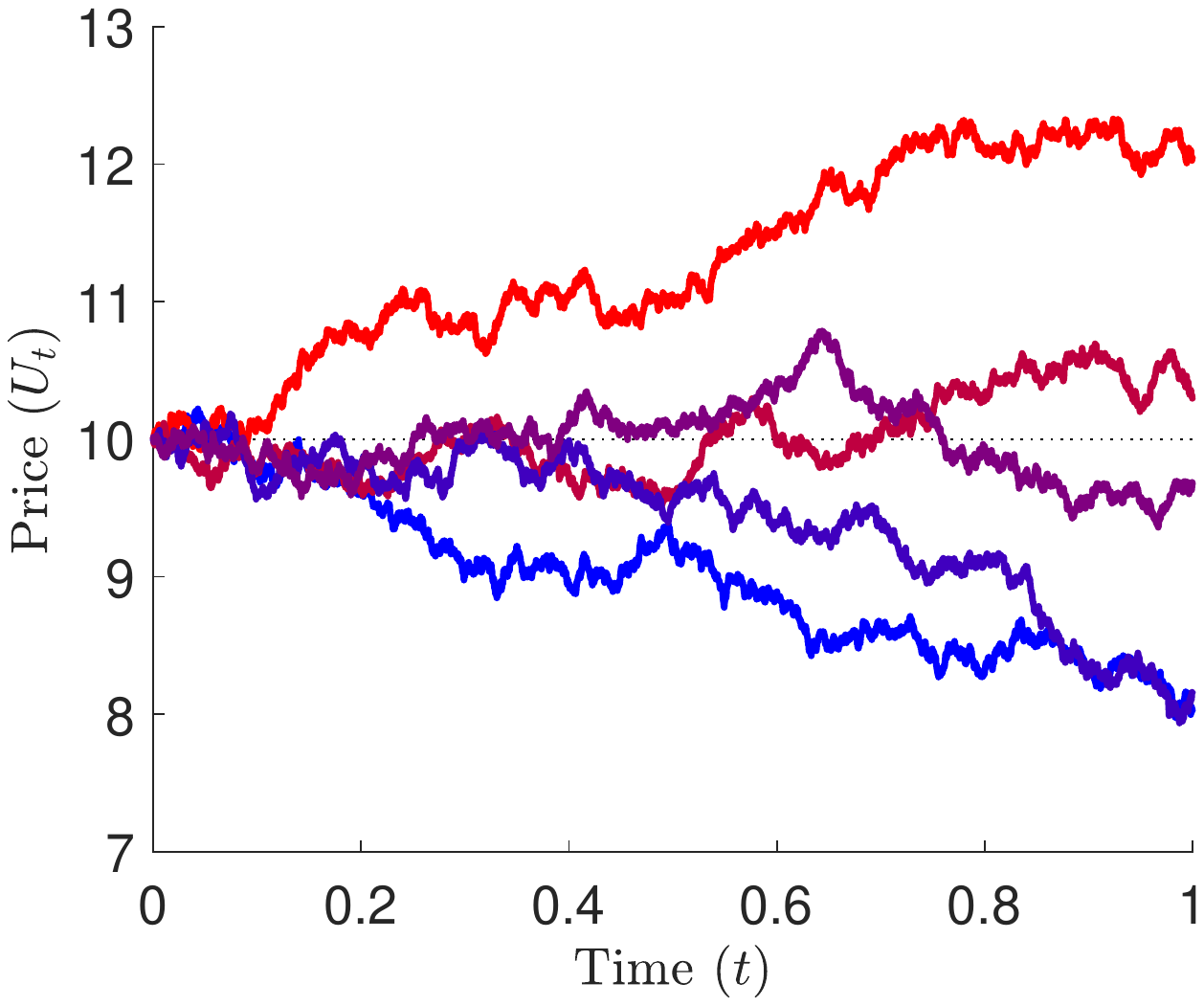}}
	\end{center}
	\vspace{-1em}
	\caption{Agent's optimal inventory position over time for $5$ simulated paths of $U$. In the left panel the agent's inventory position is displayed. The middle panel shows the agent's trading speed. The right panel shows the value of the non-tradable risk factor. Colors are chosen based on the final value of $U_T$ (larger values are red, smaller values are blue). Other model parameters are identical to those in Figure \ref{fig:inventory_c_1} except $c = 0$ and $\gamma = 10^{-3}$. \label{fig:inventory_gamma_1}}
\end{figure}

\begin{figure}
	\begin{center}
		{\includegraphics[trim=140 240 140 240, scale=0.48]{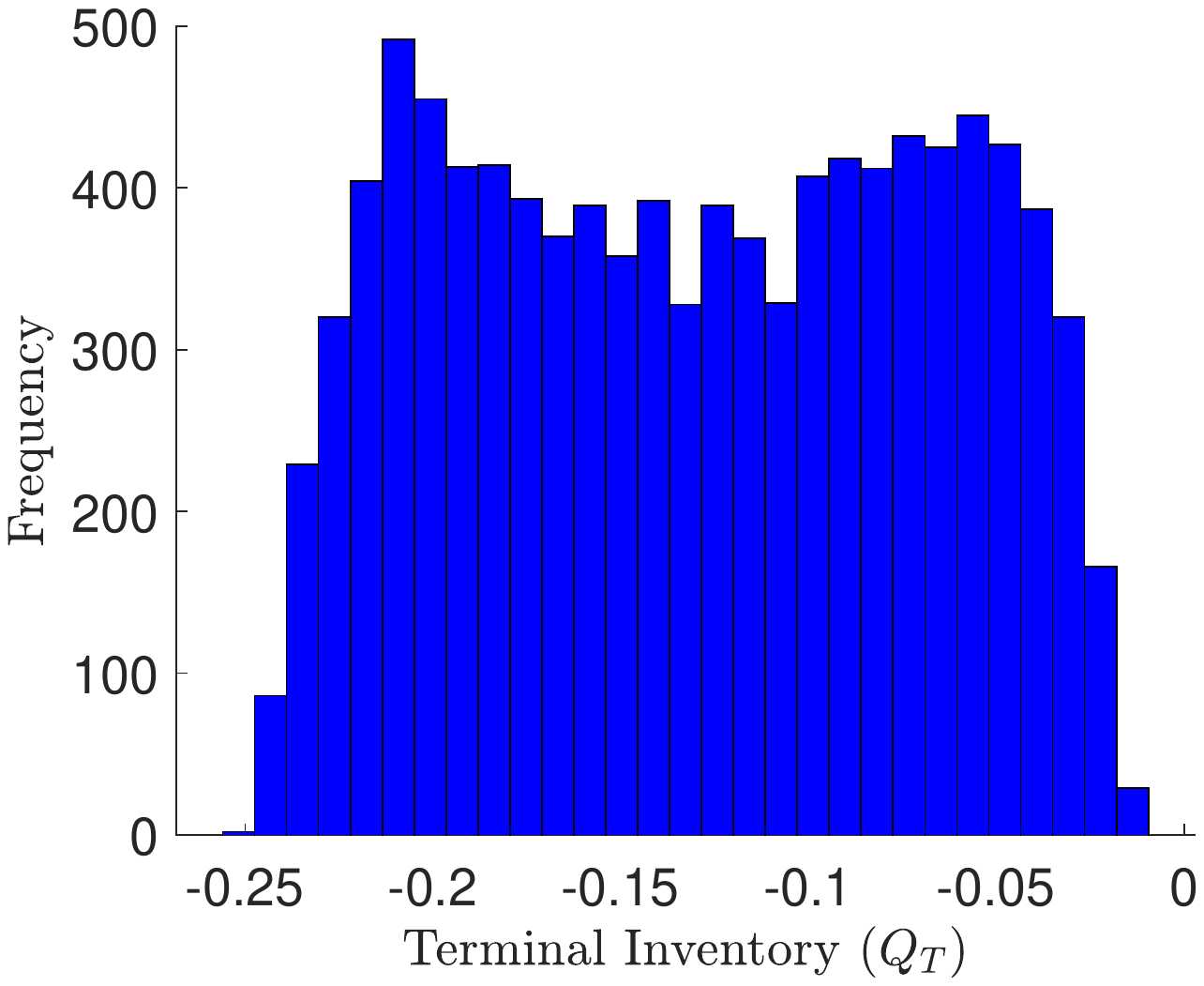}}\hspace{15mm}
		{\includegraphics[trim=140 240 140 240, scale=0.48]{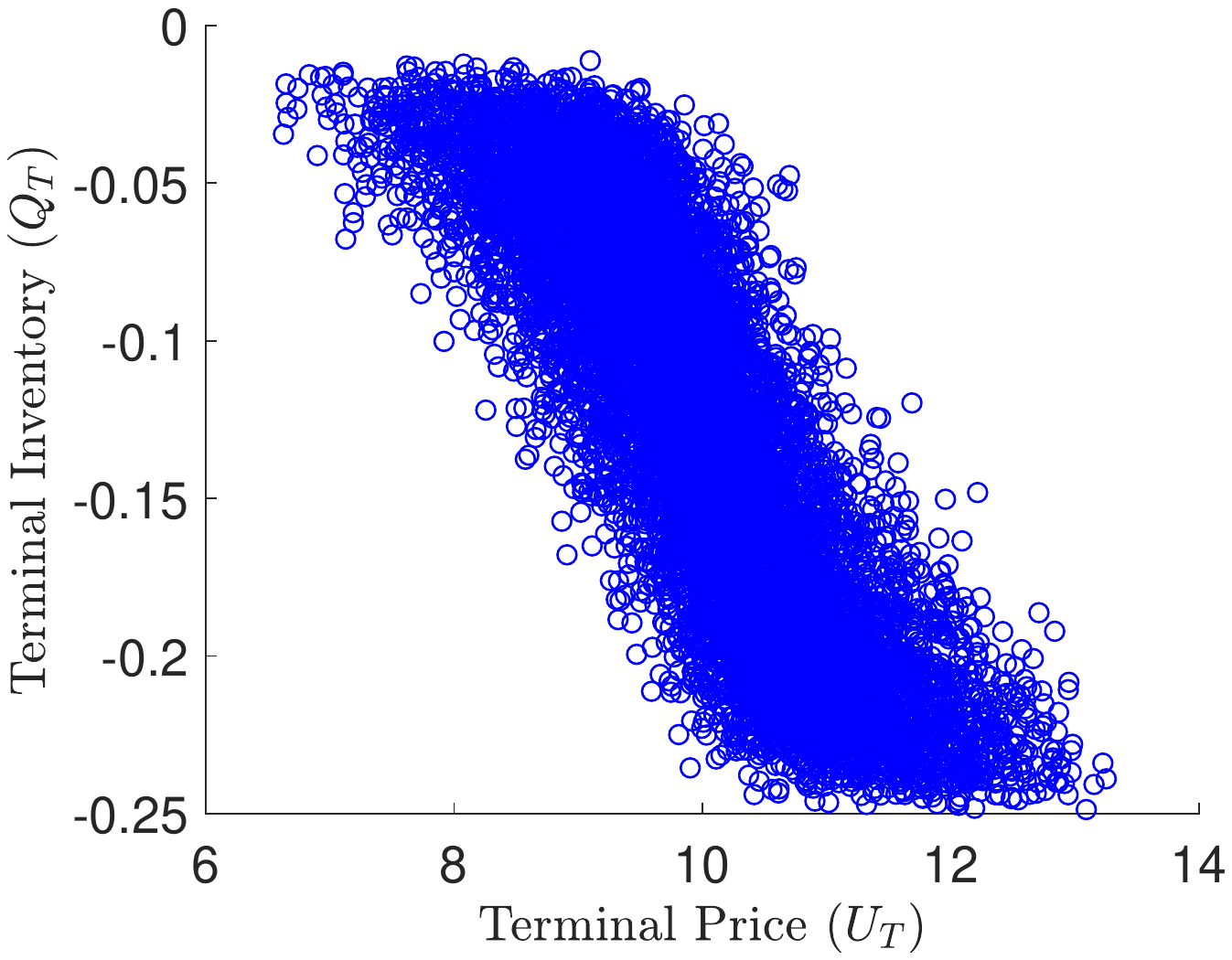}}
	\end{center}
	\vspace{-1em}
	\caption{Distribution of agent's terminal inventory and dependence of terminal inventory on the terminal value of the non-tradable risk factor. Parameters are identical to those in Figure \ref{fig:inventory_gamma_1}. Number of simulations is $M = 10000$. \label{fig:dist_gamma_1}}
\end{figure}

\subsubsection{Simultaneous Effect of Risk-Aversion and Cross Price Impact}

It is of interest to consider the behavior of the strategy when the effects of cross-price impact and risk-aversion are present because  these effects  tend to oppose each other. Figure \ref{fig:inventory_mix_1} shows the trading strategy and associated inventory path when both cross-price impact and risk-aversion are present. We see a combination of the counteracting effects that take place, namely the agent acquires a short position over most of the trading period to mitigate risk, but rather than liquidating this position she has incentive to acquire a long position before maturity if she is confident the option will expire in-the-money.
\begin{figure}
	\begin{center}
		{\includegraphics[trim=140 240 140 240, scale=0.42]{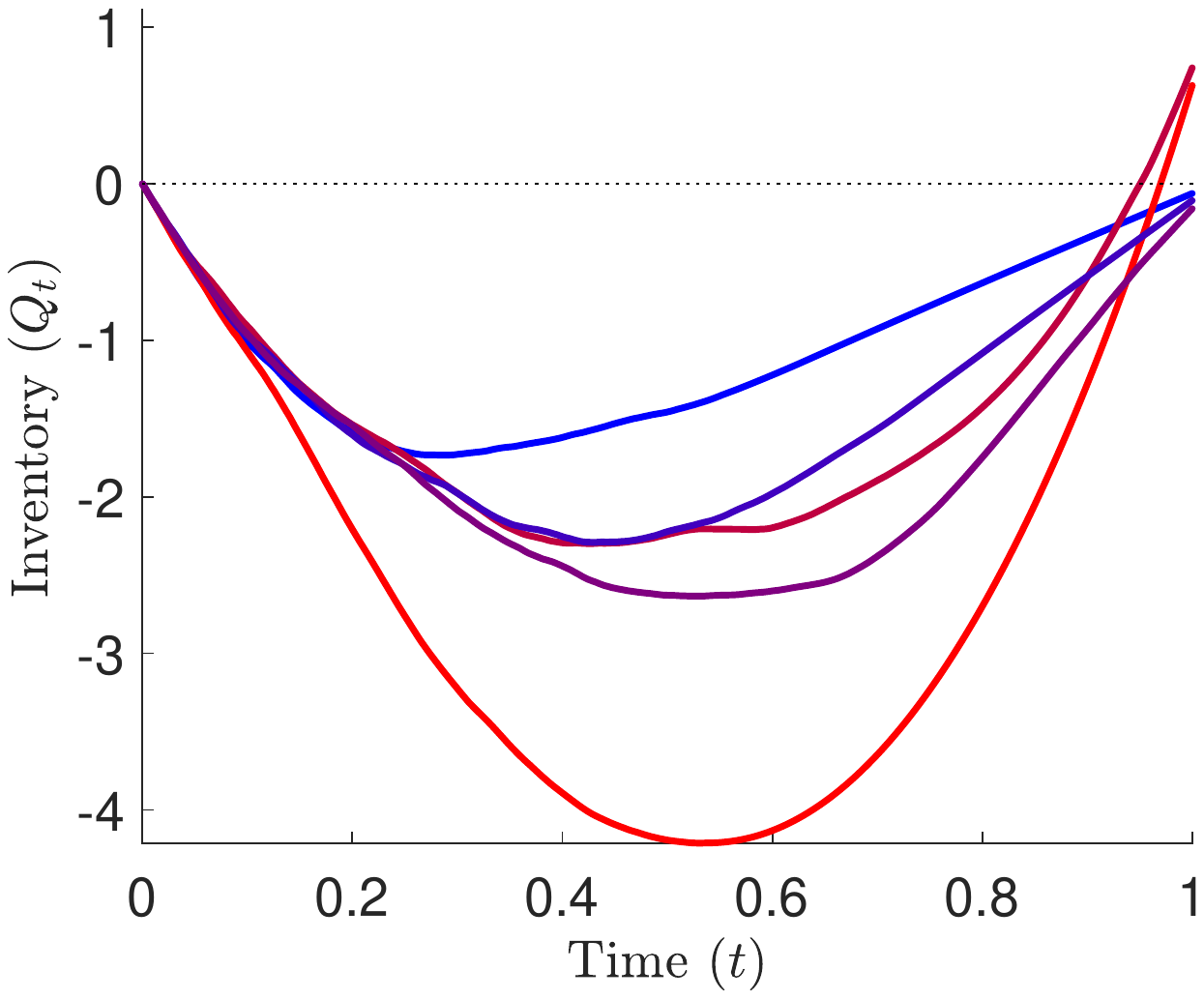}}\hspace{8mm}
		{\includegraphics[trim=140 240 140 240, scale=0.42]{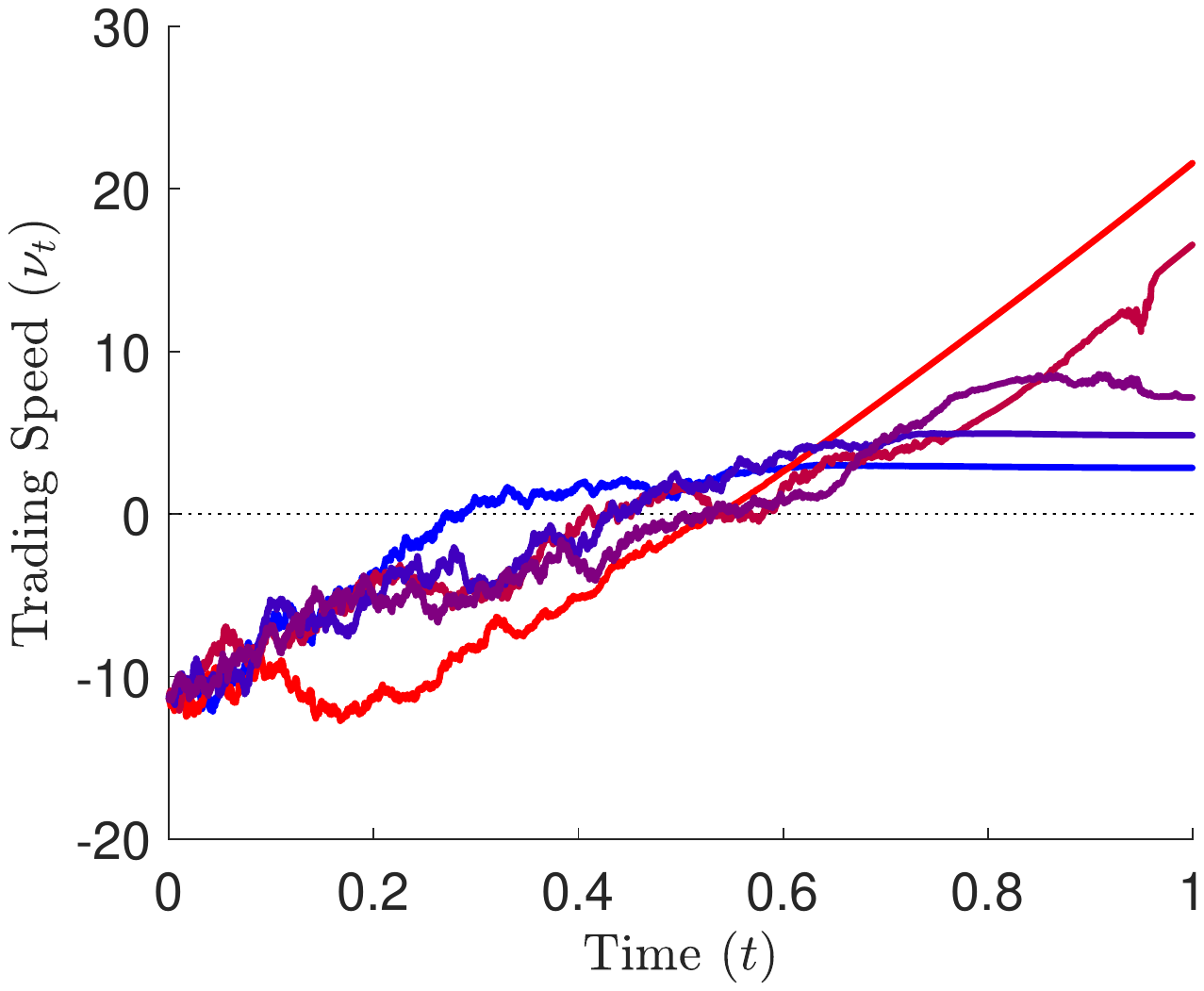}}\hspace{8mm}
		{\includegraphics[trim=140 240 140 240, scale=0.42]{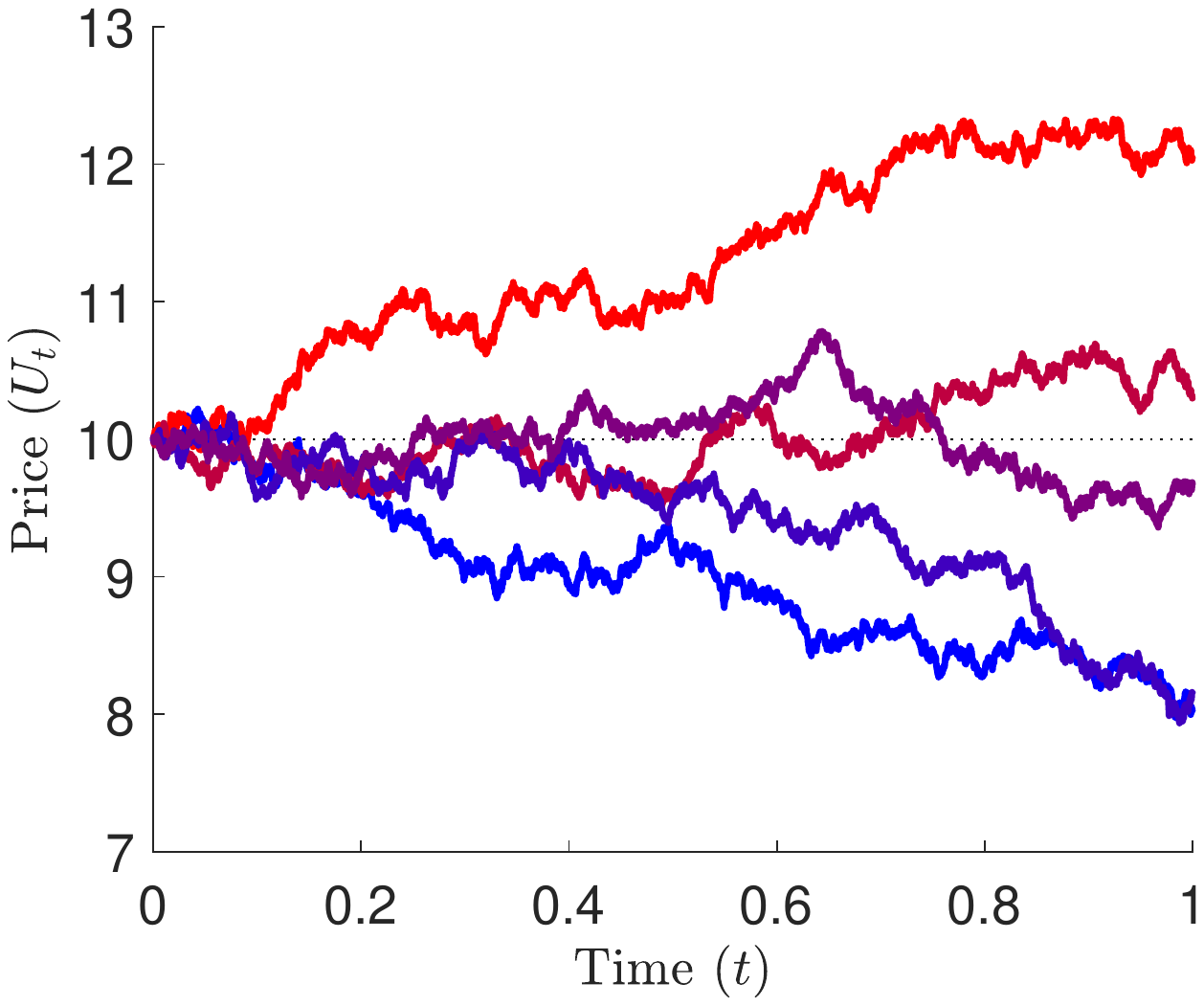}}
	\end{center}
	\vspace{-1em}
	\caption{Agent's optimal inventory position over time for $5$ simulated paths of $U$. In the left panel the agent's inventory position is displayed. The middle panel shows the agent's trading speed. The right panel shows the value of $U_T$. Colors are chosen based on the final value of $U_T$ (larger values are red, smaller values are blue). Other model parameters are identical to those in Figure \ref{fig:inventory_c_1} except $c = 10^{-3}$ and $\gamma = 2\cdot 10^{-3}$. \label{fig:inventory_mix_1}}
\end{figure}

\begin{figure}
	\begin{center}
		{\includegraphics[trim=140 240 140 240, scale=0.48]{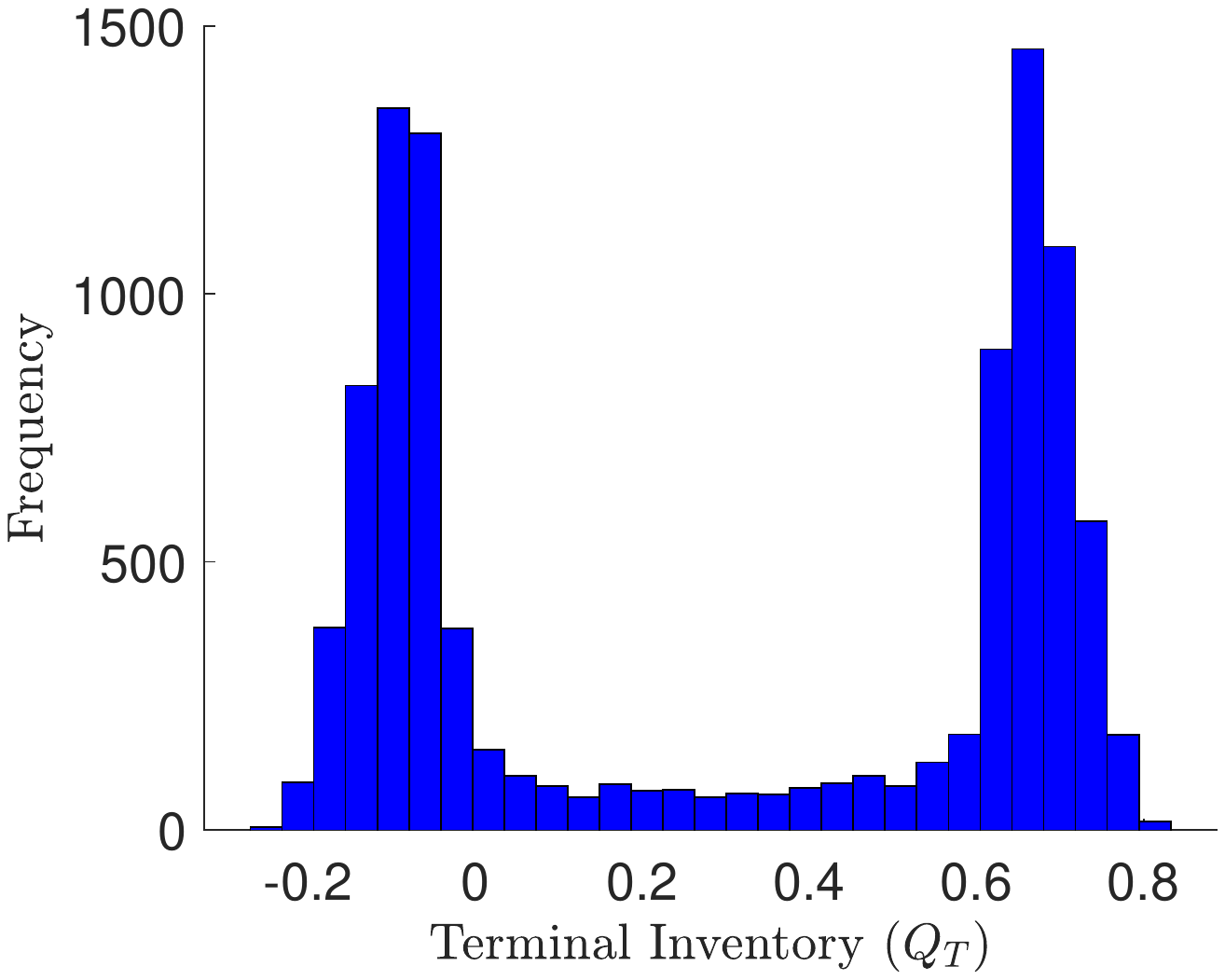}}\hspace{15mm}
		{\includegraphics[trim=140 240 140 240, scale=0.48]{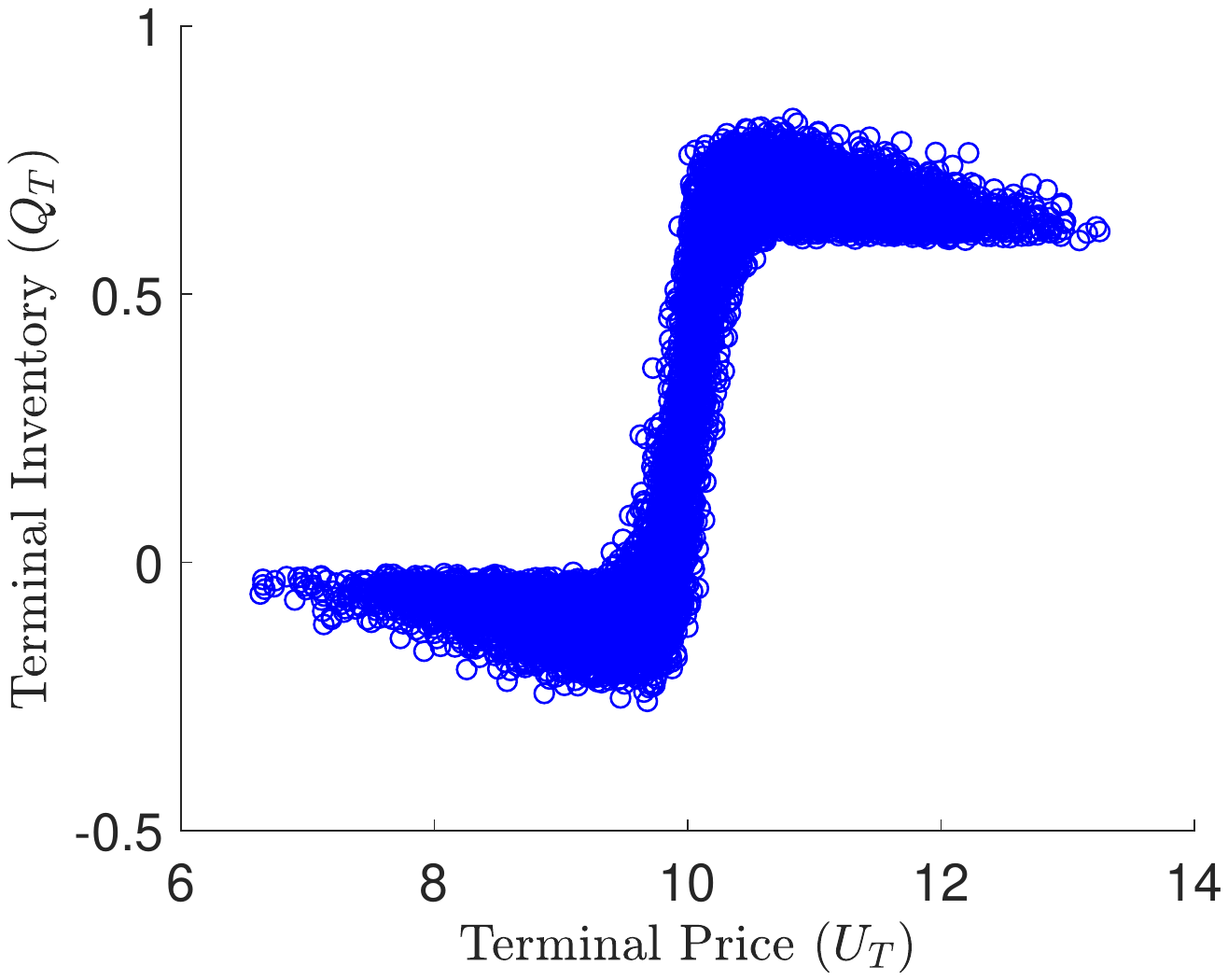}}
	\end{center}
	\vspace{-1em}
	\caption{Distribution of agent's terminal inventory and dependence of terminal inventory on $U_T$. Parameters are identical to those in Figure \ref{fig:inventory_mix_1}. Number of simulations is $M = 10000$. \label{fig:dist_mix_1}}
\end{figure}

The counteracting effects of the two expansion parameters also leads to interesting behavior regarding the distribution of the agent's inventory through time. Many  algorithms that trade off    expected returns and risks or trading penalties have their lowest variance at the endpoints of the trading period (the variance will be zero at time $0$ because the agent knows what their inventory holding is). Low variance at the end of the trading period is generally expected for various reasons, such as the fact that a trading target is acquired or nearly acquired, or because non-zero inventory positions are undesirable over night. Figure \ref{fig:sample_stats} displays the sample mean and standard deviation of the agent's inventory as a function of time.

\begin{figure}
	\begin{center}
		{\includegraphics[trim=140 240 140 240, scale=0.48]{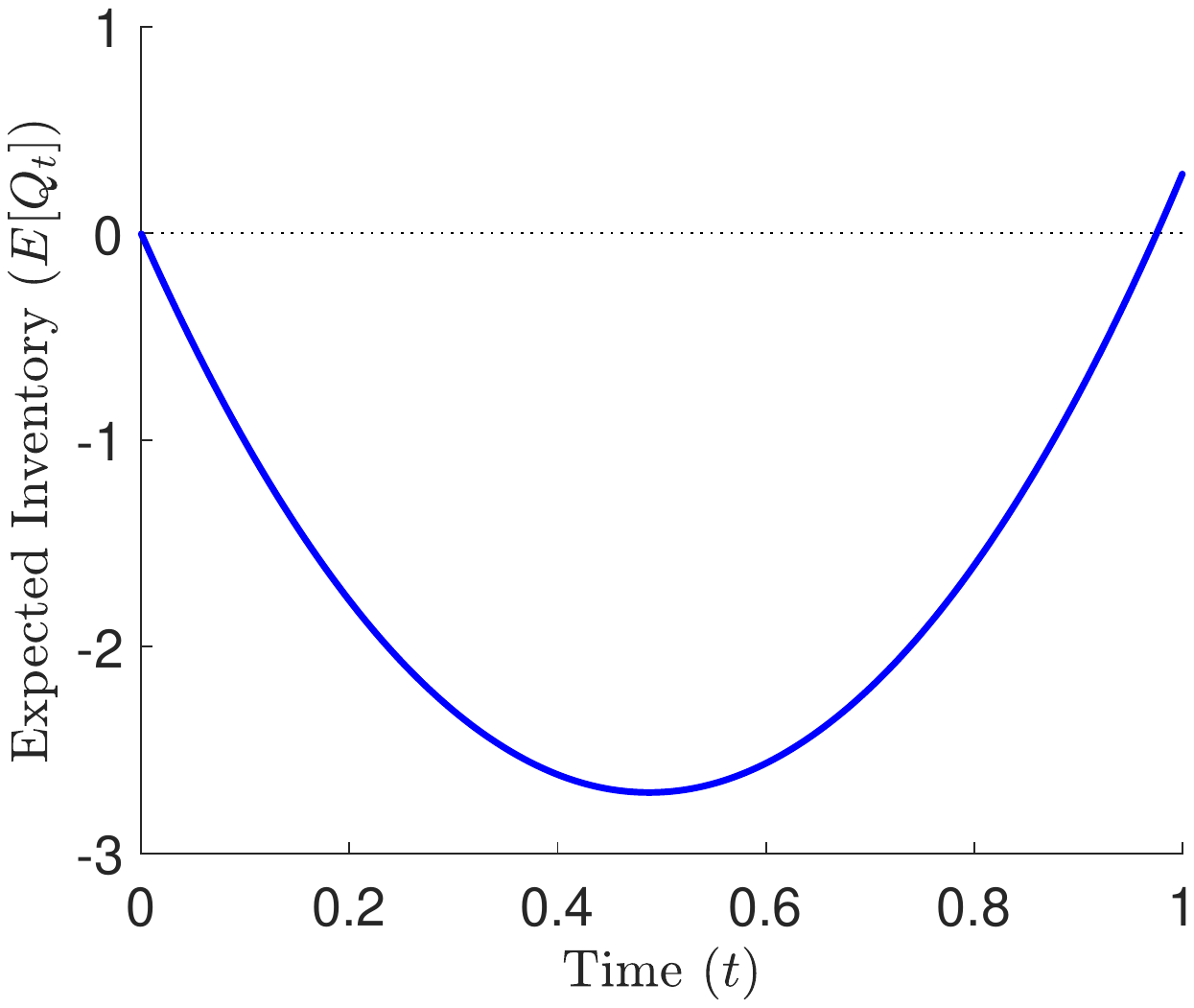}}\hspace{15mm}
		{\includegraphics[trim=140 240 140 240, scale=0.48]{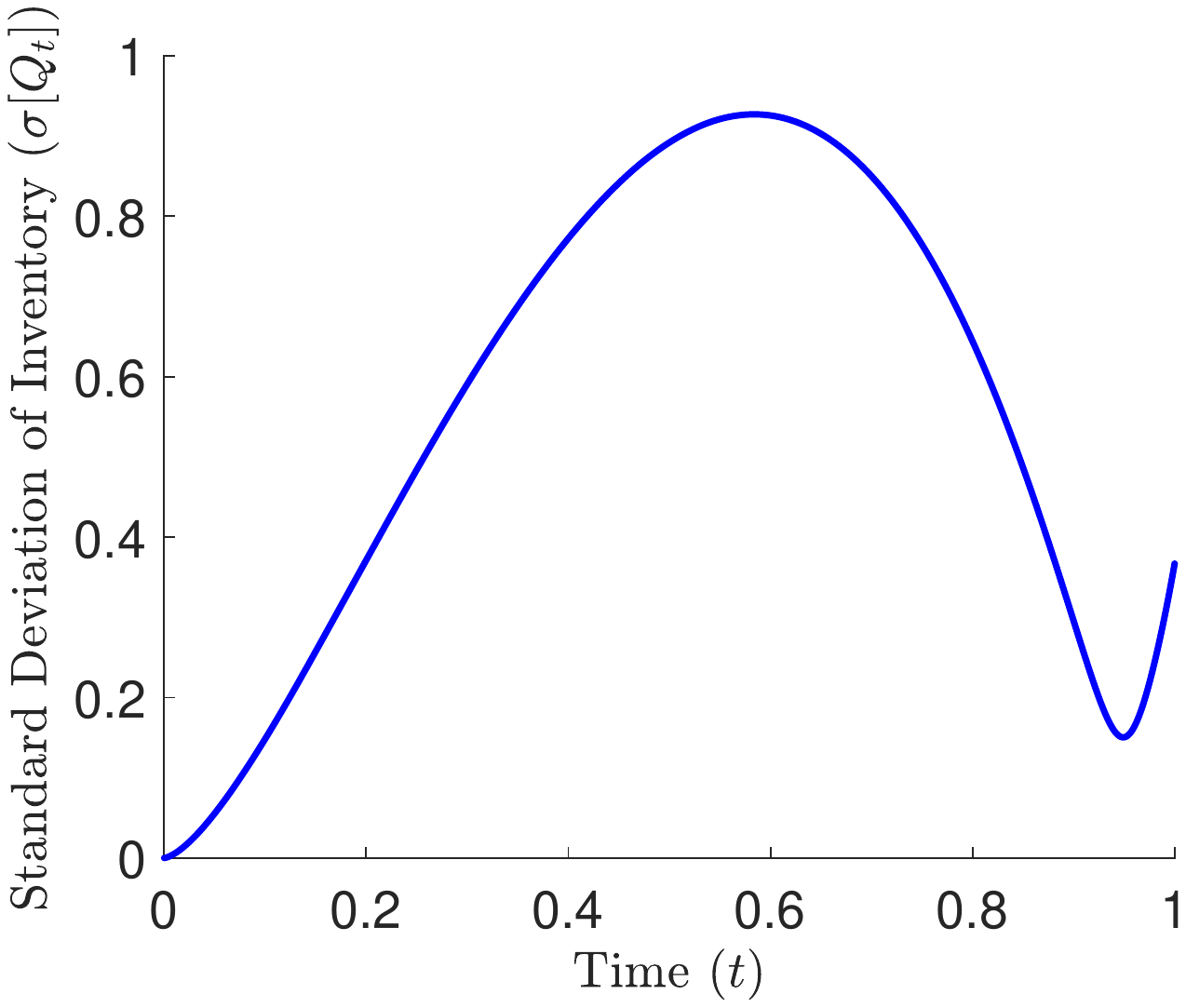}}
	\end{center}
	\vspace{-1em}
	\caption{Sample mean and standard deviation of agent's inventory over the course of the trading period. Parameters are identical to those in Figure \ref{fig:inventory_mix_1}. Number of simulations is $M = 10000$. \label{fig:sample_stats}}
\end{figure}

\section{Conclusions}\label{sec:conclusion}

We solved a problem of an agent who  has exposure to a risk-factor that cannot be directly traded. The agent can trade in an asset which is correlated to the risk-factor to reduce risk exposure. In addition, the agent's trades have an effect on the immediate and future price of the traded asset as well as the future value of the non-tradable risk factor. When the exposure to the factor is linear we solve for the agent's value function and optimal trading strategy in closed-form. This closed-form consists of several terms that illustrate how the agent trades off the risks and rewards of the combination of the positions in the two assets.

When the exposure to the non-tradable risk factor has non-linear dependence we derive an approximation to the agent's value function which holds when the cross-price impact and risk-aversion parameters are small. In addition, an observation about this expansion approximation allows us to assert that the agent has a simple trading strategy (in closed-form) which is also an approximation to the optimal strategy. Given the trading strategy which is optimal when the factor exposure is linear and interpreting the non-linear exposure as a European option written on the non-tradable risk factor, the agent should trade at time $t$ as if she were holding a number of units of the non-tradable risk factor that is equal to the option's delta at time $t$. The parameters of the expansion, cross-price impact and risk-aversion, affect the optimal trading strategy in qualitatively different ways, inducing either long or short positions depending on which effect is stronger.

\section{Proofs}

\section*{Appendix A: Proofs for Section \ref{sec:linear_exposure} (Linear Exposure)}
\label{sec:proofs_linear_exposure}

\subsection{Proof of Proposition \ref{prop:value_function}}
	The form of the terminal conditions and the coefficients of the HJB equation suggest that we make the ansatz $H(t,x,q,S,U) = -e^{-\gamma\,(x + q\,S + \mathfrak{N}\,U + h(t,q))}$. Substitute the expression into the HJB equation to obtain an equation satisfied by $h(t,q)$:
	\begin{align}
		\partial_th + \sup_\nu\biggl\{
		\nu\,\partial_q h - k\,\nu^2 + \mu\, q + b\,q\,\nu - \tfrac{1}{2}\,\sigma^2\,\gamma q^2 + (\beta + c\,\nu)\,\mathfrak{N} - \tfrac{1}{2}\,\eta^2\,\gamma\,\mathfrak{N}^2 - \rho\,\sigma\,\eta\,\gamma\, \mathfrak{N}\,q\biggr\} = 0\,,\label{eqn:ODE_h}
	\end{align}
	subject to terminal condition $h(T,q) = -\alpha\, q^2$. The   supremum is  obtained at
	\begin{align}
		\nu^* &= \tfrac{1}{2\,k}\,\left(\partial_q h + b\,q + c\,\mathfrak{N}\right)\,.
		\label{eqn:nu-feedback}
	\end{align}
	Substitute the optimal control into equation \eqref{eqn:ODE_h} to write the following non-linear PDE:
	\begin{align}
		\partial_th + \mu \,q - \frac{1}{2}\,\sigma^2\,\gamma \,q^2 + \beta\,\mathfrak{N} - \frac{1}{2}\,\eta^2\,\gamma\,\mathfrak{N}^2 - \rho\,\sigma\,\eta\,\gamma \,\mathfrak{N}\,q + \frac{(\partial_qh + b\,q + c\,\mathfrak{N})^2}{4\,k} = 0\,.\label{eqn:ODE_h2}
	\end{align}
	Once again, based on the form of the coefficients and the terminal conditions for $h$, we suggest the following form:
	\begin{align}
		h(t,q) = h_0(t) + h_1(t)\,q + h_2(t)\,q^2\,.\label{eqn:h_form}
	\end{align}
	Substitute this form into equation \eqref{eqn:ODE_h2} and group by like powers of $q$ gives the following system of equations:
	\begin{subequations}
		\addtolength{\jot}{3pt}\label{eqn:ODE_h_system}
		\begin{align}
			h_0'(t) + \beta\,\mathfrak{N}
			- \tfrac{1}{2}\,\eta^2\,\gamma\,\mathfrak{N}^2
			+ \tfrac{1}{4\,k}\,(h_1(t) + c\,\mathfrak{N})^2 &= 0\,,\label{eqn:ODE_h_0}\\
			h_1'(t) + \mu - \rho\,\sigma\,\eta\,\gamma\,\mathfrak{N}
			+ \tfrac{1}{2\,k}\,(h_1(t) + c\,\mathfrak{N})(2\,h_2(t) + b) &= 0\,,\label{eqn:ODE_h_1}\\
			h_2'(t) - \tfrac{1}{2}\,\sigma^2\,\gamma + \tfrac{1}{4\,k}\,(2\,h_2(t) + b)^2 &= 0\,,\label{eqn:ODE_h_2}
		\end{align}
	\end{subequations}
	subject to the terminal conditions $h_0(T) = 0$, $h_1(T) = 0$, and $h_2(T) = -\alpha$.
	Equation \eqref{eqn:ODE_h_2} is uncoupled and of Riccati type, and may be solved explicitly. One may check that the solution is given by \eqref{eqn:h_2}, it can be substituted into equation \eqref{eqn:ODE_h_1}, and the solution of this equation can be checked to be given by \eqref{eqn:h_1}. \qed

\subsection{Proof of Theorem \ref{prop:optimal_control}}
	Given the explicit form of the candidate solution in Proposition \ref{prop:value_function}, insert $h$ in equation \eqref{eqn:h_form} into \eqref{eqn:nu-feedback}, so that
	\[
		\nu^*(t,q)= \tfrac{1}{2\,k}\left(\,c\,\mathfrak{N} + h_1(t) + (2\,h_2(t) + b)\,q\right).
	\]
	Assumption \ref{ass:ass} ii) (recall: $2\,\alpha - b > 0$) implies that $h_1$ and $h_2$ are bounded, thus the ODE $dQ_t^{\nu^*}=\nu^*_t\,dt$ has a solution for all $t\in[0,T]$. It is straightforward but tedious to show that the solution is given by \eqref{eqn:optimal_inventory}. The solution $Q_t^{\nu^*}$ is deterministic, therefore it is bounded, and  so is $\nu_t^* = \nu^*(t,Q_t^{\nu^*})$, thus $\int_0^T (\nu_u)^2\,du<+\infty$. Hence, as the solution to the associated HJB equation is classical and the feedback-form strategy is admissible, the strategy is indeed the one we seek, and the solution in Proposition \ref{prop:value_function} is indeed the value function.
	\qed

\subsection{Proof of Proposition \ref{prop:long_horizon}}
	Substitute $t=\kappa\, T$ into equation \eqref{eqn:optimal_inventory} and perform some elementary algebra to obtain
	\begin{align*}
		Q_{\kappa T}^{\nu^*} &=  \biggl(\frac{\zeta \,k(\phi^- - \phi^+)}{4\,\omega^2} + \frac{c \,\mathfrak{N}}{2}\biggr)\biggl(\frac{e^{\frac{\omega}{k} (\kappa-1)\, T} - e^{-\frac{\omega}{k} (\kappa+1)\, T}}{\phi^+ + \phi^-e^{-\frac{2\,\omega}{k} \,T}}\biggr)\\
		& \hspace{15mm} - \frac{\zeta k}{2\,\omega^2}\biggl(\frac{\phi^+e^{-\frac{\omega}{k} \,\kappa \,T} + \phi^-e^{-\frac{\omega}{k} (2-\kappa)\,T}}{\phi^+ + \phi^-e^{-\frac{2\,\omega}{k}\, T}} - 1\biggr)
+ Q_0\,\biggl(\frac{\phi^+e^{-\frac{\omega}{k} \,\kappa\, T} + \phi^-e^{-\frac{\omega}{k} (2-\kappa)\,T}}{\phi^+ + \phi^-e^{-\frac{2\,\omega}{k} \,T}}\biggr).
	\end{align*}
	Recall that $\omega = \sqrt{\frac{k\,\gamma\,\sigma^2}{2}}$ and $\phi^\pm = \omega \pm \alpha \mp \frac{b}{2}$, and that we assume $2\,\alpha - b > 0$. As we  restrict $\kappa\in(0,1)$, as $T\rightarrow \infty$ the numerator of each fraction above with exponential terms go to zero, and the denominators go to $\omega + \alpha - \frac{b}{2} > 0$. The only remaining term gives
	\begin{align*}
		\lim_{T\rightarrow\infty} Q_{\kappa T}^{\nu^*} &= \frac{\zeta\, k}{2\,\omega^2} = \frac{\mu - \gamma\,\rho\,\sigma\,\eta\,\mathfrak{N}}{\gamma\,\sigma^2}\,,
	\end{align*}
	as desired. Similarly, as $k\downarrow 0$ the numerators approach zero and the denominators approach $\alpha - b/2 > 0$. There is again a single remaining term giving
	\begin{align*}
		\lim_{k\rightarrow 0} Q_{\kappa T}^{\nu^*} &= \lim_{k\rightarrow 0} \frac{\zeta\, k}{2\,\omega^2} = \frac{\mu - \gamma\,\rho\,\sigma\,\eta\,\mathfrak{N}}{\gamma\,\sigma^2}\,.
	\end{align*}
	\qed

\section*{Appendix B: Proofs for Section \ref{sec:non_linear_exposure} (Non-Linear Exposure)}
\label{sec:proofs_non_linear_exposure}

{
	Each of the three main proofs in this section (for Theorems \ref{prop:asymptotic_approximation} and \ref{prop:approx_nu}, and Proposition \ref{prop:closed_form}) is broken into multiple parts. The main component of each proof is to perform an approximate verification argument. These proceed by applying Ito's Lemma to a candidate approximation of the value function where the underlying processes are controlled by a candidate approximation of the optimal control. The desired approximation results then amount to bounding the magnitude of the error with respect to optimality and showing that this error tends to zero at the appropriate rate.
	
	The verification in Theorem \ref{prop:asymptotic_approximation} shows that our candidate approximation of the value function is accurate up to second order. The verifications in Theorem \ref{prop:approx_nu} and Proposition \ref{prop:closed_form} show that the candidate approximation is accurate up to second order with respect to the performance criteria of both of our candidate controls. Combining these results means that these performance criteria are also accurate up to second order to the value function.
}
\subsection{Proof of Theorem \ref{prop:asymptotic_approximation}}
	The proof proceeds in two parts. First we substitute the formal expansion of \eqref{eqn:h_expansion} into equation \eqref{eqn:h_PDE} (with $c$ and $\gamma$ replaced by $\theta \,c$ and $\theta\, \gamma$) and group terms according to the zero, first, and second order in $\theta$. Second, we show that this formal second order expansion is valid in the sense that the limit in \eqref{eqn:error_limit} holds by performing a verification argument.

	\underline{Part I} (formal solution): Substituting \eqref{eqn:h_expansion} into \eqref{eqn:h_PDE} and setting terms proportional to $\theta^0$ to vanish gives
	\begin{equation}
		\left\{
		\begin{split}
			\partial_t h_0 + \mu\, q + \beta\, \partial_U h_0
			+ \tfrac{1}{2}\,\eta^2\,\partial_{UU}h_0
			+ \tfrac{1}{4\,k}\,(\partial _q h_0 + b\,q)^2 &= 0\,,\\
			h_0(T,q,U) &= -\alpha\, q^2 + \psi(U)\,.
		\end{split}
		\right.	\label{eqn:pf-h_0}
	\end{equation}
	It is easily verified that equation \eqref{eqn:pf-h_0} has solution given by
\begin{subequations}
	\begin{align}
		h_0(t,q,U) &= f_0(t) + f_1(t)\,q + f_2(t)\,q^2 + g(t,U)\,,\\
		f_0(t) &= \tfrac{1}{4k}\,\int_t^T\,(f_1(s))^2\,ds\,,\label{eqn:f_0}\\
		f_1(t) &= \frac{\mu\,(T-t)(4\,k+m\,(T-t))}{4\,k + 2\,m\,(T-t)}\,,\\
		f_2(t) &= \frac{-k\,m}{2\,k+m\,(T-t)} - \frac{b}{2}\,,\\
		g(t,U) &= \mathbb{E}[\,\psi(\tilde{U}_T)\,|\,\tilde{U}_t = U\,]\,,\label{eqn:g_pf}\\
		d\tilde{U}_t &= \beta\, dt + \eta \,dZ_t\,.
	\end{align}%
\end{subequations}%
	Similarly, grouping terms proportional to $\theta^1$ gives
	\begin{equation}
		\left\{
		\begin{split}
			c\,\left[\partial_t h_1 + \beta\, \partial_Uh_1 + \tfrac{1}{2}\,\eta^2\,\partial_{UU}h_1 + \tfrac{1}{2\,k}\,(\partial_qh_0 + b\,q)\,(\partial_qh_1 + \partial_U h_0)\right]\hspace{10mm}\\
			+ \gamma\,\left[\partial_t h_2 + \beta \,\partial_Uh_2 + \tfrac{1}{2}\,\eta^2\,\partial_{UU}h_2 + \tfrac{1}{2\,k}\,(\partial_q h_0 + bq)\,\partial_qh_2\right.
			\\
			\left.- \tfrac{1}{2}\,\sigma^2\,q^2 - \rho\,\sigma\,\eta\, q\,\partial_Uh_0 - \tfrac{1}{2}\,\eta^2\,(\partial_U h_0)^2\right]&= 0\,,\\
			c\,h_1(T,q,U) + \gamma\,h_2(T,q,U) &= 0\,.
		\end{split}
		\right.
		\label{eqn:pf-h-1}
	\end{equation}
	
	We seek solutions to equation \eqref{eqn:pf-h-1} that do not depend on $c$ or $\gamma$, hence, we set each term in square brackets in equation \eqref{eqn:pf-h-1} to zero independently.
	
	Thus, set the first square bracket in \eqref{eqn:pf-h-1} to zero, write $h_1(t,q,U)$ in the form $h_1(t,q,U) = \lambda_0(t,U) + \lambda_1(t,U)q$, and set $q^0$ and $q^1$ terms to vanish independently, and obtain
	\begin{equation}
		\left\{
		\begin{split}
			\partial_t\lambda_0 + \beta \partial_U\lambda_0 + \tfrac{1}{2}\,\eta^2\,\partial_{UU}\lambda_0 + \tfrac{1}{2\,k}\,f_1(\lambda_1 + \partial_U g) &= 0\,,\\
			\lambda_0(T,U) &= 0\,,
		\end{split}
		\right.
		\label{eqn:pf_lambda_0}
	\end{equation}
	and
	\begin{equation}
		\left\{
		\begin{split}
			\partial_t\lambda_1 + \beta \partial_U\lambda_1 + \tfrac{1}{2}\,\eta^2\,\partial_{UU}\lambda_1 + \tfrac{1}{2\,k}\,(2\,f_2 + b)\,\lambda_1 + \tfrac{1}{2\,k}\,(2\,f_2 + b)\,\partial_U g &= 0\,,\\
			\lambda_1(T,U) &= 0\,,
		\end{split}
		\right.
		\label{eqn:pf_lambda_1T}
	\end{equation}
	where $f_{0,1,2}(t)$ and $g(t,U)$ are given in equations \eqref{eqn:f_0} to \eqref{eqn:g_pf}. By the Feynman-Kac formula, equations \eqref{eqn:pf_lambda_0} and \eqref{eqn:pf_lambda_1T} have solutions given by
	\begin{align}
		\lambda_0(t,U) &= \mathbb{E}\biggl[\int_t^T\frac{f_1(s)}{2\,k}\biggl(\lambda_1(s,\tilde{U}_s)+\partial_U g(s,\tilde{U}_s)\biggr)ds\,\biggl|\,\tilde{U}_t=U\biggr]\,,\label{eqn:lambda_0_pf}\\
		\lambda_1(t,U) &= \frac{-m}{2\,k+m\,(T-t)}\,\mathbb{E}\biggl[\int_t^T\partial_U g(s,\tilde{U}_s)\,ds\,\biggl|\,\tilde{U}_t=U\biggr]\,.\label{eqn:lambda_1_pf}
	\end{align}
	
	Next, set the second square bracket of \eqref{eqn:pf-h-1} to zero, write $h_2(t,q,U)$ in form $h_2(t,q,U) = \Lambda_0(t,U) + \Lambda_1(t,U)\,q + \Lambda_2(t)\,q^2$, and set $q^0$, $q^1$, and $q^2$ terms to zero independently, and write
	\begin{equation}
		\left\{
		\begin{split}
			\partial_t\Lambda_0 + \beta\,\partial_U\Lambda_0 + \tfrac{1}{2}\,\eta^2\,\partial_{UU}\Lambda_0 + \tfrac{1}{2\,k}\,f_1\,\Lambda_1 - \tfrac{1}{2}\,\eta^2\,(\partial_U g)^2 &= 0\,,
			\\
			\Lambda_0(T,U) &= 0\,,
		\end{split}
		\right.
		\label{eqn:pf_Lambda_0}
	\end{equation}
	\begin{equation}
		\left\{
		\begin{split}
			\partial_t\Lambda_1 + \beta\,\partial_U\Lambda_1 + \tfrac{1}{2}\,\eta^2\,\partial_{UU}\Lambda_1 + \tfrac{1}{2\,k}\,(2\,f_2 + b)\,\Lambda_1 + \tfrac{1}{k}\,\Lambda_2\,f_1 - \rho\,\sigma\,\eta\,\partial_U g &= 0\,,\\
			\Lambda_1(T,U) &= 0\,,
		\end{split}
		\right.
		\label{eqn:pf_Lambda_1}
	\end{equation}
	\begin{equation}
		\left\{
		\begin{split}
			\partial_t\Lambda_2 + \tfrac{1}{k}\,(2\,f_2 + b)\,\Lambda_2 - \tfrac{1}{2}\,\sigma^2 &= 0\,,\\
			\Lambda_2(T) &= 0\,.
		\end{split}
		\right.
		\label{eqn:pf_Lambda_2}
	\end{equation}
	The solution to ODE \eqref{eqn:pf_Lambda_2} is
	\begin{align}
		\Lambda_2(t) &= -\sigma^2\,(T-t)\,\frac{12\,k^2 + 6\,k\,m\,(T-t) + m^2\,(T-t)^2}{6\,(2\,k+m\,(T-t))^2}\,.\label{eqn:Lambda_2_pf}
	\end{align}
	By the Feynman-Kac formula, equations \eqref{eqn:pf_Lambda_0} and \eqref{eqn:pf_Lambda_1} have solutions
	\begin{align}
		\Lambda_0(t,U) &=
		\tfrac{1}{2\,k}\,
		\mathbb{E}\biggl[\int_t^T \left(f_1(s)\Lambda_1(s,\tilde{U}_s) -k\,\eta^2\,(\partial_Ug(s,\tilde{U}_s))^2\right)\,\biggl|\,\tilde{U}_t=U\biggr]\,,\label{eqn:Lambda_0_pf}\\
		\Lambda_1(t,U) &= \tfrac{1}{k}\,\mathbb{E}\biggl[\int_t^T\frac{2\,k+m\,(T-s)}{2\,k+m\,(T-t)}
		\left(f_1(s)\Lambda_2(s) - k\,\rho\,\sigma\,\eta\,\partial_Ug(s,\tilde{U}_s)\right)\,ds\,\biggl|\,\tilde{U}_t=U\biggr]\,.\label{eqn:Lambda_1_pf}
	\end{align}
	
	Finally, group the terms proportional to $\theta^2$ and obtain
	\begin{equation}
		\left\{
		\begin{split}
			c^2\,\left[\partial_t h_3 + \beta\, \partial_Uh_3 + \tfrac{1}{2}\,\eta^2\,\partial_{UU}h_3 \right.\hspace{85mm}\\
			 \left.+ \tfrac{1}{4\,k}\,(\partial_U h_0 + \partial_q h_1)^2 + \tfrac{1}{2\,k}\,(\partial_qh_0 + b\,q)\,(\partial_uh_1 + \partial_qh_3)\right]\hspace{40mm}\\
			+ c\,\gamma \, \left[\partial_t h_4 + \beta\, \partial_Uh_4 + \tfrac{1}{2}\,\eta^2\,\partial_{UU}h_4 + \tfrac{1}{2\,k}\,(\partial_q h_1 + \partial_U h_0)\,\partial_q h_2 \right.\hspace{20mm}\\
			 \left. + \tfrac{1}{2\,k}\,(\partial_qh_0 + b\,q)\,(\partial_Uh_2 + \partial_q h_4) - \eta^2\partial_Uh_0\,\partial_Uh_1 -\rho\,\sigma\,\eta\,q\,\partial_U h_1    \right]\hspace{5mm}\\
			+ \gamma^2\,\left[\partial_t h_5 + \beta\, \partial_Uh_5 + \tfrac{1}{2}\,\eta^2\,\partial_{UU}h_5 + \tfrac{1}{4\,k}\,(\partial_qh_2)^2 \right.\hspace{22mm}\\
			 \left. + \tfrac{1}{2\,k}\,(\partial_qh_0 + b\,q)\,\partial_qh_5 - \eta^2\partial_Uh_0\,\partial_Uh_2 -\rho\,\sigma\,\eta\,q\,\partial_U h_2   \right] &= 0\,,\\
			c^2\,h_3(T,q,U) + c\,\gamma\,h_4(T,q,U) + \gamma^2\,h_5(T,q,U) &= 0\,.
		\end{split}
		\right.
	\label{eqn:pf-h-2}
		\end{equation}
	We seek solutions to \eqref{eqn:pf-h-2} that do not depend on $c$ and $\gamma$, so we set each of the three terms in square brackets equal to zero independently. Make the substitutions
	\begin{subequations}
		\label{eqn:h345}
		\begin{align}
			h_3(t,q,U) &= A_0(t,U) + A_1(t,U)\,q + A_2(t,U)\,q^2\,,\label{eqn:h3_pf}\\
			h_4(t,q,U) &= B_0(t,U) + B_1(t,U)\,q + B_2(t,U)\,q^2\,,\label{eqn:h4_pf}\\
			h_5(t,q,U) &= C_0(t,U) + C_1(t,U)\,q + C_2(t,U)\,q^2\,,\label{eqn:h5_pf}
		\end{align}
	\end{subequations}
	to arrive at a system of PDE's for $A_{0,1,2}$, $B_{0,1,2}$, and $C_{0,1,2}$. In Lemma \ref{lem:ABC} (which appears at the end of this proof) we show that these functions are bounded and continuously differentiable with respect to $U$ with bounded derivatives.
	
	\underline{Part II:} (accuracy of approximation).
	{
		 With $\hat{h}$ given by \eqref{eqn:h_expansion}, define
		\begin{align}
			\hat{H}(t,x,q,S,U;\theta \,c ,\theta\,\gamma) &= -e^{-\theta\gamma (x + qS + \hat{h}(t,q,U;\theta\, c, \theta \,\gamma))}\,.\label{eqn:H_hat}
		\end{align}
		Then the desired limit in \eqref{eqn:error_limit} is equivalent to
		\begin{align}
			H_\psi(t,x,q,S,U;\theta \,c, \theta\, \gamma) &= \hat{H}(t,x,q,S,U;\theta\, c, \theta\,\gamma) + o(\theta^3)\,,\label{eqn:H_error}
		\end{align}
		where the additional power of $\theta$ follows from a Taylor expansion of the exponential function and noting the additional factor of $\theta$ that appears in the exponential of \eqref{eqn:H_hat}. For simplicity, we prove the approximation in \eqref{eqn:H_error} holds for $t=0$ with initial states given by $x$, $q$, $S$, and $U$. The case of $t\neq0$ follows similarly.
		
		Henceforth, consider the initial states $x$, $q$, $S$, and $U$ to be fixed, and take $\theta\in(0,\theta^*)$, $\epsilon\in(0,\epsilon^*)$  where $\theta^*,\epsilon^*$ are as in Assumption \ref{ass:ass} iii). Further, let $\nu^{\theta,\epsilon}$ be an admissible control which is $\epsilon\,\theta^3$-optimal, specifically such that
		\begin{align}
			H^{\nu^{\theta,\epsilon}}(0,x,q,S,U;\theta\, c, \theta\,\gamma)  + \epsilon\,\theta^3 \geq H_\psi(0,x,q,S,U;\theta\, c, \theta\, \gamma)\,.\label{eqn:e_optimal}
		\end{align}
		Applying Ito's Lemma to the process $G_t = \hat{H}(t,X^{\nu^{\theta,\epsilon}}_t,Q^{\nu^{\theta,\epsilon}}_t,S^{\nu^{\theta,\epsilon}}_t,U^{\nu^{\theta,\epsilon}}_t;\theta\,c,\theta\,\gamma)$ yields
		\begin{equation}
			\begin{split}
			G_T - G_0 =& \int_0^T (\partial_t + \mathcal{L}^{\nu^{\theta,\epsilon}})
			\hat{H}(t,X^{\nu^{\theta,\epsilon}}_t,Q^{\nu^{\theta,\epsilon}}_t,
			S^{\nu^{\theta,\epsilon}}_t,
			U^{\nu^{\theta,\epsilon}}_t;\theta\,c,\theta\,\gamma)\,dt
			\\
			&-\theta\,\gamma \, \sigma\int_0^T \hat{H} (t, X_t^{\nu^{\theta,\epsilon}}, Q_t^{\nu^{\theta,\epsilon}}, S_t^{\nu^{\theta,\epsilon}}, U_t^{\nu^{\theta,\epsilon}}; \theta\,c,\theta\,\gamma)\,Q_t^{\nu^{\theta,\epsilon}}\, dW_t
			\\
			&- \theta\,\gamma \, \eta  \int_0^T \hat{H} (t, X_t^{\nu^{\theta,\epsilon}}, Q_t^{\nu^{\theta,\epsilon}}, S_t^{\nu^{\theta,\epsilon}}, U_t^{\nu^{\theta,\epsilon}}; \theta\,c,\theta\,\gamma)\,\partial_U\hat{h}(t,Q_t^{\nu^{\theta,\epsilon}},U_t^{\nu^{\theta,\epsilon}};\theta\, c, \theta\,\gamma)\,dZ_t\,,
			  \end{split}
			  \label{eqn:ito}
		\end{equation}
	{	where the differential operator $\mathcal{L}^\nu$ is given by
		\begin{align*}
			\mathcal{L}^\nu &= \nu \, \partial_q - (S+k\,\nu)\,\nu\,\partial_x + (\mu + b\,\nu)\,\partial_S + \frac{1}{2}\,\sigma^2\,\partial_{SS} + (\beta+\theta\,c\,\nu)\,\partial_U + \frac{1}{2}\,\eta^2\,\partial_{UU} + \rho\,\sigma\,\eta\,\partial_{SU}\,.
		\end{align*}
	}
		
		Inspection of $\partial_U\hat{h}(t,q,U;\theta\,c,\theta\,\gamma)$ shows that it is a polynomial with respect to $q$ of degree 2 with coefficients that are bounded with respect to $(t,U)$ due to Lemma \ref{lem:future_delta}. 
		
		Next, we apply the uniform bound in \eqref{eqn:uniform_bound_assumption} from Assumption \ref{ass:ass} iii) to show that both stochastic integrals have expectation zero for sufficiently small $\theta$. There is a sufficiently large $N_1$ independent of $\theta\in(0,\theta^*)$ such that
		\begin{align*}
		\left|
		\hat{H}^2(t,x,q,S,U;\theta\,c,\theta\,\gamma)\,q^2
		\right|
		& \leq N_1 \,e^{\theta \, \gamma\, N_1\, (|x| + |q\,S| + |q| + q^2 + |U|)}\,,\\
		\left|
		\hat{H}^2(t,x,q,S,U;\theta\,c,\theta\,\gamma)(\partial_U\hat{h}(t,q,U;\theta\, c, \theta\,\gamma))^2
		\right|
		& \leq N_1 \,e^{\theta \, \gamma\, N_1\, (|x| + |q\,S| + |q| + q^2 + |U|)}\,.
		\end{align*}
		Therefore, by Assumption \ref{ass:ass} iii), if $\theta < \frac{D}{\gamma \,N_1}$ then the integrands in both stochastic integrals in \eqref{eqn:ito} are square-integrable over $[0,T]\times\Omega$ and therefore have zero expectation. If $\theta^* > \frac{D}{\gamma \,N_1}$, then henceforth we further restrict $\theta\in(0,\frac{D}{\gamma\,N_1})$.
		
Given the explicit form of $\hat{H}$, we obtain the bound
		\begin{align}
			(\partial_t + \mathcal{L}^{\nu^{\theta,\epsilon}})\hat{H}(t,x,q,S,U;\theta\,c,\theta\,\gamma) & \leq \sup_\nu(\partial_t + \mathcal{L}^\nu)\hat{H}(t,x,q,S,U;\theta\,c,\theta\,\gamma)\label{eqn:LHhatSup}\\
			&= -\theta\,\gamma\,\hat{H}(t,x,q,S,U;\theta\,c,\theta\,\gamma) \, \sum_{i=3}^6 \theta^i\,P_i(t,q,U)\,.\label{eqn:LHhatInequality}
		\end{align}
		{The supremum in \eqref{eqn:LHhatSup} is attained at
		\begin{align*}
			\nu^\dagger &= \frac{\partial_q\hat{h} + \theta\, c\, \partial_U\hat{h} + b\,q}{2\,k}\,,
		\end{align*}
		which after direct substitution and some tedious but straightforward computations results in \eqref{eqn:LHhatInequality},} where, by Lemma \ref{lem:future_delta}, each $P_{i}(t,q,U)$, $i\in\{3,4,5,6\}$, is a polynomial with respect to $q$ of degree at most four with coefficients that are bounded with respect to $t$ and $U$ (full expressions appear in \eqref{eqn:P3456} in Appendix C). 
Taking expectations in \eqref{eqn:ito}, substituting the definition of $G_t$, and using the inequality \eqref{eqn:LHhatInequality}, results in the inequalities

\begin{align*}
& \mathbb{E}\biggl[\int_0^T -\theta\,\gamma\,\hat{H}(t, X_t^{\nu^{\theta,\epsilon}}, Q_t^{\nu^{\theta,\epsilon}}, S_t^{\nu^{\theta,\epsilon}}, U_t^{\nu^{\theta,\epsilon}}; \theta\,c,\theta\,\gamma) \,\sum_{i=3}^6 \theta^i\,P_i(t,Q_t^{\nu^{\theta,\epsilon}},U_t^{\nu^{\theta,\epsilon}})\,dt\biggr]
\\
& \qquad \ge \mathbb{E}[\hat{H}(T,X^{\nu^{\theta,\epsilon}}_T,
Q^{\nu^{\theta,\epsilon}}_T,S^{\nu^{\theta,\epsilon}}_T,
U^{\nu^{\theta,\epsilon}}_T;\theta\,c,\theta\,\gamma)] - \hat{H}(0,x,q,S,U;\theta\,c,\theta\,\gamma)
\\
& \qquad =
H^{\nu^{\theta,\epsilon}}(0,x,q,S,U;\theta\,c,\theta\,\gamma) - \hat{H}(0,x,q,S,U;\theta\,c,\theta\,\gamma) \,.
\end{align*}

Rearrange and recall that $\nu^{\theta,\epsilon}$ is $\epsilon\,\theta^3$-optimal so that we have
\begin{multline}
\frac{1}{\theta^3}
\left(H_\psi(0,x,q,S,U;\theta\,c,\theta\,\gamma) - \hat{H}(0,x,q,S,U;\theta\,c,\theta\,\gamma)\right)
\\
\leq
\epsilon + \mathbb{E}\biggl[\int_0^T -\theta\,\gamma\,\hat{H}(t, X_t^{\nu^{\theta,\epsilon}}, Q_t^{\nu^{\theta,\epsilon}}, S_t^{\nu^{\theta,\epsilon}}, U_t^{\nu^{\theta,\epsilon}}; \theta\,c,\theta\,\gamma) \,\sum_{i=3}^6 \theta^{i-3}\,P_i(t,Q_t^{\nu^{\theta,\epsilon}},
U_t^{\nu^{\theta,\epsilon}})\,dt\biggr]\,.
\end{multline}
We again apply the uniform bound in \eqref{eqn:uniform_bound_assumption} from Assumption \ref{ass:ass} iii). By construction, $\hat{h}$ has at most linear growth in $U$. Moreover, the zeroth order, linear, and quadratic dependence on $q$ appear with bounded coefficients. Next, as each $P_i$ is at most degree four in $q$, with bounded coefficients, there is a sufficiently large $N_2$, independent of $\theta\in(0,\theta^*)$, such that
		\begin{align*}
\left|
\hat{H}(t,x,q,S,U;\theta\,c,\theta\,\gamma)\sum_{i=3}^6 \theta^{i-3}\,P_i(t,Q_t^{\nu^{\theta,\epsilon}},U_t^{\nu^{\theta,\epsilon}})
\right|
& \leq N_2 \,e^{\theta \, \gamma\, N_2\, (|x| + |q\,S| + |q| + q^2 + |U|)}\,.
		\end{align*}
If $\theta^* > \frac{D}{\gamma\, N_2}$ and $N_2>N_1$, then further restrict $\theta\in(0,\frac{D}{\gamma\,N_2})$, and as $\epsilon\in(0,\epsilon^*)$ and $\theta<\theta^*$, the uniform bound in \eqref{eqn:uniform_bound_assumption} applies and hence
		\begin{align}
			\tfrac{1}{\theta^3}\left|H_\psi(0,x,q,S,U;\theta\,c,\theta\,\gamma) - \hat{H}(0,x,q,S,U;\theta\,c,\theta\,\gamma)\right|
\leq \epsilon + \theta\,\gamma\,N_2\,C\,.
		\end{align}
Finally, as $\epsilon\in(0,\epsilon^*)$ is arbitrary, we have
\begin{align}
    \lim_{\theta\downarrow 0}
    \tfrac{1}{\theta^3}\left|H_\psi(0,x,q,S,U;\theta\,c,\theta\,\gamma) - \hat{H}(0,x,q,S,U;\theta\,c,\theta\,\gamma)\right| = 0\,,
\end{align}
which is the desired limit.
	}
		\qed

\begin{lemma}\label{lem:ABC}
	The functions $A_{0,1,2}$, $B_{0,1,2}$, and $C_{0,1,2}$ are bounded and continuously differentiable with respect to $U$ with bounded derivatives.
\end{lemma}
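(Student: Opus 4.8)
The plan is to read off the defining equations for the nine coefficient functions from \eqref{eqn:pf-h-2}, solve them by Feynman--Kac, and then propagate the boundedness and regularity supplied by Lemma \ref{lem:future_delta} through the resulting conditional expectations. First I would substitute the quadratic-in-$q$ ans\"atze \eqref{eqn:h345} into the three square brackets of \eqref{eqn:pf-h-2}, which are set to zero independently, and collect powers of $q$. Since $\partial_q h_0 + b\,q = f_1(t) + (2\,f_2(t)+b)\,q$ and every zero- and first-order quantity is polynomial in $q$ of degree at most two, each bracket splits into three equations indexed by $q^0,q^1,q^2$. Every resulting equation is a \emph{linear} backward parabolic PDE sharing the generator $\partial_t + \beta\,\partial_U + \tfrac12\,\eta^2\,\partial_{UU}$ of the process $\tilde U$, carries a bounded deterministic zeroth-order coefficient proportional to $(2\,f_2+b)$, has zero terminal data, and has a source term built from $f_0,f_1,f_2$ together with $g,\lambda_0,\lambda_1,\Lambda_0,\Lambda_1,\Lambda_2$ and their $U$-derivatives. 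Inspecting \eqref{eqn:pf-h-2} shows the three families decouple: the $c^2$ bracket involves only $h_1$ and $h_3$, the $c\,\gamma$ bracket only $h_1,h_2,h_4$, and the $\gamma^2$ bracket only $h_2,h_5$. Hence each of the $A$'s, $B$'s, and $C$'s depends solely on previously determined lower-order functions.

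Next I would solve each family in the cascade order $(\,\cdot\,)_2 \to (\,\cdot\,)_1 \to (\,\cdot\,)_0$ forced by the triangular coupling in $q$: the top coefficient (for instance $C_2$) has a source depending only on first-order data such as $\Lambda_2^2$ and $\partial_U\Lambda_1$, the middle coefficient feeds on the top one, and the bottom coefficient feeds on the middle one. Each PDE is then represented, exactly as for $\lambda_0,\lambda_1,\Lambda_0,\Lambda_1$ in \eqref{eqn:lambda_0_pf}--\eqref{eqn:Lambda_1_pf}, as a conditional expectation of a time-integral of its source along $\tilde U$. The regularity then follows level by level: whenever the source of an equation is bounded and continuously $U$-differentiable with bounded derivative, its Feynman--Kac solution inherits the same properties, and this bounded solution becomes part of the source at the next level of the cascade.

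The core work is establishing that the sources enjoy this regularity. I would first simplify the first-order inputs using Lemma \ref{lem:future_delta}: the averaging identity \eqref{eqn:lemma2} collapses $\lambda_1$ and $\Lambda_1$ into explicit products of bounded deterministic time-factors with $\partial_U g(t,U)$ (and with the purely deterministic $\Lambda_2(t)$), so these are bounded and have bounded continuous $U$-derivatives, because Assumption \ref{ass:ass} i) makes $g$ and its first several $U$-derivatives bounded. Sums and products of such functions retain these properties. The only genuinely nonlinear contributions are the quadratic terms $(\partial_U g)^2$ entering $\Lambda_0$ and the $A$ and $C$ sources; here the martingale shortcut of \eqref{eqn:lemma1} does not apply, but $(\partial_U g)^2$ is still bounded with bounded $U$-derivative, and I would differentiate the conditional expectation directly under the Gaussian transition density $p(z;t,T,U)$, exactly as in the proof of Lemma \ref{lem:future_delta}, with the interchange justified by dominated convergence from the boundedness of the relevant derivatives of $\psi$.

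Combining the cascade with these estimates yields the claim. I expect the main obstacle to be the bookkeeping rather than any single hard estimate: one must verify that, after forming all products of the first- and second-order quantities and their derivatives, \emph{every} source term remains bounded with bounded $U$-derivatives, and that the number of $U$-derivatives demanded at each stage never exceeds what Lemma \ref{lem:future_delta} supplies. This derivative budget is precisely why Assumption \ref{ass:ass} i) requires $\psi\in C^4$ with bounded derivatives: the derivatives absorbed in forming the quadratic sources, together with the derivative consumed at each level of the cascade, fit within the bounded derivatives of $g$, keeping differentiation under the expectation legitimate throughout.
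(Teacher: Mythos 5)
Your proposal follows essentially the same route as the paper's proof: substitute the ans\"atze \eqref{eqn:h345} into \eqref{eqn:pf-h-2}, observe that the resulting nine equations form triangular systems of linear backward parabolic PDEs with the generator of $\tilde U$, zero terminal data, and bounded time-dependent discount terms, solve them one by one via Feynman--Kac, and propagate boundedness and $U$-regularity of the source terms through the cascade using Lemma \ref{lem:future_delta}. Your additional remarks on the quadratic source $(\partial_U g)^2$ and the derivative budget from Assumption \ref{ass:ass} i) are correct elaborations of what the paper leaves implicit, but the argument is the same.
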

\begin{proof}
	Let $\mathcal{L} = \beta\,\partial_U + \tfrac{1}{2}\,\eta^2\,\partial_{UU}$. Upon substituting \eqref{eqn:h345} into \eqref{eqn:pf-h-2}, the functions $A_{0,1,2}$, $B_{0,1,2}$, and $C_{0,1,2}$ satisfy the following systems of PDE's:
	\begin{equation}
	\left\{
	\begin{split}
	\partial_t A_0 + \mathcal{L}A_0 + \tfrac{1}{4\,k}\,(\lambda_1 + \partial_Ug)^2 + \tfrac{1}{2\,k}\,f_1\,(\partial_U\lambda_0 + A_1) &= 0\,,\\
	\partial_t A_1 + \mathcal{L}A_1 + \tfrac{1}{2\,k}\,(2f_2 + b)\,(\partial_U\lambda_0 + A_1) + \tfrac{1}{2\,k}\,f_1\,(\partial_U\lambda_1 + 2A_2) &= 0\,,\\
	\partial_t A_2 + \mathcal{L}A_2 + \tfrac{1}{2\,k}\,(2f_2 + b)\,(\partial_U\lambda_1 + 2A_2) &= 0\,,\\
	A_{0,1,2}(T,U) &= 0\,,
	\end{split}
	\right.
	\end{equation}
	\begin{equation}
	\left\{
	\begin{split}
	\partial_t B_0 + \mathcal{L}B_0 + \tfrac{1}{2\,k}\,f_1\,(\partial_U\lambda_0 + B_1) + \tfrac{1}{2\,k}\,\Lambda_1\,(\lambda_1 + \partial_U g) - \eta^2\partial_U\lambda_0\partial_U g  &= 0\,,\\
	\partial_t B_1 + \mathcal{L}B_1 + \tfrac{1}{2\,k}\,f_1\,(\partial_U\Lambda_1 + 2B_2) + \tfrac{1}{k}\,\Lambda_1\,(\lambda_1+\partial_U g) \hspace{20mm}\\
	+ \tfrac{1}{2\,k}\,(2f_2 + b)\,(\partial_U\Lambda_0 + B_1) - \eta^2\partial_U\lambda_1\partial_Ug -\rho\sigma\eta\partial_U\lambda_0 &= 0\,,\\
	\partial_t B_2 + \mathcal{L}B_2 + \tfrac{1}{2\,k}\,(2f_2 + b)\,(\partial_U\Lambda_1 + 2B_2) + \tfrac{1}{2\,k}\,f_1\,\partial_U\Lambda_1 - \rho\sigma\eta\partial_U\lambda_1 &= 0\,,\\
	B_{0,1,2}(T,U) &= 0\,,
	\end{split}
	\right.
	\end{equation}
	\begin{equation}
	\left\{
	\begin{split}
	\partial_t C_0 + \mathcal{L}C_0 + \tfrac{1}{4\,k}\,(\Lambda_1)^2 + \tfrac{1}{2\,k}\,f_1C_1 - \eta^2\partial_U\Lambda_0\partial_U g &= 0\,,\\
	\partial_t C_1 + \mathcal{L}C_1 + \tfrac{1}{k}\,\Lambda_1\,\Lambda_2 + \tfrac{1}{k}\,f_1\,C_2 + \tfrac{1}{2\,k}\,(2f_2+b)\,C_1 - \eta^2\partial_U\Lambda_1\partial_U g - \rho\sigma\eta \partial_U\Lambda_0 &= 0\,,\\
	\partial_t C_2 + \mathcal{L}C_2 + \tfrac{1}{k}\,(\Lambda_2)^2 + \tfrac{1}{k}\,(2f_2+b)\,C_2 - \rho\sigma\eta\partial_U\Lambda_1 &= 0\,,\\
	C_{0,1,2}(T,U) &= 0\,.
	\end{split}
	\right.
	\end{equation}
	Inspection shows that within each of the three systems, the coupling is only in one direction so the equations may be solved one by one. We also see that each individual equation takes the form
	\begin{equation}\label{eqn:pf-w}
	\partial_t w + \mathcal{L} w + F + Gw = 0\qquad\text{and}\qquad
	w(T,U) = 0\,.
	\end{equation}
	By the Feynman-Kac formula the solution for $w$ is
	\begin{align*}
	w(t,U) = \mathbb{E}\biggl[\int_t^T e^{\int_t^s G(r,\tilde{U}_r)dr}F(s,\tilde{U}_s) \biggl|\tilde{U}_t = U\biggr]\,,\qquad\text{where}\qquad
	d\tilde{U}_t = \beta\, dt + \eta \, dZ_t\,.
	\end{align*}
	The forcing term $F$ in each equation is  bounded and continuously differentiable with respect to $U$ 	because the functions $f_{0,1,2}$, $\lambda_{0,1}$, $\Lambda_{0,1,2}$, and $\partial_U g$ are bounded and continuously differentiable with respect to $U$. In addition, inspection shows that each discount term $G$ is bounded and a function only of $t$. Therefore each $A_{0,1,2}$, $B_{0,1,2}$, and $C_{0,1,2}$ is bounded, and continuously differentiable with respect to $U$ by Lemma \ref{lem:future_delta}. \qed
\end{proof}

\subsection{Proof of Theorem \ref{prop:approx_nu}}
	{
		
Fix $\theta_0 > 0$ and take $\theta\in(0,\theta_0)$. Next, consider the inventory and non-tradable risk factor path when the agent follows the conjectured approximate strategy, specifically such that
		\begin{subequations}
			\label{eqn:approx-path-U}
			\begin{align}
				dQ_t^{\hat{\nu}} &= \hat{\nu}\left(t,Q_t^{\hat{\nu}},U_t^{\hat{\nu}}\right)\,dt\,,\\
				dU_t^{\hat{\nu}} &= \left(\beta + c\,\hat{\nu}\left(t,Q_t^{\hat{\nu}},U_t^{\hat{\nu}}\right)\right)\,dt + \eta\, dZ_t\,.
			\end{align}
		\end{subequations}
By Lemma \ref{lem:future_delta}, the function $\hat{\nu}$ may be written as
		\begin{align}
			\hat{\nu}(t,q,U) &= F_1(t) + F_2(t;\theta)\,q + F_3(t;\theta)\,\partial_Ug(t,U)\,,\label{eqn:pf_nu_hat}
		\end{align}
with $\partial_Ug(t,U)$ and $\partial_{UU}g(t,U)$ bounded, therefore $\hat{\nu}(t,q,U)$ is Lipschitz with linear growth in the variables $q$ and $U$. Thus, the SDEs \eqref{eqn:approx-path-U} have a unique strong solution (see \cite{karatzas2012brownian} Theorem 5.2.9). Moreover, choose the linear growth coefficient uniformly with respect to $\theta\in(0,\theta_0)$, so that
		\begin{align*}
			\mathbb{E}\left[\left(Q_t^{\hat{\nu}}\right)^2 + \left(U_t^{\hat{\nu}}\right)^2\right] \leq C\,e^{Ct}\,, \quad \forall\,t \in[0,T]\,,
		\end{align*}
		for some constant $C$. Therefore, by Fubini's Theorem, we have $\mathbb{E}\left[\int_0^T\hat{\nu}^2_u\,du\right] < \infty$.
		
To show that $\hat{\nu}$ is asymptotically approximately optimal, we proceed with a verification argument while keeping track of the magnitude of the error with respect to optimization, analogous to the proof of Theorem \ref{prop:asymptotic_approximation}. We also remark that as
		\begin{align*}
			H_\psi(t,x,q,S,U;\theta\, c, \theta\, \gamma) &= -e^{-\theta\, \gamma(x + q\,S + h_\psi(t,q,U;\theta\, c, \theta\,\gamma))}\,,\\
			H^{\hat{\nu}}(t,x,q,S,U;\theta\, c, \theta\, \gamma) &= -e^{-\theta\, \gamma\,(x + q\,S + h^{\hat{\nu}}(t,q,U;\theta\, c, \theta\,\gamma))}\,,
		\end{align*}
our desired approximation result is equivalent to
\begin{align}
			H_\psi(t,x,q,S,U;\theta\, c, \theta\, \gamma) &= H^{\hat{\nu}}(t,x,q,S,U;\theta\, c, \theta\, \gamma) + o(\theta^3)\,,\label{eqn:pf_nu_hat_H_approx}
		\end{align}
which follows from a Taylor expansion of the exponential function.
		
We prove the accuracy result at $t = 0$ with given initial states $x$, $q$, $S$, and $U$, which we henceforth consider to be fixed. The general result for $t\neq 0$ follows similarly.
		
Given the control $\hat{\nu}$, and the resulting state processes $X_t^{\hat{\nu}}$, $Q_t^{\hat{\nu}}$, $S_t^{\hat{\nu}}$, and $U_t^{\hat{\nu}}$, define the process $(G_t)_{t\in[0,T]}$ where
		\begin{equation*}
			G_t = \hat{H}(t, X_t^{\hat{\nu}}, Q_t^{\hat{\nu}}, S_t^{\hat{\nu}}, U_t^{\hat{\nu}}; \theta\, c, \theta\, \gamma)\,,
\quad \text{and} \quad
			\hat{H}(t,x,q,S,U;\theta\, c, \theta\, \gamma) = -e^{-\theta \,\gamma \left(x + q\,S + \hat{h}(t,q,U;\theta\, c, \theta\,\gamma\right)}.
		\end{equation*}		
Here, $\hat{h}$ is the approximation of $h_\psi$ given in Theorem \ref{prop:asymptotic_approximation} Equation \eqref{eqn:h_expansion}. Applying Ito's Lemma to $G$ gives
		\begin{align}
G_T - G_0 =& \int_0^T (\partial_t + \mathcal{L}^{\hat{\nu}}) \hat{H} (t, X_t^{\hat{\nu}}, Q_t^{\hat{\nu}}, S_t^{\hat{\nu}}, U_t^{\hat{\nu}}; \theta\, c, \theta\, \gamma)\, dt
\nonumber
\\
&\quad -\theta\,\gamma \, \sigma \int_0^T \hat{H} (t, X_t^{\hat{\nu}}, Q_t^{\hat{\nu}}, S_t^{\hat{\nu}}, U_t^{\hat{\nu}}; \theta\, c, \theta\, \gamma)\,Q_t^{\hat{\nu}}\, dW_t
\nonumber
\\
&\quad\quad- \theta\,\gamma \, \eta \int_0^T \hat{H} (t, X_t^{\hat{\nu}}, Q_t^{\hat{\nu}}, S_t^{\hat{\nu}}, U_t^{\hat{\nu}}; \theta \,c, \theta\, \gamma)\,\partial_U\hat{h}(t,Q_t^{\hat{\nu}},U_t^{\hat{\nu}};\theta\,c,\theta\,\gamma)\,dZ_t\nonumber\\
\begin{split}
 =&
-\theta\,\gamma\int_0^T \hat{H} (t, X_t^{\hat{\nu}}, Q_t^{\hat{\nu}}, S_t^{\hat{\nu}}, U_t^{\hat{\nu}}; \theta\, c, \theta\, \gamma) \biggl(\sum_{i=3}^5 \theta^i \, M_i(t,Q_t^{\hat{\nu}},U_t^{\hat{\nu}})\biggr)\,dt
\\
&\quad  -\theta\,\gamma \,\sigma\int_0^T \hat{H} (t, X_t^{\hat{\nu}}, Q_t^{\hat{\nu}}, S_t^{\hat{\nu}}, U_t^{\hat{\nu}}; \theta\, c, \theta \,\gamma)\,Q_t^{\hat{\nu}}\, dW_t\\
&\quad\quad- \theta\,\gamma \,\eta\int_0^T \hat{H} (t, X_t^{\hat{\nu}}, Q_t^{\hat{\nu}}, S_t^{\hat{\nu}}, U_t^{\hat{\nu}}; \theta \,c, \theta \,\gamma)\,\partial_U\hat{h}(t,Q_t^{\hat{\nu}},U_t^{\hat{\nu}};\theta\,c,\theta\,\gamma)\,dZ_t\,,
			\end{split}\label{eqn:pf_ito}
		\end{align}
		where each $M_{i}(t,q,U)$, $i\in\{3,4,5\}$, is a polynomial in $q$ of degree at most four with coefficients that are uniformly bounded functions of $t$ and $U$ (see \eqref{eqn:M345} in Appendix C for the explicit expressions).
		
		We proceed to show that for $\theta\in(0,\theta_0)$ both stochastic integrals have zero expectation and that Fubini's Theorem may be applied to the expectation of the Riemann integral. First, we construct appropriate bounds on the underlying processes.

The linear growth conditions of $\hat{\nu}$ and boundedness of $\partial_Ug$ implies
		\begin{align*}
			\underline{\nu}(t,q) \leq \hat{\nu}(t,q,U) \leq \overline{\nu}(t,q)\,,\quad\text{where}\quad \overline{\nu}(t,q) = C_1\,(1+|q|) \quad\text{and}\quad \underline{\nu}(t,q) = -\overline{\nu}(t,q)
		\end{align*}
		for some constant $C_1>0$. In addition, the processes $(Q_t^{\overline{\nu}})_{t\in[0,T]}$ and $(Q_t^{\underline{\nu}})_{t\in[0,T]}$ are deterministic and satisfy
		\begin{align*}
			Q_t^{\underline{\nu}} \leq Q_t^{\hat{\nu}} \leq Q_t^{\overline{\nu}}\,.
		\end{align*}
Similarly, there exists processes $(S_t^{\overline{\nu}})_{t\in[0,T]}$, $(S_t^{\underline{\nu}})_{t\in[0,T]}$, $(U_t^{\overline{\nu}})_{t\in[0,T]}$, and $(U_t^{\underline{\nu}})_{t\in[0,T]}$ such that
		\begin{equation*}
			S_t^{\underline{\nu}} \leq S_t^{\hat{\nu}} \leq S_t^{\overline{\nu}}\qquad\text{and}\qquad
			U_t^{\underline{\nu}} \leq U_t^{\hat{\nu}} \leq U_t^{\overline{\nu}}\,,
		\end{equation*}
almost surely (see \cite{karatzas2012brownian} Proposition 5.2.18). Therefore, there exists $C_2>0$ and $C_3>0$ such that
\begin{equation*}
			|S_t^{\hat{\nu}}| \leq C_2\,\biggl(1+\max_{0\leq t\leq T}\{|W_t|\}\biggr)\qquad\text{and}\qquad
			|U_t^{\hat{\nu}}| \leq C_3\,\biggl(1+\max_{0\leq t\leq T}\{|Z_t|\}\biggr)\,.\\
		\end{equation*}

Next, define $M_W = \max_{0\leq t\leq T}\{|W_t|\}$ and $M_Z = \max_{0\leq t\leq T}\{|Z_t|\}$. These bounds provides the following bounds for $X_t^{\hat{\nu}}$
		 \begin{align*}
		 	|X_t^{\hat{\nu}}| & \leq |x| + \int_0^t |S_t^{\hat{\nu}} + k\,\hat{\nu}(s,Q_s^{\hat{\nu}},U_s^{\hat{\nu}})|\,|\hat{\nu}(s,Q_s^{\hat{\nu}},U_s^{\hat{\nu}})| \,ds\\
		 	& \leq |x| + \int_0^T |S_t^{\hat{\nu}}|\,|\hat{\nu}(s,Q_s^{\hat{\nu}},U_s^{\hat{\nu}})|\,ds + k\,\int_0^T |\hat{\nu}(s,Q_s^{\hat{\nu}},U_s^{\hat{\nu}})|^2\, ds\\
		 	& \leq |x| + T\,C_2\,C_4\,\biggl(1+M_W\biggr) + k\,T\,C_4^2\,,
		 \end{align*}
		 where $C_4$ is a constant.

The uniform bounds on $\partial_Ug(t,U)$ and $\partial_{UU}g(t,U)$ implies $\hat{h}$ has at most linear growth in $U$ and hence
		 \begin{align*}
		 	|\hat{h}(t,Q_t^{\hat{\nu}},U_t^{\hat{\nu}})| &\leq C_5 \,(1 + M_Z) \qquad\text{and}\qquad
		 	|\partial_U\hat{h}(t,Q_t^{\hat{\nu}},U_t^{\hat{\nu}})| \leq C_6\,,
		 \end{align*}
		 where $C_5$ and $C_6$ are constants.

Applying the above bounds together provides
		 \begin{align}
			e^{-\theta_0\, \gamma \,C_7\,(1 + M_W + M_Z)} \leq |\hat{H}(t, X_t^{\hat{\nu}}, Q_t^{\hat{\nu}}, S_t^{\hat{\nu}}, U_t^{\hat{\nu}}; \theta \,c, \theta \,\gamma)| \leq e^{\theta_0 \, \gamma\, C_7\,(1 + M_W + M_Z)}\,.\label{eqn:pf_H_hat_bound}
		 \end{align}
We may choose the constants $C_i$ independent of $\theta\in(0,\theta_0)$ and therefore
\begin{align*}
		 	\hat{H}^2(t, X_t^{\hat{\nu}}, Q_t^{\hat{\nu}}, S_t^{\hat{\nu}}, U_t^{\hat{\nu}}; \theta \,c, \theta\, \gamma)(Q_t^{\hat{\nu}})^2 &\leq C_8\,e^{2\,\theta_0 \,\gamma \,C_7\,(1 + M_W + M_Z)}\,,\\
		 	\hat{H}^2(t, X_t^{\hat{\nu}}, Q_t^{\hat{\nu}}, S_t^{\hat{\nu}}, U_t^{\hat{\nu}}; \theta \,c, \theta\, \gamma)\,(\partial_U\hat{h}(t,Q_t^{\hat{\nu}},U_t^{\hat{\nu}};\theta\,c,\theta\,\gamma))^2 &\leq C^2_6\,e^{2\,\theta_0 \,\gamma\, C_7\,(1 + M_W + M_Z)}\,,
		 \end{align*}
		 where $C_8 = \max\{(Q^{\overline{\nu}}_T)^2\,,(Q^{\underline{\nu}}_T)^2\}$. As right-hand sides of both inequalities are integrable over $[0,T]\times\Omega$, the stochastic integrals in \eqref{eqn:pf_ito} have zero expectation.

Next, as noted above, $M_i$, $i\in\{3,4,5\}$, is polynomial in $q$ of degree at most four with coefficients that are uniformly bounded functions of $t$ and $U$. Hence,
		 \begin{align}
		 	\left|\sum_{i=3}^5 \theta^i \, M_i(t,Q_t^{\hat{\nu}},U_t^{\hat{\nu}})\right| &\leq \theta^3 C_9\,,
\label{eqn:C9}
		 \end{align}
where $C_9$ is a constant which does not depend on $\theta\in(0,\theta_0)$. This bound, along with \eqref{eqn:pf_H_hat_bound}, allows us to apply Fubini's Theorem to the Riemann integral in \eqref{eqn:pf_ito}. Putting this together with the result that stochastic integrals on the rhs of \eqref{eqn:pf_ito} have zero expectation, we have
		 \begin{multline}
				 \mathbb{E}[H^{\hat{\nu}}(T, X_T^{\hat{\nu}}, Q_T^{\hat{\nu}}, S_T^{\hat{\nu}}, U_T^{\hat{\nu}};\theta\,c,\theta\,\gamma)] - \hat{H}(0,x,q,S,U;\theta\,c,\theta\,\gamma)
 \\
				  = - \theta\,\mathbb{E}\biggl[\gamma\,\int_0^T \hat{H} (t, X_t^{\hat{\nu}}, Q_t^{\hat{\nu}}, S_t^{\hat{\nu}}, U_t^{\hat{\nu}}; \theta \,c, \theta\, \gamma)\, \biggl(\sum_{i=3}^5 \theta^i \, 	M_i(t,Q_t^{\hat{\nu}},U_t^{\hat{\nu}})\biggr)\,dt\biggr]
\end{multline}
Using the bound \eqref{eqn:pf_H_hat_bound}, we further have
\begin{equation}
    \tfrac{1}{\theta^3}
    \left|
    H^{\hat{\nu}}(0,x,q,S,U;\theta\,c,\theta\,\gamma)
    - \hat{H}(0,x,q,S,U;\theta\,c,\theta\,\gamma)
    \right|
    \leq
    \theta\,\gamma\, C_9\, T\, \mathbb{E}[e^{\theta_0\, \gamma\, C_7\,(1 + M_W + M_Z)}]\,.
    \label{eqn:pf_H_hat_order}
\end{equation}
From Theorem \ref{prop:asymptotic_approximation}, we have
\begin{align*}
    \lim_{\theta\downarrow 0} \tfrac{1}{\theta^3}
    \left|
    H_\psi(0,x,q,S,U;\theta\,c,\theta\,\gamma) - \hat{H}(0,x,q,S,U;\theta\,c,\theta\,\gamma)
    \right|
    &= 0\,.
\end{align*}
Combining the above with \eqref{eqn:pf_H_hat_order} implies
\begin{align}
    \lim_{\theta\downarrow 0} \tfrac{1}{\theta^3}
    \left|
    H_\psi(0,x,q,S,U;\theta\,c,\theta\,\gamma) - H^{\hat{\nu}}(0,x,q,S,U;\theta\,c,\theta\,\gamma)
    \right|
    &= 0\,,
\end{align}
as desired.
}
	 \qed

\subsection{Proof of Proposition \ref{prop:closed_form}}
	{
		The proof proceeds in three parts. (i) We prove the local uniform approximation given by \eqref{eqn:local_uniform}; (ii) we prove the control in \eqref{eqn:admissible_control} is admissible; and (iii) finally we prove the control \eqref{eqn:admissible_control} is approximately optimal to second order in the sense of \eqref{eqn:nu_bar_approx}.
		
\underline{Part (i):} (local uniform approximation): The feedback form of the optimal control when the agent holds $\mathfrak{N}$ units of the non-tradable risk factor is given in closed-form by equation \eqref{eqn:optimal_control}. Denote this function by $\mathfrak{v}^*(t,q,\mathfrak{N};c,\gamma)$. The feedback form of the approximate optimal control when the agent has exposure of the form $\psi(U)$ is in equation \eqref{eqn:approx_optimal_control}. Due to Lemma \ref{lem:future_delta}, the dependence of $\nu_1$ and $\nu_2$ on $U$ in equation \eqref{eqn:approx_optimal_control} appears only through $\partial_U g(t,U)$. Denote the first three terms on the right-hand side of \eqref{eqn:approx_optimal_control}, with $\partial_U g(t,U)$ replaced by $\Delta$, by $\hat{\mathfrak{v}}(t,q,\Delta;c,\gamma)$. Write $\mathfrak{v}^*$ and $\hat{\mathfrak{v}}$ as
\begin{align}
    \mathfrak{v}^*(t,q,\Delta;\theta\,c,\theta\,\gamma)
    &=  \tfrac{1}{2\,k}\,
    \left(\mathfrak{v}_0(t;\theta) + \mathfrak{v}_1(t;\theta)\, q + \mathfrak{v}_2(t;\theta)\, \Delta
    \right)\,,
    \label{eqn:pf_omega}
\\
    \hat{\mathfrak{v}}(t,q,\Delta;\theta\,c,\theta\,\gamma)
    &= \tfrac{1}{2\,k}\,\left(\hat{\mathfrak{v}}_0(t;\theta) + \hat{\mathfrak{v}}_1(t;\theta)\, q + \hat{\mathfrak{v}}_2(t;\theta)\, \Delta\right)\,.
\end{align}
We next show that
\begin{align*}
    \lim_{\theta\downarrow 0} \tfrac{1}{\theta}
    \left(
    \mathfrak{v}_i(t;\theta) - \hat{\mathfrak{v}}_i(t;\theta)
    \right)
    = 0\,,
\end{align*}
uniformly in $t$ for each $i = 0,1,2$. Thus,
\begin{align*}
    \lim_{\theta\downarrow 0} \tfrac{1}{\theta}
    \left(
        \mathfrak{v}^*(t,q,\Delta;\theta\,c,\theta\,\gamma) - \hat{\mathfrak{v}}(t,q,\Delta;\theta\,c,\theta\,\gamma)
    \right)
    = 0\,,
\end{align*}
locally uniformly in $(t,q,\Delta)$.
		
To prove this, we study the $\theta$ dependence of the ODEs satisfied by $\mathfrak{v}_i$ and $\hat{\mathfrak{v}}_i$. The convergence results follow from continuity and differentiability with respect to a parameter of solutions of said ODEs (see for example \cite{chicone2006ordinary} Theorem 1.3).
		
		Inspection of \eqref{eqn:optimal_control}, \eqref{eqn:approx_optimal_control}, and \eqref{eqn:nu012-approx} shows that $\mathfrak{v}_1(t;\theta) = 2\,h_2(t;\theta)+b$ and $\hat{\mathfrak{v}}_1(t;\theta) = 2\,f_2(t) + b + 2\,\theta\,\gamma\Lambda_2(t)$. The functions $h_2$ and $f_2$ both satisfy ODEs of the form
\begin{align*}
    x'= F(x;\theta) \qquad\text{and}\qquad 	x(T) = -\alpha\,,
\end{align*}
where $F(x;\theta) = \tfrac{1}{2}\,\sigma^2\,\theta\,\gamma - \tfrac{1}{4\,k}\,(2\,x+b)^2$ and the ODE for $f_2$ corresponds to $\theta = 0$. When $\theta\downarrow 0$, $F(x;\theta)\rightarrow F(x;0)$ uniformly in $x$, therefore $h_2(t;\theta)\rightarrow f_2(t)$ uniformly in $t\in[0,T]$. This also implies $\mathfrak{v}_1(t;\theta)\rightarrow 2\,f_2(t)+b$ uniformly in $t\in[0,T]$. By L'Hopital's rule we have
\begin{align}
    \lim_{\theta\downarrow 0} \tfrac{1}{\theta}\left(\mathfrak{v}_1(t;\theta) - \hat{\mathfrak{v}}_1(t;\theta)\right) =
    \lim_{\theta\downarrow 0} \left(\partial_\theta \mathfrak{v}_1(t;\theta) - \partial_\theta \hat{\mathfrak{v}}_1(t;\theta)\right)
    =
    2 \lim_{\theta\downarrow 0} \left(\partial_\theta h_2(t;\theta) - \gamma\,\Lambda_2(t)\right)
    .
\end{align}
		
Next, from \eqref{eqn:h_2}, $h_2(t;\theta)$ has continuous mixed second order derivatives (wrt $t$ and $\theta$) for $\theta>0$. Thus we write
\begin{align*}
    \partial_t (\partial_\theta h_2) = \partial_\theta (\partial_t h_2) = \tfrac{1}{2}\sigma^2\gamma - \tfrac{1}{k}(2\,h_2+b)\,\partial_\theta h_2 \qquad\text{and}\qquad
    \partial_\theta h_2(T;\theta) = 0\,.
\end{align*}
		We also have
		\begin{align*}
			\partial_t \Lambda_2 = \tfrac{1}{2}\sigma^2 - \tfrac{1}{k}(2\,f_2+b)\,\Lambda_2 \qquad\text{and}\qquad
			\Lambda_2(T) = 0\,,
		\end{align*}
		and because $h_2\rightarrow f_2$ uniformly in $t$ as $\theta\downarrow 0$, we have $\partial_\theta h_2\rightarrow \gamma\Lambda_2$ uniformly in $t$. Thus,
\begin{align*}
    \lim_{\theta\downarrow 0}
    \tfrac{1}{\theta}
    \left(
    \mathfrak{v}_1(t;\theta) - \hat{\mathfrak{v}}_1(t;\theta)
    \right) &= 0\,,
\end{align*}
		uniformly in $t$. Inspection of \eqref{eqn:optimal_control}, \eqref{eqn:ODE_h_system}, and \eqref{eqn:pf_omega} shows that
$\mathfrak{v}_0$ and $\mathfrak{v}_2$ satisfy the ODEs
		\begin{align*}
			\partial_t \mathfrak{v}_0 &= -\mu - \tfrac{1}{2k}(2\, h_2 + b)\, \mathfrak{v}_0\,, & \mathfrak{v}_0(T) &= 0\,,\\
			\partial_t \mathfrak{v}_2 &= \theta\,\gamma\,\rho\,\sigma\,\eta - \tfrac{1}{2k}(2\, h_2 + b)\, \mathfrak{v}_2\,, & \mathfrak{v}_2(T) &= \theta\, c\,.
		\end{align*}
We wish to make the depence on $\theta$ explicit. To this end, inspection of \eqref{eqn:lambda_1}, \eqref{eqn:Lambda_1}, and \eqref{eqn:approx_optimal_control} shows that we may write
		\begin{align*}
			\hat{\mathfrak{v}}_0 + \hat{\mathfrak{v}}_2\,\Delta &= f_1 + \theta\,c\,\Delta + \theta\,c\,\lambda_1 + \theta\,\gamma\,\Lambda_1\\
			&= f_1 + \theta\,c\,\Delta + \theta\,c\,\tilde{\lambda}_1\,\Delta + \theta\,\gamma\,\overline{\Lambda}_1 + \theta\,\gamma\,\tilde{\Lambda}_1\Delta\,,
		\end{align*}
where the introduced functions satisfy the ODEs
		\begin{align*}
			\partial_t f_1 &= -\mu - \tfrac{1}{2k}(2\,f_2+b)\, f_1\,, &  f_1(T) &= 0\,,\\
			\partial_t \tilde{\lambda}_1 &= -\tfrac{1}{2k}(2\,f_2 + b)(1+\tilde{\lambda}_1)\,, & \tilde{\lambda}_1(T) &= 0\,,\\
			\partial_t \overline{\Lambda}_1 &= -\tfrac{1}{k}\Lambda_2\,f_1 - \tfrac{1}{2k}(2\,f_2 + b)\,\overline{\Lambda}_1\,, & \overline{\Lambda}_1(T) &= 0\,,\\
			\partial_t \tilde{\Lambda}_1 &= \rho\,\sigma\,\eta - \tfrac{1}{2k}(2\,f_2 + b)\,\tilde{\Lambda}_1\,, & \tilde{\Lambda}_1(T) &= 0\,.
		\end{align*}
		Thus, we have
		\begin{align*}
			\partial_t\hat{\mathfrak{v}}_0 &= -\mu - \frac{1}{2k}(2\,f_2+b)\, f_1 - \theta\,\gamma\left(\tfrac{1}{k}\Lambda_2\,f_1
+ \tfrac{1}{2k}(2\,f_2+b)
\overline{\Lambda}_1\right)\,, &
 \hat{\mathfrak{v}}_0(T) &= 0\,,
 \\
			\partial_t\hat{\mathfrak{v}}_2 &= \theta \left(\gamma\,\rho\,\sigma\,\eta - \tfrac{1}{2k}(2\,f_2+b)(c + c\,\tilde{\lambda}_1 + \gamma\,\tilde{\Lambda}_1)\right)\,, & \hat{\mathfrak{v}}_2(T) &= \theta\,c\,.\\
		\end{align*}

Analogous to how we prove $\hat{\mathfrak{v}}_1 = \mathfrak{v}_1 + o(\theta)$ above, we may prove the same for $\hat{\mathfrak{v}}_0$ and $\hat{\mathfrak{v}}_2$: First, repeat the arguments to show that the rhs of the associated ODEs converge to appropriate limits, hence $\lim_{\theta\downarrow 0}\mathfrak{v}_i(t;\theta)-\hat{\mathfrak{v}}_i(t;\theta)=0$, next repeat the arguments to show that $\lim_{\theta\downarrow 0}\partial_\theta\mathfrak{v}_i(t;\theta)-\partial_\theta\hat{\mathfrak{v}}_i(t;\theta)=0$. All limits can be taken uniformly in $t\in[0,T]$.
	
\underline{Part (ii)} (admissibility): In feedback form, the candidate trading strategy is
		\begin{align}
			\mathfrak{v}^*(t,q,\partial_U g(t,U)) &= \tfrac{1}{2\,k}\,\left(\mathfrak{v}_0(t;\theta) + \mathfrak{v}_1(t;\theta)\, q + \mathfrak{v}_2(t;\theta)\, \partial_U g(t,U)\right)\,.
		\end{align}
		This is of the same form as the feedback strategy in \eqref{eqn:pf_nu_hat} (the time dependent coefficients are different, but for fixed $\theta$ they are bounded). Thus, the argument for admissibility is the same.
		
\underline{Part (iii)} (optimality approximation): This part of the proof proceeds similarly to Theorem \ref{prop:approx_nu}. Given the candidate strategy $\nu'_t = \mathfrak{v}^*(t,Q_t^{\nu'},\partial_U g(t,U_t^{\nu'});\theta\, c,\theta\, \gamma)$, define the stochastic process $(G_t)_{t\in[0,T]}$ by
\[
    G_t = \hat{H}(t, X_t^{\nu'}, Q_t^{\nu'}, S_t^{\nu'}, U_t^{\nu'}; \theta c, \theta \gamma)\,,
    \quad \text{where}\quad
    \hat{H}(t,x,q,S,U;\theta\, c, \theta\, \gamma) = -e^{-\theta \,\gamma \left(x + q\,S + \hat{h}(t,q,U;\theta\, c, \theta\,\gamma)\right)}\,.
\]
and $\hat{h}$ is the approximation of $h_\psi$ in Theorem \ref{prop:asymptotic_approximation}. Apply Ito's Lemma to $G$ and write
\begin{equation}
\begin{split}
G_T - G_0
=& -\theta\,\gamma\int_0^T \hat{H} (t, X_t^{\nu'}, Q_t^{\nu'}, S_t^{\nu'}, U_t^{\nu'}; \theta\, c, \theta\, \gamma) 
\\
&\qquad\qquad
\times\biggl(\sum_{i=3}^5 \theta^i \, M_i(t,Q_t^{\nu'},U_t^{\nu'}) + V(t,Q_t^{\nu'},U_t^{\nu'};\theta) \biggr)\,dt
\\
&
\qquad -\theta\,\gamma \,\sigma\int_0^T \hat{H} (t, X_t^{\nu'}, Q_t^{\nu'}, S_t^{\nu'}, U_t^{\nu'}; \theta \,c, \theta \,\gamma)\,Q_t^{\nu'}\, dW_t
\\
& \qquad - \theta\,\gamma \,\eta\int_0^T \hat{H} (t, X_t^{\nu'}, Q_t^{\nu'}, S_t^{\nu'}, U_t^{\nu'}; \theta\, c, \theta\, \gamma)\,\partial_U\hat{h}(t,Q_t^{\nu'},U_t^{\nu'}; \theta \,c, \theta \,\gamma)\,dZ_t\,,
\end{split}
\end{equation}
where $M_{3,4,5}$ are given by \eqref{eqn:pf_ito}. The quantity $V$ is shown by explicit computation to be
\begin{align*}
	V(t,q,U;\theta) &= r(t,q,U;\theta)\biggl(\partial_q \hat{h}(t,q,U;\theta)  +\theta\, c\, \partial_U \hat{h}(t,q,U;\theta)\\
					&\hspace{35mm} + b\,q - 2\,k\, \hat{\mathfrak{v}}(t,q,U;\theta)\biggr) - k\, r^2(t,q,U;\theta)\,,
\end{align*}
where $r = \mathfrak{v}^*-\hat{\mathfrak{v}}$. {More details on the computation of $V$ are given in Appendix C.} By construction of $\hat{\mathfrak{v}}$ we have
\begin{align*}
	\partial_q \hat{h}  +\theta\, c\, \partial_U \hat{h}+ b\,q - 2\,k\, \hat{\mathfrak{v}} &= \theta^2\,\biggl( c\,\partial_U h_0 + c^2\,(\partial_q h_3 + \partial_U h_1) + c\,\gamma\,(\partial_q h_4 + \partial_Uh_2) + \gamma^2\,\partial_q h_5 \\
	&\hspace{20mm} + \theta\, c \,(c^2\,\partial_Uh_3 + c\,\gamma\,\partial_Uh_4 + \gamma^2\,\partial_U h_5) \biggr)\,.
\end{align*}
In particular, $V(t,q,U;\theta)$ is a polynomial with respect to $q$ of degree 3 with coefficients that are bounded functions of $t$ and $U$. Furthermore, due to arguments in the first part of this proof we have
\begin{align*}
	\lim_{\theta\downarrow 0} \frac{1}{\theta^2} V(t,q,U;\theta) &= 0\,,
\end{align*}
where the convergence is locally uniform with respect to $q$ and uniform with respect to $t$ and $U$.

All of the estimates from the proof of Theorem \ref{prop:approx_nu} hold identically (except for possibly different constants $C_1,\dots,C_9$). We write
		\begin{align*}
		 	\left|\sum_{i=3}^5 \theta^i \, M_i(t,Q_t^{\nu'},U_t^{\nu'}) + V(t,Q_t^{\nu'},U_t^{\nu'};\theta)\right|  &\leq \theta^3\, C_9 + V(\theta) \,,
		\end{align*}
with $C_9$ as in \eqref{eqn:C9} from Theorem \ref{prop:approx_nu} and where $V$ satisfies
\begin{align*}
	\lim_{\theta\downarrow 0} \frac{1}{\theta^2} V(\theta) &= 0\,,
\end{align*}
(recall that $Q_t^{\nu'}$ is bounded by a constant). We then have
		\begin{align*}
					\left|\mathbb{E}[G_T] - G_0\right| &= \left| \theta\,\mathbb{E}\biggl[\gamma\,\int_0^T \hat{H} (t, X_t^{\nu'}, Q_t^{\nu'}, S_t^{\nu'}, U_t^{\nu'}; \theta\, c, \theta\, \gamma)\right. \\
		&
		\qquad\qquad\qquad
		\left.\times \biggl(\sum_{i=3}^5 \theta^i \, M_i(t,Q_t^{\nu'},U_t^{\nu'}) + V(t,Q_t^{\nu'},U_t^{\nu'};\theta)\biggr)\,dt\biggr]\right|\nonumber\\
				& \leq \gamma\,\theta^3 \,\biggl(\theta \,C_9\,T\, \mathbb{E}[e^{\theta_0 \gamma C_6(1 + M_W + M_Z)}] + \frac{V(\theta)}{\theta^2}\,T \,\mathbb{E}[e^{\theta_0 \gamma C_6(1 + M_W + M_Z)}]\biggr)\,.
			\end{align*}
Therefore,
		\begin{align*}
			\lim_{\theta\downarrow 0}\tfrac{1}{\theta^3}
\left|H^{\nu'}(0,x,q,S,U) - \hat{H}(0,x,q,S,U)\right| &= 0\,,
		\end{align*}
		which, when combined with Theorem \ref{prop:asymptotic_approximation}, proves the required result.
	}
	\qed

\section*{Appendix C - $P_{3,4,5,6}$, $M_{3,4,5}$, and $V$}

{
	\subsection{Full Expressions of $P_{3,4,5,6}$}
	
	The following expressions give the functions $P_{3,4,5,6}(t,q,U)$, which appear in the proof of Theorem \ref{prop:asymptotic_approximation}. {These expressions are found by explicitly computing the supremum in \eqref{eqn:LHhatSup} and then grouping powers of $\theta$.}
	
	Recall that each $h_{0,1,2,3,4,5}$ is quadratic with respect to $q$. Then,  by inspection we see that $P_3$ and $P_4$ are third degree polynomials with respect to $q$, $P_5$ and $P_6$ are fourth degree polynomials with respect to $q$, and the coefficients of these polynomials are uniformly bounded functions of $t$ and $U$. In addition, the coefficients are continuously differentiable with respect to $U$ with bounded derivatives by Lemma \ref{lem:future_delta}.
\begin{subequations}\label{eqn:P3456}
\begin{align}
\begin{split}
		P_3 &= \frac{(\gamma\, \partial_q h_2 + c\,(\partial_q h_1 + \partial_U h_0))\, (c^2 \partial_q h_3 + c\,\gamma\, \partial_q h_4 + \gamma^2 \,\partial_q h_5)}{2\,k}\\
			&\hspace{10mm} - \frac{\gamma\,\eta^2((c\,\partial_U h_1 + \gamma\, \partial_U h_2)^2 + 2\,\partial_U h_0 (c^2\,\partial_U h_3 + c\,\gamma\, \partial_U h_4 + \gamma^2\, \partial_U h_5))}{2}\\
			& \hspace{20mm} + \frac{c\,(c^2\,\partial_U h_3 + c\,\gamma\, \partial_U h_4 + \gamma^2\, \partial_U h_5)\,(\partial_q h_0 + b\,q)}{2\,k}\\
			& \hspace{30mm} + \frac{c\,(\gamma\, \partial_q h_2 + c\,(\partial_q h_1 + \partial_U h_0))\,(c\,\partial_U h_1 + \gamma\, \partial_U h_2)}{2\,k}\\
			& \hspace{40mm} - \gamma\,\rho\,\sigma\,\eta\, q\, (c^2\,\partial_U h_3 + c\,\gamma \partial_U h_4 + \gamma^2\, \partial_U h_5)\,,
\end{split}
\\
\begin{split}
		P_4 &= \frac{(c\,(c\,\partial_U h_1 + \gamma\, \partial_U h_2) + c^2\,\partial_q h_3 + c\,\gamma\,\partial_q h_4 + \gamma^2\,\partial_q h_5)^2}{4\,k}\\
			& \hspace{10mm} + \frac{c\,(c^2\,\partial_U h_3 + c\,\gamma\, \partial_U h_4 + \gamma^2\,\partial_U h_5)(c\,\partial_U h_0 + c\,\partial_q h_1 + \gamma\, \partial_q h_2)}{2\,k}\\
			& \hspace{20mm} - \gamma\, \eta^2\, (c\,\partial_U h_1 + \gamma\, \partial_U h_2)(c^2\,\partial_U h_3 + c\,\gamma\, \partial_U h_4 + \gamma^2\,\partial_U h_5),
\end{split}
\\
\begin{split}
		P_5 &= \frac{c^2\,(c^2\,\partial_U h_3 + c\,\gamma\, \partial_U h_4 + \gamma^2\,\partial_U h_5)(c\,\partial_Uh_1 + \gamma \,\partial_U h_2)}{2\,k}\\
			& \hspace{10mm} + \frac{c\,(c^2\,\partial_U h_3 + c\gamma \partial_U h_4 + \gamma^2\partial_U h_5)(c^2\,\partial_q h_3 + c\,\gamma\,\partial_q h_4 + \gamma^2\,\partial_q h_5)}{2\,k}\\
			& \hspace{20mm} - \frac{\gamma\,\eta^2\,(c^2\,\partial_U h_3 + c\,\gamma\, \partial_U h_4 + \gamma^2\,\partial_U h_5)^2}{2},
\end{split}
\\
P_6 &= \frac{c^2\,(c^2\,\partial_U h_3 + c\,\gamma\, \partial_U h_4 + \gamma^2\,\partial_U h_5)^2}{4\,k}\,.
	\end{align}%
\end{subequations}%

\subsection{Full Expressions of $M_{3,4,5}$}
	
	The following expressions give the functions $M_{3,4,5}(t,q,U)$, which appear in the proofs of Theorem \ref{prop:approx_nu} and Proposition \ref{prop:closed_form}. {These expressions are found by substituting the feedback control $\hat{\nu}$ from \eqref{eqn:approx_optimal_control} into $(\partial_t + \mathcal{L}^{\hat{\nu}}) \hat{H} (t, x, q, S, U; \theta\, c, \theta\, \gamma)$ and grouping powers of $\theta$.}
	
	Recall that each $h_{0,1,2,3,4,5}$ is quadratic with respect to $q$. Then,  we see by inspection that $M_3$ and $M_4$ are third degree polynomials with respect to $q$, $M_5$ is a fourth degree polynomial with respect to $q$, and the coefficients of these polynomials are uniformly bounded functions of $t$ and $U$. In addition, the coefficients are continuously differentiable with respect to $U$ with bounded derivatives by Lemma \ref{lem:future_delta}.
\begin{subequations}\label{eqn:M345}
	\begin{align}
	\begin{split}
		M_3 &= \frac{(\gamma\, \partial_q h_2 + c\,(\partial_q h_1 + \partial_U h_0)) (c^2\, \partial_q h_3 + c\,\gamma\, \partial_q h_4 + \gamma^2\, \partial_q h_5)}{2\,k}\\
		&\hspace{10mm} - \frac{\gamma\,\eta^2((c\,\partial_U h_1 + \gamma \,\partial_U h_2)^2 + 2\,\partial_U h_0 (c^2\,\partial_U h_3 + c\,\gamma\, \partial_U h_4 + \gamma^2 \,\partial_U h_5))}{2}\\
		& \hspace{20mm} + \frac{c\,(c^2\,\partial_U h_3 + c\,\gamma\, \partial_U h_4 + \gamma^2 \,\partial_U h_5)(\partial_q h_0 + b\,q)}{2k}\\
		& \hspace{30mm} + \frac{c\,(\gamma\, \partial_q h_2 + c\,(\partial_q h_1 + \partial_U h_0))(c\,\partial_U h_1 + \gamma\, \partial_U h_2)}{2\,k}\\
		& \hspace{40mm} - \gamma\,\rho\,\sigma\,\eta\, q\, (c^2\,\partial_U h_3 + c\,\gamma\, \partial_U h_4 + \gamma^2v \partial_U h_5)\,,
	\end{split}
	\\
	\begin{split}
		M_4 &= \biggl(\frac{c\,(\gamma\, \partial_q h_2 + c\,(\partial_U h_0 + \partial_q h_1)) - 2\,k\,\gamma\,\eta^2\,(c\,\partial_U h_1 + \gamma\, \partial_U h_2)}{2\,k}\biggr)\biggl(c^2\,\partial_U h_3 + c\,\gamma\,\partial_U h_4 + \gamma^2\,\partial_U h_5\biggr)\,,
	\end{split}
	\\
	\begin{split}
		M_5 &= \frac{-\gamma\,\eta^2(c^2\,\partial_U h_3 + c\,\gamma\,\partial_U h_4 + \gamma^2\,\partial_U h_5)^2}{2}\,.
	\end{split}
	\end{align}
\end{subequations}
}

{
\subsection{Computation of $V$}

	Here we show in more detail the steps required to compute $V$, which appears in the proof of Proposition \ref{prop:closed_form}. We begin with 
	\begin{align}
		\begin{split}
			(\partial_t + \mathcal{L}^\nu) \hat{H} (t, x, q, S, U; \theta\, c, \theta\, \gamma) &= -\theta\gamma\hat{H}\biggl(\partial_t \hat{h} + \mu \,q - \tfrac{1}{2}\,\theta\gamma\,\sigma^2\,q^2 + (\beta- \theta\gamma\,\rho\,\sigma\,\eta \,q)\,\partial_U\hat{h}\\
			&\qquad + \tfrac{1}{2}\,\eta^2\,\partial_{UU}\hat{h} - \tfrac{1}{2}\,\theta\gamma\,\eta^2\,(\partial_U\hat{h})^2 + \nu\partial_q\hat{h}+ \theta c\,\nu\,\partial_U\hat{h} + b\,q\,\nu - k\,\nu^2\biggr)\,,
		\end{split}\label{eqn:V}
	\end{align}
	and recall that in feedback form the control $\nu'$ is given by $\mathfrak{v}^*(t,q,\partial_Ug(t,U);\theta\, c,\theta\,\gamma)$. We write this feedback control as
	\begin{align}
		\mathfrak{v}^*(t,q,\partial_Ug(t,U);\theta\, c,\theta\,\gamma) &= \hat{\mathfrak{v}}(t,q,\partial_Ug(t,U);\theta\, c,\theta\,\gamma) + r(t,q,U;\theta)\,,\label{eqn:vstar}
	\end{align}
	where
	\begin{align*}
		r(t,q,U;\theta) &= \mathfrak{v}^*(t,q,\partial_Ug(t,U);\theta\, c,\theta\,\gamma) - \hat{\mathfrak{v}}(t,q,\partial_Ug(t,U);\theta\, c,\theta\,\gamma)\,.
	\end{align*}
	We now substitute \eqref{eqn:vstar} in \eqref{eqn:V} then expand and group terms which contain $r(t,q,U;\theta)$ separate from those which do not. The resulting expression is
	\begin{align*}
		& (\partial_t + \mathcal{L}^{\nu'}) \hat{H} (t, x, q, S, U; \theta\, c, \theta\, \gamma)\\
		& \hspace{10mm}	= -\theta\gamma\hat{H}\biggl(\partial_t \hat{h} + \mu \,q - \tfrac{1}{2}\,\theta\gamma\,\sigma^2\,q^2 + (\beta- \theta\gamma\,\rho\,\sigma\,\eta \,q)\,\partial_U\hat{h} + \tfrac{1}{2}\,\eta^2\,\partial_{UU}\hat{h}\\
		& \hspace{40mm}	- \tfrac{1}{2}\,\theta\gamma\,\eta^2\,(\partial_U\hat{h})^2 + \mathfrak{v}^*\partial_q\hat{h}+ \theta c\,\mathfrak{v}^*\,\partial_U\hat{h} + b\,q\,\mathfrak{v}^* - k\,(\mathfrak{v}^*)^2\biggr)\\
		& \hspace{10mm}	= -\theta\gamma\hat{H}\biggl(\partial_t \hat{h} + \mu \,q - \tfrac{1}{2}\,\theta\gamma\,\sigma^2\,q^2 + (\beta- \theta\gamma\,\rho\,\sigma\,\eta \,q)\,\partial_U\hat{h} + \tfrac{1}{2}\,\eta^2\,\partial_{UU}\hat{h}\\
		& \hspace{25mm}	- \tfrac{1}{2}\,\theta\gamma\,\eta^2\,(\partial_U\hat{h})^2 + \hat{\mathfrak{v}}\partial_q\hat{h}+ \theta c\,\hat{\mathfrak{v}}\,\partial_U\hat{h} + b\,q\,\hat{\mathfrak{v}} - k\,\hat{\mathfrak{v}}^2\\
		& \hspace{40mm}	+ r\,(\partial_q\,\hat{h} + \theta\,c\,\partial_U\hat{h} + b\,q - 2\,k\,\hat{\mathfrak{v}}) - k\,r^2\biggr)\\
		& \hspace{10mm}	= (\partial_t + \mathcal{L}^{\hat{\nu}}) \hat{H} (t, x, q, S, U; \theta\, c, \theta\, \gamma) - \theta\,\gamma\,\hat{H}\biggl(r\,(\partial_q\,\hat{h} + \theta\,c\,\partial_U\hat{h} + b\,q - 2\,k\,\hat{\mathfrak{v}}) - k\,r^2\biggr)\\
		& \hspace{10mm}	= -\theta\,\gamma\,\hat{H}\biggl(\sum_{i=3}^5 \theta^i \, M_i(t,q,U) + r\,(\partial_q\,\hat{h} + \theta\,c\,\partial_U\hat{h} + b\,q - 2\,k\,\hat{\mathfrak{v}}) - k\,r^2\biggr)\,.
	\end{align*}
	The summation in the last line comes from the definitions of the $M_i$'s in the proof of Theorem \ref{prop:approx_nu}, also outlined earlier in this appendix. The remaining terms in large parentheses are denoted by $V(t,q,U;\theta)$.
}

\bibliographystyle{chicago}
\bibliography{References}

\end{document}